%% file: arxiv_v1.tex
\documentclass[mnsc,nonblindrev]{workingpaper} 
\OneAndAHalfSpacedXI 

\input{preamble}

\begin{document}
\RUNTITLE{Deployment of AI-Assisted Interventions}

\TITLE{
Deployment of AI-Assisted Interventions: Capacity Constraints and Noisy Compliance} 
\ARTICLEAUTHORS{%
\AUTHOR{Carri W. Chan}
\AFF{Decision, Risk, and Operations, Columbia Business School, \EMAIL{ cwchan@columbia.edu}} %
\AUTHOR{Yi Han}\AFF{Department of Statistics,
Columbia University, \EMAIL{yh3660@columbia.edu}}
\AUTHOR{Hannah Li}\AFF{Decision, Risk, and Operations, Columbia Business School, \EMAIL{hannah.li@columbia.edu}}
\AUTHOR{Benjamin L. Ranard}\AFF{Division of Pulmonary, Allergy, and Critical Care Medicine, Department of Medicine, Vagelos College of Physicians and Surgeons, Columbia University, \EMAIL{blr2152@cumc.columbia.edu}}
} 
\RUNAUTHOR{Chan, Han, Li, Ranard}

\ABSTRACT{\input{section/abstract}}

\maketitle

\input{section/intro}
\input{section/related_work}

\input{section/model.tex}

\input{section/thresholding}

\input{section/algo_selection}
\input{section/case_study}
\input{section/discussion}

\bibliographystyle{unsrtnat}
\bibliography{bib}
\newpage

\begin{APPENDIX}{}
\input{section/appendix_theory}

\input{section/appendix_case_study}
\end{APPENDIX}

\end{document}

%% file: preamble.tex
\TheoremsNumberedThrough

\usepackage{soul}
\usepackage[margin=1in]{geometry}
\usepackage{amsmath, amssymb}
\usepackage{subcaption}
\usepackage{bm}
\usepackage{graphicx}
\usepackage{booktabs}
\usepackage{natbib}
\usepackage{hyperref}
\usepackage{cleveref}
\usepackage{enumerate}
\crefname{assumption}{assumption}{assumptions}
\Crefname{assumption}{Assumption}{Assumptions}
\hypersetup{colorlinks=true,linkcolor=blue,citecolor=blue,urlcolor=blue}

\usepackage{xcolor}
\usepackage[dvipsnames]{xcolor}

\newcommand{\Comments}{1} 
\newcommand{\mynote}[2]{%
  \ifnum\Comments=1%
    \leavevmode%
    \begingroup%
    \color{#1}%
    #2%
    \endgroup%
  \fi%
}

\newcommand{\obj}{W}
\newcommand{\objfluid}{\tilde{W}}
\newcommand{\nservedfluid}{\tilde{N}} 
\newcommand{\operationalmetricfull}{Operational AUC}
\newcommand{\operationalmetric}{OpAUC}

%% file: section/abstract.tex
AI tools increasingly guide targeted interventions in healthcare, education, and recruiting. Algorithms score individuals, trigger outreach to those above a threshold (e.g., high-risk or high-value), and encourage them to request service; then providers deliver service to those who request. Standard practice sets the threshold and selects the algorithm to maximize predictive accuracy, assuming that better predictions yield better outcomes. We show that this approach is suboptimal when limited service capacity and probabilistic behavioral responses influence who receives service. In such settings, the optimal score threshold must balance two effects: ensuring all capacity is filled (\textit{utilization}) and ensuring high-value individuals are served despite competition between requests (\textit{cannibalization}). We characterize the optimal threshold and  prove that policies based solely on predictive accuracy are generally suboptimal. Further, because optimal thresholds vary with service capacity, algorithm selection metrics like AUC, which weight all thresholds equally, are misaligned with operational performance. We introduce a new metric--Operational AUC (OpAUC)--and show it leads to optimal algorithm selection. Finally, we conduct a case study on sepsis early warning data and illustrate the magnitude of improvement that can be achieved from improved threshold and algorithm selection.

%% file: section/intro.tex
\section{Introduction}\label{sec:intro}
Service delivery systems are increasingly integrating AI tools
in order to target the delivery of scarce resources and interventions
\citep{kleinberg2018human, chung2024improving,raji2025evaluatingpredictionbasedinterventionshuman, boutilier2025operationaldosageimplicationscapacity,dai2025artificial, parsa2026designing, liu2026bridgingpredictioninterventionproblems}. 
For example, in healthcare, hospitals use risk models that monitor patient features in order to identify the most \textit{at-risk} individuals. Then, the hospital either nudges the flagged individuals  to seek follow-up screening \citep{mann_colorectal_flagging,mann2024negative,park2024machine,dasgupta2025beyond} or nudges their providers to respond quickly with urgent care \citep{escobar2020automated,adams2022prospective,henry2015targeted}. Similar models are used in education to target advisor outreach to the most at-risk students \citep{holstein2021equity,schechtman2025discretion} and in labor markets to target recruiter outreach at the most promising candidates \citep{aka2025better,chakraborty2025can}.  
Importantly, in such service settings, the  system can only nudge identified customers to a service rather than forcing it. Compliance is inherently noisy: patients may ignore reminders, advisors may skip follow-ups, and candidates may decline outreach. In each of these settings, the algorithm makes a prediction, humans are alerted and encouraged to act on the prediction, and capacity constraints dictate who eventually gets served.

From a system-design perspective, the objective of AI-assisted intervention should be to maximize system efficacy, defined as the number of true positives served (for binary outcomes) or the sum of individual-level values across those served (for continuous outcomes).
In practice, however, existing approaches have focused on identifying the individuals who would benefit the most from the service through predictive modeling, assuming that better predictions correspond to better system performance. 
As such, the algorithm and flagging threshold (i.e., the quantile cutoff for intervention) are typically chosen to maximize predictive accuracy \citep{lakkaraju2015machine, han2017progressive, rajkomar2018scalable,tomavsev2019clinically,dossantos2025accuracy}. 
In this work, we show that such prediction-based methods are \textit{suboptimal} when the system has capacity-constrained resources, noisy compliance to algorithm suggestions, or both.

This suboptimality arises because noisy compliance and capacity constraints create a funnel between algorithmic flagging and actual service delivery (\Cref{fig:funnel}). 
Consider the setting where algorithmic alerts nudge individuals to request service\footnote{A similar dynamic arises when algorithmic alerts nudge \textit{providers} to provide service to flagged individuals. The provider can have probabilistic outreach to both flagged and unflagged individuals and can only serve a limited capacity.}.  Both unflagged and flagged individuals can submit requests--flagged individuals request with a higher probability due to the nudge---thus introducing competition for limited capacity. This funnel gives rise to two competing forces that any deployment must account for. \textit{Underutilization} occurs when capacity is abundant relative to demand: if too few individuals are flagged, not enough requests arrive and capacity sits idle. \textit{Cannibalization} occurs when capacity is scarce relative to demand: if too many individuals are flagged, requests exceed capacity, and under random allocation, lower-value requests crowd out higher-value flagged ones. The optimal threshold must balance both forces and prediction-based methods fail to do so.

\begin{figure}
    \centering
    \includegraphics[width=\linewidth]{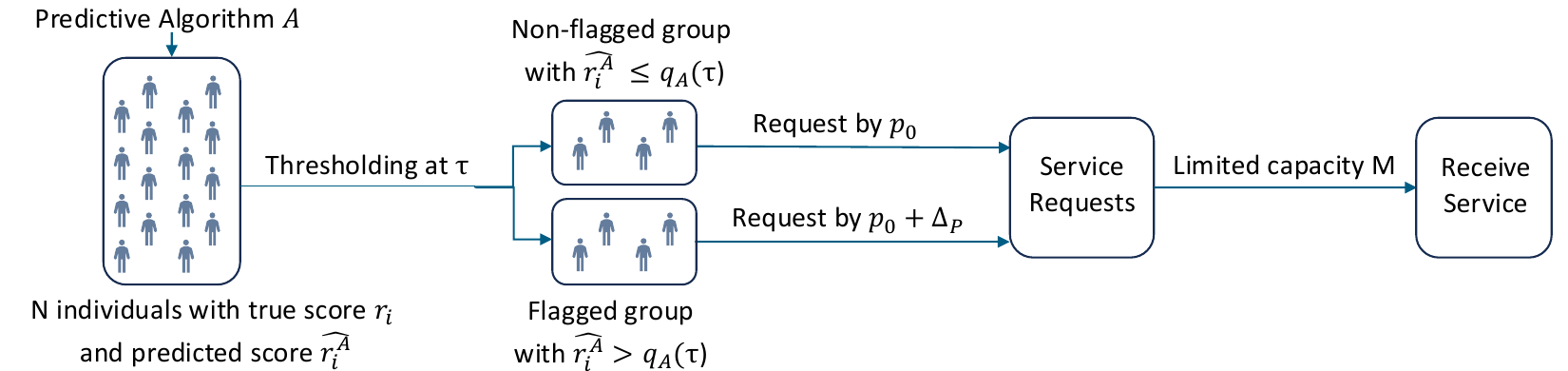}
    \caption{AI-assisted intervention with noisy compliance and capacity constraints. }
    \label{fig:funnel}
\end{figure}

To build intuition for the two competing forces, consider the following example.
Suppose 100 individuals each request service with baseline probability $0.1$, and flagging raises an individual's request probability to $0.6$. Assume true scores are uniformly distributed on $[0,1]$ and the algorithm perfectly ranks individuals. A prediction-based threshold flags the top 40\%, generating 30 expected requests.
If the system can serve 40 requests, 10 slots go unfilled. The threshold is too conservative and capacity is wasted due to \emph{underutilization}. The optimal policy flags the top 60\%, generating exactly 40 requests to fill capacity.
Now suppose the system can serve only 10 requests. Flagging the top 40\% generates 30 requests, exceeding capacity. Under random allocation, low-value requests can crowd out high-value requests, and the system suffers from \emph{cannibalization}. The optimal policy flags fewer individuals (top 30\%), reducing total requests while concentrating capacity on higher-value flagged individuals.

We study the optimal design of AI-assisted interventions in capacity-constrained service systems, where algorithms serve as nudges that encourage individuals to request or provide service rather than directly assigning it. We focus on two
central questions: First, who should be targeted for outreach, i.e., how do we set the classification thresholds? Second, when multiple algorithms are available, which one maximizes system efficacy under optimal deployment?

We develop a model (\S\ref{sec:model}) that captures algorithmically-guided nudges, probabilistic behavior of requesting (or providing) service both with and without a nudge, and limited service capacity. These components are common across the healthcare, education, and labor market interventions referenced above. 
Unless otherwise specified, we assume random allocation when requests exceed capacity. This assumption is common in many contexts. For example, in many systems where individuals must request service, slots are allocated through online portals or on a first-come, first-served basis where score-based prioritization is infeasible. In other systems, e.g. healthcare, education, and/or other public service systems, ethical norms require that all requests should be treated equally. We relax this assumption in \Cref{sec:robustness_prioritization} and show that our characterization remains effective when providers partially prioritize high-scoring requesters.

The main contributions are outlined below.

\textbf{Characterization of the optimal threshold (\S\ref{sec:thresholding}).}
For a fixed prediction model, we characterize in closed form the optimal flagging threshold that accounts for both (1) the \textit{underutilization effect}, which captures the need to fully utilize capacity by expanding outreach when capacity is abundant and (2) the \emph{cannibalization effect} from the lower-value requests when capacity is not sufficient. 
Despite the complexity of balancing these two effects, we show that the optimal threshold admits a strikingly simple structure characterized by the minimum of two interpretable thresholds, one that accounts for cannibalization and one that accounts for underutilization. Which threshold binds depends on the system capacity and the noisiness in compliance.

\textbf{Suboptimality of natural thresholding policies (\S\ref{sec:thresholding}).}
We analyze two natural thresholding policies used in practice: prediction-based thresholds, which are selected based on predictive performance metrics such as sensitivity and specificity, and capacity-matching thresholds, which flag just enough individuals to match expected requests to available capacity. We show that both are suboptimal except in specific regimes where they happen to coincide with the optimal threshold (Theorem \ref{thm:prediction_based_threshold_suboptimal} and \ref{thm:capacity_threshold_suboptimal}). Prediction-based thresholds ignore operational constraints entirely, while capacity-matching thresholds ignore cannibalization.

\textbf{Efficacy-optimal algorithm selection (\S\ref{sec:algorithm_selection}).} 
We show that standard evaluation metrics are not aligned with operational performance.
AUC weights the predictive accuracy of an algorithm equally over all potential thresholds, while we show in \S \ref{sec:thresholding} that optimal deployment uses only a subset of the thresholds. As a result, when selecting between two algorithms,  selection based on AUC can choose the wrong algorithm if the ROC curves cross. 
We propose the \operationalmetricfull{} (\operationalmetric{}),
which measures how well the algorithm performs specifically over the thresholds it will be deployed at, instead of all possible thresholds. We prove that \operationalmetric{} correctly identifies the efficacy-optimal algorithm. 

\textbf{Empirical simulations (\S\ref{sec:case_study}).}
We conduct a simulation case study based on data from a sepsis early warning system across 8 adult acute care hospitals from the New York Presbyterian system. 
In the simulation study, our optimal threshold can increase the efficacy of the system up to 40\% over prediction-based thresholds. We also show that despite its lower overall AUC, a model with AUC of 0.806 can outperform a model with AUC of 0.826 by up to 22.8\% in system efficacy when both are deployed at their optimal thresholds. 

\textbf{Guidance for practical implementation.} 
Our results have practical benefits: practitioners can use existing risk models and achieve substantial improvements simply by adjusting thresholds based on capacity and behavioral parameters, without needing to retrain models.
Similarly, when choosing among candidate algorithms, evaluating with \operationalmetric{} rather than AUC requires only additional knowledge of the operating capacity distribution and behavioral parameters — quantities typically estimable from operational data. These make our approach immediately applicable to deployed systems.

\smallskip

Taken together, our framework shifts the focus of AI-assisted interventions from algorithm development to algorithm implementation. While machine learning advances have led to more accurate predictive models, our results show that the operational implementation can significantly facilitate, or hinder, the success of these systems. 

%% file: section/related_work.tex
\section{Related Work}
Our work relates to three streams of literature. 
First, we contribute to the optimal targeting literature, which studies how to select individuals for interventions under various constraints. Second, our work is motivated by broad application of algorithm-assisted interventions. Third, our results on algorithm selection connect to the decision-focused learning literature, which aims to align predictive models with downstream decision-making objectives.

\paragraph{Optimal targeting}
Our work is related to the literature on finding optimal targeting rules under constraints \citep{bhattacharya2012inferring, kitagawa2018should,sun2021treatment, xu2022estimating,imai2023experimental, killian2023robust,gazi2025uncertainty,trella2025deployed, sverdrup2025qini,  baek2025policy, keyvanshokooh2025learning, dasgupta2025learning, dasgupta2025beyond,wilder2025learning, sun2026empirical, zhang2026replicable}. 
This literature typically studies settings where each individual incurs a heterogeneous treatment cost, and the goal is to maximize welfare subject to a budget constraint. 
A general result from this literature is that the optimal policy is a thresholding rule: treat an individual if and only if their predicted treatment effect exceeds a certain threshold that depends on the budget constraint.
Our result is similar in spirit in that the optimal policy is a thresholding rule. However, our setting differs in three ways. First, most existing work assumes \emph{deterministic assignment} while we study \emph{probabilistic nudging}, where compliance with the intervention is uncertain. 
Second, in the standard setting, treatment directly incurs a cost, whereas in our setting, nudging incurs negligible cost, but downstream service capacity is limited. The constraint shifts from total treatment cost to available service slots. Third, prior work typically considers a single source of demand where treatment request arises only from policy assignment. In our setting, service demand arises from two sources competing for shared capacity.

The most closely related paper is \citet{hu2025optimal}, which also studies how to set thresholds for algorithm-guided interventions under capacity constraints. 
The two papers address complementary aspects of the problem. \citet{hu2025optimal} focus on the dynamics: how should thresholds adapt over time as capacity fluctuates in a queueing system? They characterize the optimal dynamic policy and propose a procedure to learn it for a given predictive model. We focus on two questions that their framework does not address. First, what is the optimal threshold at any given capacity level? Our two-point characterization reveals that the optimal threshold is governed by two distinct operational forces, underutilization and cannibalization, and identifies precisely when each force binds. Second, the \emph{algorithm selection} question: given multiple models, which one should be deployed? \citet{hu2025optimal} take the predictive model as given, whereas we show that standard selection criteria (e.g., AUC) can be misaligned with system efficacy and propose \operationalmetric{} as an alternative.

\paragraph{Algorithm-assisted interventions}

The development in AI and machine learning has enabled the use of predictive algorithms to guide interventions at scale with the goal of improving outcomes. 
We highlight several examples that motivate our modeling assumptions and operational focus. In healthcare, predictive algorithms guide cancer screening outreach \citep{mann_colorectal_flagging,park2024machine}, sepsis intervention \citep{sendak2020real, boussina2024impact}, clinical deterioration response \citep{escobar2020automated,bertsimas2025early} and inpatient discharge planning \citep{na2025patient}. Beyond healthcare, similar systems support college advising \citep{schechtman2025discretion, perdomo2025difficult} and child welfare screening \citep{chouldechova2018}. 
In particular, these settings commonly feature (i) nudges that generate probabilistic behavioral responses, (ii) limited capacity shared between independent and targeted requests. In cancer screening \citep{mann_colorectal_flagging,park2024machine} and college advising \citep{schechtman2025discretion}, service providers have limited appointment slots to serve patients and students, respectively. 
In hospital deterioration monitoring and discharge planning \citep{sendak2020real, bertsimas2025early, na2025patient}, inpatient beds and rapid response teams are scarce, and alerts nudge clinicians to escalate care or initiate discharge for flagged patients. 
Moreover, in cancer screening and college advising, appointments are allocated randomly among all requesters based on first-come-first-served or appointments booked through online portals, which aligns with our random allocation assumption when oversubscribed. See \citet{liu2026bridgingpredictioninterventionproblems} for more citations in this area.

\paragraph{Decision-focused learning}\label{sec:decision_focused_learning} 
Our algorithm selection results connect to the decision-focused learning literature \citep{bertsimas2018optimization,bertsimas2019dynamic, lopez2019cost,  ho2019data, bennett2020efficient,elmachtoub2022smart, mandi2022decision, mandi2024decision}, which aims to align predictive models with downstream decision objectives. \citet{elmachtoub2022smart} propose training predictors with decision-aware loss functions to improve downstream optimization efficacy, showing that models optimized for predictive accuracy may underperform in decision-making tasks. We share the high-level insight that predictive accuracy and decision quality can diverge, but we tackle a fundamentally different problem. Decision-focused learning intervenes at the \emph{training} stage by jointly learning prediction and optimization, which requires retraining models from scratch and access to the underlying training data. In contrast, we take prediction models as given and intervene at the \emph{deployment} stage: practitioners can improve system efficacy simply by changing how those models are thresholded. Our operational metric (\operationalmetric{}) provides a selection criterion for capacity-constrained settings.

%% file: section/model.tex
\section{Model}
\label{sec:model}
In this section, we formalize our model to analyze the efficacy of capacity constraints on AI-assisted interventions.  We introduce three key elements of the model: 1) individual request behavior with and without nudging, 2) capacity-constrained service and service allocation after requests are made, 3) and the algorithm-assisted outreach mechanism. We then define the metric of interest, the total system efficacy of all served requests, and introduce naive (but commonly used) thresholding policies. Finally, we present a fluid model to approximate the objective function in large systems.

\subsection{Individual Behavior with and without Nudging}
\label{sec:individual_behavior}
We focus on the setting where interventions serve as \emph{nudges} that encourage individuals to request or provide service, rather than directly assigning service. Nudging introduces probabilistic behavioral responses, which align with observed uncertainty in the number of requests in many real-world applications such as non-compliance with diagnostic screening in healthcare \citep{park2024machine}, student engagement in college advising \citep{schechtman2025discretion}, and candidate response rates in recruitment \citep{aka2025better}. Let $s_i \in \{0,1\}$ denote the nudging indicator for user $i$, where $s_i = 1$ if user $i$ is nudged and $s_i = 0$ otherwise.
Let $p_0 \geq 0$ be the baseline probability for independent requests and $\Delta_P \geq 0$ be the lift in request probability due to nudging, where $p_0 + \Delta_P \le 1$. Let $d_i \in \{0,1\}$ denote the request indicator for user $i$, where $d_i=1$ if user $i$ requests service. The request probability depends on whether the user is nudged:
\begin{align*}
  \text{Without nudging: } & \mathbb{P}\{d_i=1 \mid s_i = 0\} = p_0, \\
  \text{With nudging: } & \mathbb{P}\{d_i=1 \mid s_i= 1\} = p_0 + \Delta_P,
\end{align*} 
Combining the two groups, the request indicator for user $i$ is a Bernoulli random variable with:
\begin{equation}\label{eq:request_probability}
d_i \sim \mathrm{Bernoulli}(p_0 + \Delta_P\, s_i).
\end{equation}

\subsubsection{Examples}
Nudging operates through two mechanisms depending on who receives the algorithmic signal.

\paragraph{Nudging requesters.}
In cancer screening systems \citep{mann_colorectal_flagging,park2024machine}, patients identified as high-risk are flagged for outreach encouraging them to schedule preventive screenings. Here, $d_i$ represents the patient's decision to request screening, $p_0$ reflects the likelihood a patient follows standard screening guidelines, $\Delta_P$ captures the increase in screening uptake due to outreach. 

\paragraph{Nudging providers.}
In early warning systems for clinical deterioration \citep{escobar2020automated, bertsimas2025earlywarningindexpatient}, algorithmic alerts flag at-risk individuals, prompting providers to initiate outreach. Here, $d_i$ is the provider's decision to deliver the intervention to individual $i$: $p_0$ is the baseline rate of intervention under standard processes, and $\Delta_P$ is the increase due to alerts.


\subsection{Capacity Constrained Service}
Consider $N$ users, each making requests independently according to \Cref{eq:request_probability}. The total number of requests is $\sum_{i=1}^N d_i$. In practice, service systems typically operate under limited capacity. Let $M$ denote the maximum number of requests that can be served. The number of served requests is 
\begin{equation*} 
\min\left\{\sum_{i=1}^N d_i , M\right\}.
\end{equation*}
When the total number of requests exceed capacity ($\sum_{i=1}^N d_i > M$), not all requesters can be served. 

Unless otherwise specified, we assume service slots are allocated uniformly at random among requesters. As discussed in \Cref{sec:intro}, this reflects settings where the service provider either does not have access to predicted scores, is constrained from using them by fairness or policy considerations, or relies on first-come-first-served allocation through online portals. We later relax this assumption in \Cref{sec:robustness_prioritization}, where we study a mixture mechanism that allows the provider to partially or fully prioritize high-scoring requesters, and show that our main characterization remains effective across a wide range of allocation policies.

\subsection{Algorithm-Assisted Outreach}\label{sec:algorithm_assisted_outreach} 
Each individual $i$ has a latent true score $r_i$, which represents the efficacy from serving individual $i$. If the outcome of interest is continuous (e.g., health improvement, academic performance, job performance), the true score $r_i$ represents the efficacy from serving user $i$. 
If the outcome of interest is binary (e.g., disease occurrence, student dropout, candidate acceptance), the true score $r_i$ represents the probability of the positive outcome. 

We assume the true scores $\{r_i\}_{i=1}^N$ are i.i.d. draws from a continuous distribution with mean $\mathbb{E}[r]$.
An algorithm $A$ produces a predicted score $\hat r_i^A$ for each user. We make the following assumptions on the predictive quality of algorithms.

\begin{assumption}[Unbiasedness] \label{assump:unbiasedness}
For any algorithm $A$, $\mathbb{E}[\hat r^A] = \mathbb{E}[r]$. 
\end{assumption}

\begin{assumption}[Monotone Calibration]\label{assump:monotone_calibration}
For any algorithm $A$, higher predicted scores correspond to higher expected true scores. Formally, for any $q_1 > q_2$,
\(
\mathbb{E}\big[r \mid \hat r^A = q_1\big] > \mathbb{E}\big[r \mid \hat r^A = q_2\big].
\)
\end{assumption}
Given these assumptions, a natural outreach policy flags individuals whose predicted scores exceed a threshold. We now specify the nudging indicator $s_i$ introduced in \Cref{sec:individual_behavior}. 
The system designer chooses a quantile threshold $\tau \in [0,1]$ to flag the top $(1-\tau)$ fraction of individuals for outreach, i.e., those whose predicted scores are at or above the $\tau$-quantile. Formally, given algorithm $A$ and threshold $\tau$, individual $i$ is flagged if
\[
 s_i^A(\tau)=\mathbf 1\{\hat r_i^A \ge \hat q_A(\tau)\},
\]
where $\hat q_A(\tau)$ is the empirical $\tau$-quantile of the predicted scores $\{\hat r_i^A\}_{i=1}^N$.

In addition, we make the following assumption for tractability.
\begin{assumption}[Existence and Regularity of Joint Density]\label{assump:joint_density}
For any algorithm $A$, the pair $(r, \hat r^A)$ admits a joint density 
$f_{r,\hat r^A}(r,s)$ and the first moment is finite:
\(
\int |r|\, f_{r,\hat r^A}(r,s)\,dr < \infty,
\, \forall s \in [0,1].
\)
\end{assumption}

\begin{assumption}[Non-degenerated Predicted-Score Distribution]\label{assump:marginal_positive}
The marginal density 
\(
f_{\hat r^A}(s) = \int f_{r,\hat r^A}(r,s)\,dr
\)
is continuous and strictly positive on its support, which is an interval $(a,b)$.
\end{assumption}

\begin{assumption}[Smoothness of Conditional Expectations]\label{assump:conditional_smoothness}
The map 
\(
s \mapsto f_{r,\hat r^A}(r,s)
\)
is continuous and piecewise continuously differentiable in $s$, with an integrable dominating function.  
Consequently, the conditional expectations 
\(
\mathbb{E}[\,r \mid \hat r^A = s\,]
\, \text{and} \,
\mathbb{E}[\,r \mid \hat r^A > s\,]
\)
are continuous in $s$, and are continuously differentiable on the support of $f_{\hat r^A}$.
\end{assumption}

\subsection{Metric of Interest}

The performance metric of interest is \emph{system efficacy}, defined as the total expected value of served requests. In healthcare, this corresponds to the number of true positive cases identified for early treatment; in education, the number of at-risk students who successfully graduate; in recruitment, the number of qualified candidates who accept interview invitations.

Let $z_i \in \{0,1\}$ indicate whether individual $i$ is ultimately served. The system efficacy is the expected total value of served requesters:
\begin{equation}\label{eq:objective_sum_form}
  \obj\left(\tau \mid A, M, N, p_0, \Delta_P\right) = \mathbb{E}\left[\sum_{i=1}^N z_i r_i\right].
\end{equation}
Under random allocation, each requester is served with probability $\min\left\{\frac{M}{\sum_{j=1}^N d_j},\, 1\right\}$. Substituting this into \eqref{eq:objective_sum_form} yields the equivalent form
\begin{equation}\label{eq:objective}
  \obj\left(\tau \mid A, M, N, p_0, \Delta_P\right) =
  \mathbb{E}\left[\min\left\{\frac{M}{\sum_{i=1}^N d_i}, 1\right\} \sum_{i=1}^N d_i r_i\right]
\end{equation}

We consider two goals:
\begin{enumerate}[(i)]
  \item For a fixed algorithm $A$, characterize the optimal threshold:
  \begin{equation}\label{eq:thresholding_problem}
      \tau^{*A}\left(M,N \mid p_0, \Delta_P\right) = \arg\max_{\tau \in [0,1]} \obj\left(\tau \mid A, M, N, p_0, \Delta_P\right).
  \end{equation}
  \item Given a set of algorithms $\mathcal{A}$, select the best algorithm under optimal deployment:
  \begin{equation}\label{eq:algo_selection_problem}
  A^{*}\left(M,N \mid p_0, \Delta_P\right) = \arg\max_{A \in \mathcal{A}} \obj\left(\tau^{*A}\left(M,N| p_0, \Delta_P\right) \mid A, M, N, p_0, \Delta_P\right).
  \end{equation}
\end{enumerate}

\subsection{Natural Thresholding Policies}
\label{sec:naive_threshold}
To study the thresholding problem, we first introduce two natural thresholding policies, prediction-based thresholds \citep{fawcett2006introduction, hand2009measuring, lakkaraju2015machine,deo2015machine,tomavsev2019clinically} and capacity-matching thresholds \citep{kitagawa2018should,ban2019big}. 

\subsubsection{Prediction-based Thresholds}
\label{sec:prediction_based_threshold}
In many settings, practitioners choose an operating point based on predictive performance alone, without considering operational constraints. These \emph{prediction-based thresholds} depend only on predictive metrics such as True Positive Rate (TPR; sensitivity) or Positive Predictive Value (PPV), and remain fixed regardless of changes in operational parameters $(p_0, \Delta_P, M/N)$. We illustrate several common examples of prediction-based thresholds:
\begin{example}[Target sensitivity]
A threshold is selected to achieve a minimum sensitivity $\mathrm{TPR}_{\min}$:
\begin{equation}\label{eq:target_sensitivity}
  \tau_A = \sup\{\tau \in [0,1] : \mathrm{TPR}_A(q_A(\tau)) \geq \mathrm{TPR}_{\min}\}.
\end{equation}
\end{example}

\begin{example}[Maximize specificity subject to sensitivity]
The threshold maximizes specificity while maintaining a minimum sensitivity level:
\begin{equation}\label{eq:optimize_specificity}
  \tau_A = \argmax_{\tau} \mathrm{TNR}_A(q_A(\tau)) \quad \text{subject to} \quad \mathrm{TPR}_A(q_A(\tau)) \geq \mathrm{TPR}_{\min}.
\end{equation}
\end{example}

\begin{example}[Cost-sensitive classification]
A threshold is chosen to balance false negatives and false positives according to their relative costs $c_{\mathrm{FN}}$ and $c_{\mathrm{FP}}$:
\begin{equation}\label{eq:predictive_metric_based_threshold}
  \tau_{A}=\argmin_{\tau} c_{\mathrm{FN}} \mathrm{FNR}(\tau) + c_{\mathrm{FP}} \mathrm{FPR}(\tau)
\end{equation}
where $\mathrm{FNR}(\tau)$ and $\mathrm{FPR}(\tau)$ are the false negative rate and false positive rate at threshold $\tau$. 
\end{example}

Formally, we define prediction-based thresholds considered in this paper.
\begin{definition}[Prediction-Based Threshold]\label{def:prediction-basedthresholds}
A thresholding rule $\tau_A$ is called \emph{prediction-based} if it depends only on: (i) the algorithm's predictive performance metrics (e.g., TPR, FPR, PPV), and (ii) exogenous parameters $c$ (e.g., cost ratios, target sensitivity levels) that are independent of the operational parameters $(p_0, \Delta_P, M,N)$.
Formally, $\tau_A$ can be written as
\[
\tau_A \in \arg\max_{\tau\in[0,1]} f(M_A(\tau),c),
\]
where $M_A(\tau)$ represents some prediction-based performance metrics at threshold $\tau$, and $f(\cdot, c)$ is any objective function that depends only on these metrics and the exogenous parameters $c$.
\end{definition} 

\subsubsection{Capacity-Matching Threshold}\label{sec:capacity_matching_threshold}
Prediction-based thresholds ignore operational constraints entirely. A natural alternative is to account for capacity when setting the threshold. \emph{Capacity-matching thresholds} adjust outreach to match expected requests to available capacity.
\begin{definition}[Capacity-matching threshold]
Given system capacity $M$, total population size $N$, baseline request probability $p_0$, and targeted request probability lift $\Delta_P$, the capacity-matching threshold $\tau_c$ is defined as a function of the \emph{capacity ratio} $\frac{M}{N}$:
\begin{equation}\label{eq:capacity_matching_threshold}
  \tau_c\left(\tfrac{M}{N}\mid  p_0, \Delta_P\right) = \min\left(1,\max\left(0,1 - \tfrac{1}{\Delta_P}\left(\tfrac{M}{N} - p_0\right)\right)\right).
\end{equation}
The threshold depends on $M$ and $N$ only through their ratio $\frac{M}{N}$, which captures the relative abundance of capacity. This yields three operational regimes:
\begin{enumerate}[(i)]
    \item When $\frac{M}{N} \leq p_0$, $\tau_c= 1$ (no outreach needed). Baseline request alone fills available capacity.
    \item When $p_0 < \frac{M}{N} < p_0 + \Delta_P$, $\tau_c \in (0,1)$ decreases linearly with the capacity ratio to generate sufficient additional requests.
    \item When $\frac{M}{N} \geq p_0 + \Delta_P$, $\tau_c= 0$ (maximum outreach). Capacity is sufficient to serve all requests even when all individuals are flagged for outreach.
\end{enumerate}
\end{definition}
This formulation clarifies how operational parameters guide outreach decisions. As the capacity ratio $\frac{M}{N}$ increases, more aggressive targeting (lower $\tau_c$) is required to fill available capacity. Similarly, as the baseline request probability, $p_0$, decreases, the threshold must be lowered to compensate for fewer independent requests and prevent capacity underutilization. 
Although capacity-matching thresholds ensure full utilization, we will show in \Cref{sec:thresholding} that they can still be suboptimal.

\subsection{Fluid Model}
\label{sec:fluid_limit}
Due to the complexity of our stochastic model, we consider a fluid approximation to obtain analytical insights. In particular, we make two approximations: (i) we replace the stochastic number of service requests with its deterministic (potentially fractional) expectation, and (ii) we replace the empirical quantile $\hat q_A(\tau)$ with the population quantile $q_A(\tau) = F_A^{-1}(\tau)$ of the predicted score distribution $\hat r^A$. 
We denote all fluid quantities with a tilde. In the fluid model, the total number of requests is replaced by its expectation $N\left[p_0 + \Delta_P(1 - \tau)\right]$, and the number of served requests becomes
\begin{equation}\label{eq:fluid_served_requests}
\tilde{N}(\tau| M, N, p_0, \Delta_P) =\min\left\{N[p_0 + \Delta_P(1-\tau)], M\right\}.
\end{equation}
The efficacy of each served request under the population quantile approximation is
\begin{equation}\label{eq:fluid_efficacy}
\tilde{R}\left(\tau| A, p_0, \Delta_P\right) = \frac{p_0\,\mathbb{E}[r] + \Delta_P\,(1-\tau)\,\mathbb{E}[r\mid \hat r^A \ge q_A(\tau)]}{p_0 + \Delta_P\,(1-\tau)}.
\end{equation}
We use the above fluid model to approximate the stochastic model. The fluid objective replaces the stochastic components in \Cref{eq:objective} with these approximations:
\begin{equation}\label{eq:objective_fluid_limit}
  \objfluid\left(\tau | A, M, N, p_0, \Delta_P\right)
  =  \tilde{N}(\tau| M, N, p_0, \Delta_P) \cdot \tilde{R}\left(\tau | A, p_0, \Delta_P\right).
\end{equation}
System efficacy thus depends on two factors: capacity underutilization (how much of available capacity is unused) and efficacy per service slot (how effectively capacity is allocated to high-value individuals). Underutilization wastes capacity, while serving low-value individuals reduces system efficacy even at full capacity.

The following proposition establishes that the fluid objective provides an upper bound on the stochastic objective and becomes exact as $M,N \rightarrow \infty$ (proof in \Cref{proof:prop:fluid_limit_approximation}).
\begin{proposition}[Fluid approximation properties]
\label{prop:fluid_limit}
The fluid objective $\objfluid(\tau| A, M, N, p_0, \Delta_P)$ satisfies
\[
\obj\left(\tau| A, M, N, p_0, \Delta_P\right) \leq \objfluid\left(\tau| A, M, N, p_0, \Delta_P\right),
\]
and
\[
\lim_{M \to \infty,N \to \infty} | \objfluid\left(\tau| A, M, N, p_0, \Delta_P\right) - \obj\left(\tau| A, M, N, p_0, \Delta_P\right) | = 0.
\]
\end{proposition}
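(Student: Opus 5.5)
The plan is to write the stochastic objective in a form where a concavity argument (Jensen) gives the upper bound, and a law-of-large-numbers argument gives the limit. Fix $\tau$ and algorithm $A$. Let $D=\sum_{i=1}^N d_i$ and $S=\sum_{i=1}^N d_i r_i$, so that
\[
\obj = \mathbb{E}\bigl[\min\{M/D,1\}\, S\bigr] = \mathbb{E}\bigl[\min\{M, D\}\cdot (S/D)\bigr],
\]
with the convention $0/0=0$. The function $g(x,y)=\min\{M,x\}\,y/x = \min\{M y/x,\, y\}$ is a minimum of two functions that are jointly concave on $x>0$: the perspective $y/x$ is linear along rays, and $My/x$ is concave in the relevant sense after homogenization. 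More cleanly, $g(x,y)=\min\{M\,(y/x),\,y\}$ is positively homogeneous of degree one, and concave wherever $y/x$ has a fixed sign. I would therefore apply Jensen to $g$ evaluated at $(\mathbb{E}[D], \mathbb{E}[S])$. The intended result is
\[
\objfluid = \min\{\mathbb{E} D, M\}\cdot \frac{\mathbb{E} S}{\mathbb{E} D},
\]
which requires checking that $\mathbb{E}[D]=N[p_0+\Delta_P(1-\tau)]$ and $\mathbb{E}[S]/\mathbb{E}[D]=\tilde R$ under the population quantile.

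\textbf{Main obstacle (first part).} The flags are defined through the empirical quantile, so the pairs $(d_i,r_i)$ are exchangeable but not independent, and $\mathbb{E}[S]$ is not exactly the population expression. I would first handle the idealized version that uses the population quantile $q_A(\tau)$. There the pairs are i.i.d., $\mathbb{E}[d_i r_i]=p_0\mathbb{E}[r]+\Delta_P(1-\tau)\mathbb{E}[r\mid \hat r^A\ge q_A(\tau)]$, and Jensen gives the bound directly. For the empirical quantile, exactly $\lceil N(1-\tau)\rceil$ individuals are flagged, and they are the top order statistics. The flagged set's mean value is then $\mathbb{E}[r\mid \text{top-}k]$, which by monotone calibration (Assumption~\ref{assump:monotone_calibration}) is comparable to the population conditional mean, so the inequality still goes through. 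Integer rounding of the fluid count is the remaining nuisance, which I would absorb by the convention on $\tau$.

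\textbf{Concavity.} The sign issue (if $r$ can be negative) also needs care. I would argue via the representation $g(x,y)=\min_{\lambda\in\{M/x\cdot x,\ x\}}\ldots$. Alternatively, I would condition on $D$: given $D=k$, exchangeability gives $\mathbb{E}[S\mid D=k]=k\,\mu_k$, and I would show $\mu_k$ is nonincreasing in $k$, since more requests mean relatively more unflagged requesters. Then $\obj=\mathbb{E}[\min\{M,D\}\mu_D]$, and a covariance or Chebyshev-sum argument together with concavity of $\min\{M,\cdot\}$ bounds this by $\min\{M,\mathbb{E}D\}\,\bar\mu$. I expect this step to carry the real work.

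\textbf{Limit.} For the limit, take $M,N\to\infty$ with $M/N$ converging, which the statement implicitly requires since both objectives scale with $N$; I would interpret the convergence after normalizing by $N$ if necessary. The steps are:
\begin{enumerate}[(i)]
\item By the strong law of large numbers, $D/N\to p_0+\Delta_P(1-\tau)$ almost surely.
\item By the Glivenko–Cantelli theorem, using continuity of $F_A$ (Assumption~\ref{assump:marginal_positive}), the empirical quantile converges to $q_A(\tau)$.
\item Hence $S/N\to p_0\mathbb{E}[r]+\Delta_P(1-\tau)\mathbb{E}[r\mid\hat r^A\ge q_A(\tau)]$, using the finite first moment (Assumption~\ref{assump:joint_density}) and continuity of conditional expectations (Assumption~\ref{assump:conditional_smoothness}).
\end{enumerate}
The map $g$ is continuous away from $x=0$ and satisfies $|g|\le |S|$, with $|S|/N\le \frac1N\sum|r_i|$ uniformly integrable. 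Dominated convergence (Vitali) then gives $\obj/N-\objfluid/N\to0$.
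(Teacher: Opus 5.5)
\textbf{The exact upper bound is where the proposal breaks.} Your Jensen step needs $g(x,y)=\min\{M,x\}\,y/x$ to be concave, but it is neither concave nor degree-one homogeneous. On $\{x>M\}$ it equals $My/x$, whose Hessian has determinant $-M^2/x^4<0$, and $g(tx,ty)=\min\{M,tx\}\,y/x\neq t\,g(x,y)$ in general.

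The fallback runs the wrong way. Take $\phi(k)=\min\{M,k\}$, which is nondecreasing, and $\mu_k$, which is nonincreasing as you say. Chebyshev gives $\mathbb{E}[\phi(D)\mu_D]\le\mathbb{E}[\phi(D)]\,\mathbb{E}[\mu_D]$. It equally gives $\mathbb{E}[\mu_D]\ge\mathbb{E}[D\mu_D]/\mathbb{E}[D]=\mathbb{E}[S]/\mathbb{E}[D]$, so the two bounds do not chain. The correlation you identified actually favors the stochastic system: low-demand realizations are flagged-heavy and fully served, while high-demand ones are diluted and rationed.

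With top-$N(1-\tau)$ flagging, this can beat the fluid value outright. Take $N=2$, $M=1$, $p_0=0.8$, $\Delta_P=0.2$, $\tau=\tfrac12$, and $\hat r^A=r\sim\mathrm{Uniform}[0,1]$. The top individual is flagged and always requests; the other requests with probability $0.8$. Here $\tilde N=\min\{1.8,1\}=1$, and
\[
W=0.2\cdot\tfrac23+0.8\cdot\tfrac12=\tfrac{8}{15},
\qquad
\tilde W=\frac{0.8\cdot\tfrac12+0.1\cdot\tfrac34}{0.9}=\tfrac{19}{36}.
\]
So $W$ exceeds both $\tilde W$ and your intermediate target $\min\{M,\mathbb{E}D\}\,\mathbb{E}S/\mathbb{E}D=\tfrac{14}{27}$. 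A small perturbation of $\hat r^A$ that produces a joint density keeps the strict inequality.

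Rounding breaks the bound as well. At $\tau=1$ the empirical quantile still flags the top-ranked individual while the fluid flags no one. For $M\ge N$, $W-\tilde W$ is therefore $\Delta_P$ times that individual's expected score, which is positive. Neither the ``comparable by monotone calibration'' step nor the ``absorb rounding'' step can be carried out: the exact inequality is false in general for the empirical-quantile system.

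\textbf{What does hold, and what the paper does.} In your idealized population-quantile model the pairs $(d_i,r_i)$ are i.i.d. Given $D$, the requesters' scores are therefore i.i.d.\ from the law of $r\mid d=1$, so $\mathbb{E}[S\mid D]=D\,\tilde R$. Hence $W=\mathbb{E}[\min\{M,D\}]\,\tilde R\le\min\{M,\mathbb{E}D\}\,\tilde R=\tilde W$ by Jensen on $\min\{M,\cdot\}$ alone, using $\tilde R\ge 0$; no joint concavity is needed. This is, in effect, the paper's argument. It writes $W=N(\tau)R(\tau)$ with $N(\tau)=\mathbb{E}[\min\{D,M\}]$ and applies Jensen only to $\min\{\cdot,M\}$. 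For the empirical-quantile system it asserts the bound only asymptotically. You should argue the same way and drop the exact claim under empirical flagging.

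\textbf{Your limit argument is sound and genuinely different.} The paper splits $|\tilde W-W|$ into two pieces. The efficacy error is handled by Glivenko--Cantelli plus continuity of $\mathbb{E}[r\mid\hat r^A\ge c]$. The demand error is bounded by Chernoff separately for $\tau<\tau_c$ and $\tau>\tau_c$. You instead obtain a.s.\ convergence of $(D/N,S/N)$ and pass to expectations by Vitali, with the dominating average $\frac1N\sum_i|r_i|$. Your route needs no factorization of $W$, and it covers $\tau=\tau_c$, where the paper's Chernoff $\delta$ is zero.

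The normalization by $N$ is necessary, not optional. Along $M=N[p_0+\Delta_P(1-\tau)]$, $\mathbb{E}[(M-D)^+]$ is generically of order $\sqrt N$, so the unnormalized gap diverges.

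Two details remain. Step (iii) needs a sandwich argument to replace $\hat q_A(\tau)$ by $q_A(\tau)$ inside $\frac1N\sum_i r_i\mathbf 1\{\hat r_i^A\ge\hat q_A(\tau)\}$, together with a conditional LLN for the request noise. Continuity of $g$ at the limit point also requires $p_0+\Delta_P(1-\tau)>0$.
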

\Cref{fig:fliud_approximation} illustrates that fluid objective consistently upper bounds the stochastic objective, and the approximation error vanishes as $M$ and $N$ increase. This motivates our use of the fluid model in subsequent analysis.

\begin{figure}
    \centering
    \includegraphics[width=0.5\linewidth]{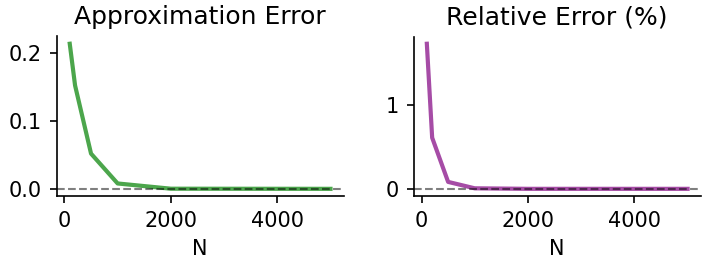}
    \caption{Visualizations of fluid approximation accuracy. The fluid approximation consistently upper bounds the true objective (left panel), and the error converges to 0 as $M$ and $N$ scale up (right panel). Parameter: $M/N=0.2$ , $p_0=0.1$, $\Delta_P=0.5$ and $\hat r = r \sim 0.7 \mathrm{Beta}(2,10)+0.3 \mathrm{Beta}(8,2)$.}
    \label{fig:fliud_approximation}
\end{figure}

%% file: section/thresholding.tex
\section{Optimal Thresholds and Suboptimality of Naive Thresholds}\label{sec:thresholding}
Using the fluid objective as an approximation, we now turn to the first goal of characterizing the optimal threshold $\tau^{*A}(M,N| p_0, \Delta_P)$ to maximize the system efficacy. We first present a motivating example illustrating two competing effects in capacity-constrained settings: the \emph{cannibalization effect}, which drives the suboptimality of \emph{who gets served} by crowding out high-value requests, and the \emph{capacity underutilization effect}, which drives the suboptimality of \emph{how many get served} by penalizing overly conservative targeting. Balancing these two effects is key to selecting optimal thresholds. Interestingly, we show that the optimal threshold admits a simple two-point characterization---the minimum of two thresholds, each capturing one of these effects. The system parameters $(p_0, \Delta_P, M, N)$ determine which effect dominates, thereby shaping the optimal threshold. 

\subsection{Motivating Example}
\label{sec:cannibalization_effect}
In this subsection, we motivate the suboptimality of naive thresholds by examples and introduce the behavioral cannibalization effect that drives the suboptimality of naive thresholds. 

Suppose there are 100 individuals, each with baseline request probability $p_0 = 0.1$ without outreach. 
Targeted outreach increases an individual’s request probability to $p_0+\Delta_P = 0.6$. The service provider has capacity to serve 20 requests. For simplicity, assume that true scores are uniformly distributed on $[0,1]$ and that the algorithm perfectly ranks individuals by their true scores. We consider four thresholding scenarios. 

\begin{figure}
    \centering
    \includegraphics[width=0.9\linewidth]{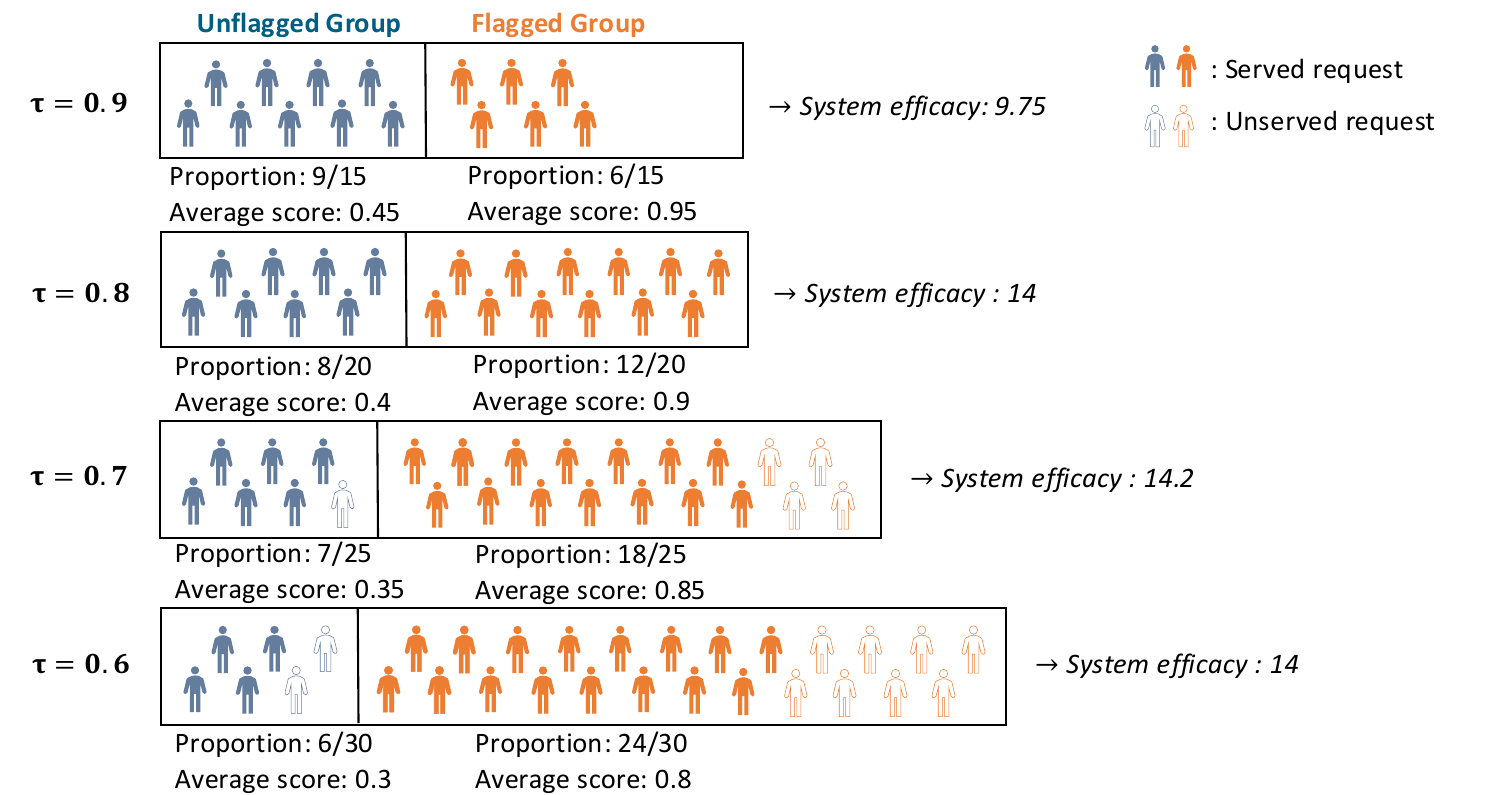}
    \caption{Requests decomposition under varying outreach thresholds. Each row corresponds to a setting from the motivating example: Setting 1 ($\tau = 0.9$) shows underutilization; Setting 2 ($\tau = 0.8$) matches capacity exactly; Setting 3 ($\tau = 0.7$) over-expands outreach, shifting capacity toward targeted requests; Setting 4 ($\tau = 0.6$) further expands outreach but increases cannibalization from lower-value individuals.}
    \label{fig:cannibalization_effect}
\end{figure} 

\textbf{Setting 1: Flagging top 10\% ($\tau = 0.9$).} The expected number of independent requests is $100 \times 0.9 \times 0.1 = 9$ and targeted requests is $100 \times 0.1 \times 0.6 = 6$, yielding a total of 15 requests. In this case, the threshold is overly conservative and leads to underutilization. Such underutilization is costly, as it leaves service slots unused that could have been filled by nudging additional individuals to request service (see the first row of \Cref{fig:cannibalization_effect}).

\textbf{Setting 2: Flagging top 20\% ($\tau = 0.8$).} The expected number of independent requests is $100 \times 0.8 \times 0.1 = 8$ and targeted requests is $100 \times 0.2 \times 0.6 = 12$, yielding a total of 20 requests, exactly matching capacity. Among the served requests, $8/20$ are independent requests with average true score 0.4, and $12/20$ are targeted requests with substantially higher average true score 0.9 (see the second row of \Cref{fig:cannibalization_effect}).

\textbf{Setting 3: Flagging top 30\% ($\tau = 0.7$).} The expected number of independent requests is $100 \times 0.7 \times 0.1 = 7$ and targeted requests is $100 \times 0.3 \times 0.6 = 18$, yielding a total of 25 requests. Capacity is now scarce. Under random allocation, independent requests account for $7/25$ of demand, compared to $8/20$ at $\tau = 0.8$, allowing a larger share of service capacity to be allocated to targeted requests. Although the average true score among targeted requests decreases to 0.85, more targeted requests are served overall (see the third row of \Cref{fig:cannibalization_effect}).

\textbf{Setting 4: Flagging top 40\% ($\tau = 0.6$).} The expected number of independent requests is $100 \times 0.6 \times 0.1 = 6$ and targeted requests is $100 \times 0.4 \times 0.6 = 24$, yielding a total of 30 requests. The fraction of independent requests further declines to $6/30$, and an even larger portion of capacity is allocated to targeted requests. However, because increasingly lower-risk individuals are also nudged, the average true score of targeted requests falls to 0.8, and the system efficacy no longer continues to increase (see the fourth row of \Cref{fig:cannibalization_effect}).

Setting 2, 3 and 4 reveal a non-monotonic relationship between the threshold and system efficacy when capacity is filled: lowering the threshold initially improves system efficacy by shifting capacity toward higher-value individuals from the flagged group, but eventually reduces value by intensifying cannibalization from low-value requests from both non-flagged and flagged groups. This is driven by a key feature of algorithm-assisted interventions under capacity constraints: when the number of requests exceeds available capacity, all service requests compete for the same service slots, leading to \emph{cannibalization}, where lower-value requests crowd out higher-value ones. Cannibalization takes two forms. Setting a high threshold preserves the efficacy of the flagged group but risks cannibalization from baseline requests, which capture a larger share of capacity. Conversely, setting a low threshold reduces cannibalization from baseline requests but introduces cannibalization from lower-value individuals within the flagged group. 
\subsection{Optimal Threshold}
\label{sec:two_point_optimality}
The non-monotonic relationship between threshold and system efficacy motivates a threshold that optimally balances these two forms of cannibalization when all capacity is filled. This is the threshold that maximizes efficacy per service slot, which we refer to as the \emph{score-optimal threshold}.

\begin{definition}[Score-Optimal Threshold]
\label{def:score_optimal_threshold}
Given the behavioral parameters $p_0$ and $\Delta_P$, the \emph{score-optimal threshold} $\tau^{*A}_{\mathrm{score}} (p_0, \Delta_P)$ maximizes the efficacy per service slot:
\begin{equation}\label{eq:score_optimal_threshold}
  \tau^{*A}_{\mathrm{score}} (p_0, \Delta_P) \in \arg\max_{\tau \in [0,1]} \tilde{R}\left(\tau | A, p_0, \Delta_P\right).
\end{equation}
\end{definition}
To ensure the score-optimal threshold is well-defined, we impose regularity conditions on the joint distribution of true and predicted scores. Under \Cref{assump:joint_density,assump:marginal_positive,assump:conditional_smoothness,assump:monotone_calibration}, the score-optimal threshold is unique and characterized by the first-order condition (proof in \Cref{proof:lem:score_optimal_threshold_uniqueness}).

We now have two candidate thresholds, each addressing a different operational concern: the \emph{score-optimal threshold} $\tau^{*A}_{\mathrm{score}}(p_0, \Delta_P)$ maximizes efficacy per served request; the \emph{capacity-matching threshold} $\tau_c(\frac{M}{N}\mid p_0,\Delta_P)$ ensures the full utilization of capacity. Neither of these thresholds alone is sufficient to characterize the optimal threshold as the optimal threshold must account for both at the same time. 
A natural question arises: how do these two thresholds interact to determine the optimal policy? One might expect the answer to involve a complex trade-off between utilization and per-slot efficacy. We show that the optimal threshold has a surprisingly simple two-point structure: it is the minimum of the score-optimal threshold and the capacity-matching threshold. 
Formally, we have the following result (proof in Appendix \ref{proof:thm:two_point_optimality}).

\medskip

\begin{theorem}[Two-Point Optimality]
\label{thm:two_point_optimality}
Under \Cref{assump:joint_density,assump:marginal_positive,assump:conditional_smoothness,assump:monotone_calibration}, for any capacity ratio $\tfrac{M}{N} > 0$, the threshold that maximizes the fluid objective in \Cref{eq:objective_fluid_limit} satisfies
\[
\tau^{*A}\left(\tfrac{M}{N}\mid  p_0,\, \Delta_P\right)
= \min\left\{\,\tau^{*A}_{\mathrm{score}}(p_0,\, \Delta_P),\;\; \tau_c\left(\tfrac{M}{N}\mid  p_0,\, \Delta_P\right)\right\}.
\]
\end{theorem}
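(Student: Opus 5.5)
We split the interval $[0,1]$ at the capacity-matching threshold $\tau_c$ and analyze the fluid objective $\objfluid(\tau)=\min\{N[p_0+\Delta_P(1-\tau)],M\}\cdot\tilde R(\tau)$ on each side. Write $D(\tau)=p_0+\Delta_P(1-\tau)$, which is nonincreasing in $\tau$. By the definition of $\tau_c$, we have $N D(\tau)\ge M$ for $\tau\le\tau_c$ (capacity binds) and $ND(\tau)\le M$ for $\tau\ge \tau_c$ (capacity slack). The boundary regimes $\tau_c\in\{0,1\}$ are handled as degenerate cases of the same argument.

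\emph{Region $\tau\ge\tau_c$ (slack).} Here
\[
\objfluid(\tau)=N\bigl[p_0\mathbb{E}[r]+\Delta_P(1-\tau)\,\mathbb{E}[r\mid \hat r^A\ge q_A(\tau)]\bigr]=N\Bigl[p_0\mathbb{E}[r]+\Delta_P\int_{q_A(\tau)}^{b}\mathbb{E}[r\mid\hat r^A=s]f_{\hat r^A}(s)\,ds\Bigr].
\]
Its derivative in $\tau$ is $-N\Delta_P\,\mathbb{E}[r\mid \hat r^A=q_A(\tau)]$. The sign therefore depends on the sign of the marginal score, not simply on monotonicity. I would use the score-optimal first-order condition for this (the uniqueness lemma proved in the appendix, assumed here). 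Differentiating $\tilde R$ gives
\[
\tilde R'(\tau)\propto \tilde R(\tau)-\mathbb{E}[r\mid\hat r^A=q_A(\tau)].
\]
Monotone calibration together with uniqueness then yields that $\tilde R$ is increasing on $[0,\tau_{\mathrm{score}}]$ and decreasing afterwards, and that at $\tau_{\mathrm{score}}$ the marginal score equals $\tilde R(\tau_{\mathrm{score}})$. Since $\tilde R$ is a weighted average of $\mathbb{E}[r]$ and the conditional mean, I will argue that $\mathbb{E}[r\mid\hat r^A=q_A(\tau)]\ge \tilde R(\tau_{\mathrm{score}})\ge \mathbb{E}[r]$ on the relevant range. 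When scores may be negative, I will check that $\mathbb{E}[r\mid \hat r^A=q_A(\tau)]>0$ for $\tau\ge\tau_{\mathrm{score}}$ ($\tilde R>0$ whenever $\mathbb{E}[r]>0$, which I take as implicit). The upshot I aim for is that on $[\tau_c,1]$, $\objfluid$ is maximized at $\max\{\tau_c,\cdot\}$; more precisely, it is nonincreasing on $[\max(\tau_c,\tau_{\mathrm{score}}),1]$. The remaining piece is the subrange $[\tau_c,\tau_{\mathrm{score}}]$ when $\tau_c<\tau_{\mathrm{score}}$.

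\emph{Region $\tau\le\tau_c$ (binding).} Here $\objfluid=M\tilde R(\tau)$, so the unimodality of $\tilde R$ gives the maximizer on $[0,\tau_c]$ as $\min\{\tau_{\mathrm{score}},\tau_c\}$.

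\emph{Combining.} We have $\objfluid(\tau)=N D(\tau)\tilde R(\tau)$ on the slack side and $M\tilde R(\tau)$ on the binding side, so $\objfluid(\tau)\le M\tilde R(\tau)$ everywhere, with equality on $[0,\tau_c]$. Two cases follow.
\begin{enumerate}[(i)]
\item If $\tau_{\mathrm{score}}\le\tau_c$, then $\objfluid(\tau)\le M\tilde R(\tau)\le M\tilde R(\tau_{\mathrm{score}})=\objfluid(\tau_{\mathrm{score}})$ for every $\tau$. This is the cleanest case and needs no slack-side derivative at all.
\item If $\tau_c<\tau_{\mathrm{score}}$, the binding side is maximized at $\tau_c$. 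On the slack side the derivative is $-N\Delta_P\,\mathbb{E}[r\mid\hat r^A=q_A(\tau)]$. For $\tau\le\tau_{\mathrm{score}}$, unimodality gives $\mathbb{E}[r\mid\hat r^A=q_A(\tau)]\ge\tilde R(\tau)$, and $\tilde R(\tau)$ is a weighted average of $\mathbb{E}[r]$ and a larger conditional mean, hence positive if $\mathbb{E}[r]>0$. So $\objfluid$ is decreasing on $[\tau_c,\tau_{\mathrm{score}}]$. Beyond $\tau_{\mathrm{score}}$ I reuse the bound $\objfluid\le M\tilde R$ and $\tilde R(\tau)\le\tilde R(\tau_c)$; the latter holds because $\tilde R$ is increasing up to $\tau_{\mathrm{score}}$, but need not hold beyond it. For $\tau>\tau_{\mathrm{score}}$ I will instead use
\[
\objfluid(\tau)=N\bigl[p_0\mathbb{E}[r]+\Delta_P\!\int_{q_A(\tau)}^{b}\!\mathbb{E}[r\mid s]f\,ds\bigr],
\]
whose derivative is $-N\Delta_P\,\mathbb{E}[r\mid q_A(\tau)]$, and argue this is positive-valued conditional mean, so the function is decreasing throughout $[\tau_c,1]$.
\end{enumerate}

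The main obstacle is this positivity of the marginal conditional mean $\mathbb{E}[r\mid\hat r^A=q_A(\tau)]$ for all $\tau\ge\tau_c$. It is immediate if $r\ge0$, as in the probability and efficacy interpretations; I would state that nonnegativity as the implicit convention. Otherwise I would derive it from the first-order condition and monotone calibration on the relevant range.
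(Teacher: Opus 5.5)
Your proof follows essentially the paper's route. The paper organizes the argument as three regimes in $M/N$, but in substance it does what you do. It splits at $\tau_c$ into a binding range, where $\objfluid=M\tilde R$ and the uniqueness lemma makes $\tilde R$ unimodal, and a slack range, where $\frac{d}{d\tau}\objfluid=-N\Delta_P\,\mathbb{E}[r\mid\hat r^A=q_A(\tau)]$.

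Your case (i) is a slightly cleaner variant. The domination $\objfluid(\tau)\le M\tilde R(\tau)\le M\tilde R(\tau^{*A}_{\mathrm{score}})$ avoids the slack-side derivative altogether. It only needs $\tilde R\ge 0$ on $[\tau_c,1]$, which holds because $\tilde R(\tau)\ge\tilde R(1)=\mathbb{E}[r]$ for $\tau\ge\tau^{*A}_{\mathrm{score}}$. Your bound $\mathbb{E}[r\mid\hat r^A=q_A(\tau)]\ge\tilde R(\tau^{*A}_{\mathrm{score}})\ge\mathbb{E}[r]$ for $\tau\ge\tau^{*A}_{\mathrm{score}}$ is also correct.

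The faulty step is in case (ii), on $[\tau_c,\tau^{*A}_{\mathrm{score}}]$. The relation $\tilde R'(\tau)\propto\tilde R(\tau)-\mathbb{E}[r\mid\hat r^A=q_A(\tau)]$ holds with a positive constant, and $\tilde R'>0$ below $\tau^{*A}_{\mathrm{score}}$. So you get $\mathbb{E}[r\mid\hat r^A=q_A(\tau)]<\tilde R(\tau)$, which is the reverse of what you wrote. The marginal score there lies below the average and can even fall below $\mathbb{E}[r]$. Positivity on this range therefore does not follow from $\mathbb{E}[r]>0$.

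Nor can it be derived from the first-order condition and monotone calibration, because the theorem fails without it. Take $M/N\ge p_0+\Delta_P$, so $\tau_c=0$ and the whole range is slack. Suppose also $\mathbb{E}[r\mid\hat r^A=s]<0$ near the bottom of the support. Then $\objfluid$ increases until the marginal conditional mean crosses zero, so the maximizer is not $0=\min\{\tau^{*A}_{\mathrm{score}},\tau_c\}$.

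You must therefore impose $\mathbb{E}[r\mid\hat r^A=s]\ge 0$ on the support as a hypothesis; $r\ge 0$ suffices, as in the probability and efficacy interpretations. The paper makes the same assumption tacitly when it asserts $-N\Delta_P\,\mathbb{E}(r\mid\hat r^A=q_A(\tau))\le 0$. With that hypothesis stated, the reversed inequality deleted, and the ``otherwise I would derive it'' fallback dropped, your proof is complete.
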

\noindent The optimal policy always flags enough individuals to fully utilize capacity, but in some scenarios it is beneficial to flag \textit{more} individuals. Which of the two thresholds binds depends on system parameters, described below.

\subsubsection{Dependence on capacity ratio $M/N$}
The two-point structure in \Cref{thm:two_point_optimality} implies that the system operates in one of two regimes. 
When capacity is scarce compared to the number of baseline requests, slots fill easily and cannibalization is the primary concern: the optimal threshold is $\tau^*_{\mathrm{score}}$, and the key question is \emph{which requesters} are served. Conversely, when capacity is abundant, underutilization dominates and the optimal threshold is $\tau_c$, shifting the focus to \emph{how many requesters} are served. These two regimes reduce the thresholding decision to a single question: how abundant is system capacity relative to baseline requests?

The transition between these two regimes occurs at
\(
\tfrac{M}{N} = p_0 + \Delta_P\bigl(1 - \tau^{*A}_{\mathrm{score}}\bigr),
\)
where the score-optimal and capacity-matching thresholds coincide. Below this point, capacity is easily filled and the optimal threshold is the score-optimal threshold that maximizes efficacy per slot. Above it, capacity is abundant and the optimal threshold is the capacity-matching threshold that ensures full utilization. The left panel of \Cref{fig:thresholding_policies} illustrates this transition.

\begin{figure}
    \centering
    \includegraphics[width=0.7\textwidth]{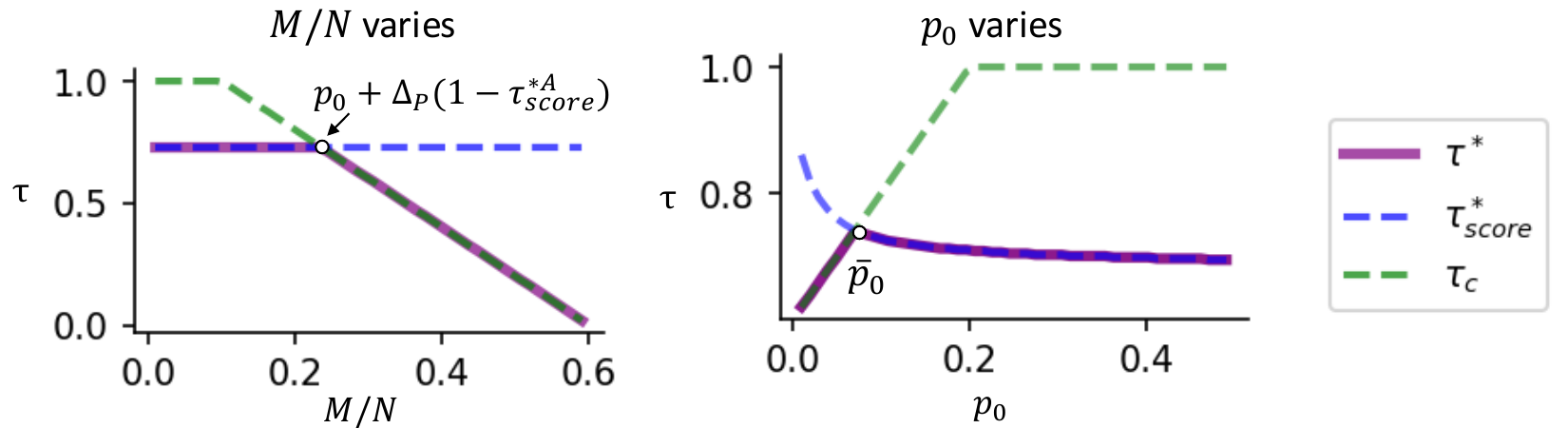} 
    \caption{Optimal thresholding policy as the capacity ratio $\tfrac{M}{N}$ or the baseline request $p_0$ varies. Parameters: $N=1000, M=200, \Delta_P = 0.5, p_0 = 0.1$ and $\hat r = r \sim 0.7 \mathrm{Beta}(2,10)+0.3 \mathrm{Beta}(8,2).$ }
    \label{fig:thresholding_policies}
\end{figure}

\subsubsection{Dependence on Baseline Request Probability $p_0$}
The optimal threshold also depends on the baseline request probability $p_0$, which decides the severity of cannibalization through the volume of baseline requests.
When $p_0$ is small, cannibalization is weak because few individuals request service without outreach. As a result, the score-optimal threshold is high and does not fill capacity. In this regime, the capacity underutilization effect dominates: the optimal threshold expands outreach to ensure full utilization.
When $p_0$ is large, cannibalization becomes more severe. The score-optimal threshold decreases with $p_0$ to avoid crowding out high-value targeted requests. In this regime, flagging at the score-optimal threshold fills capacity, so the key question becomes \emph{which requesters} are served. The score-optimal threshold $\tau^{*A}_{\mathrm{score}}$ therefore binds, and lowering the threshold further would reduce average efficacy by nudging lower-value individuals.

The transition between these regimes occurs at the critical baseline request probability $\bar{p}_0$, where
\begin{equation}\label{def:bar_p0}
    \bar{p}_0 = \inf\left\{ p_0 \in [0,1-\Delta_P] : \tau^{*A}_{\mathrm{score}}(p_0,\Delta_P) \leq \tau_c\!\left(\tfrac{M}{N}\mid p_0,\Delta_P\right) \right\}.
\end{equation}
For $p_0 \ge \bar{p}_0$, baseline request probability is sufficiently high that flagging at the score-optimal threshold fills capacity, making cannibalization the primary concern. For $p_0 < \bar{p}_0$, cannibalization is weak and the score-optimal threshold does not fill capacity, so the optimal threshold ensures full utilization. The right panel of \Cref{fig:thresholding_policies} illustrates this transition.

Importantly, when $p_0 = 0$, there are no baseline requests to crowd out, so cannibalization cannot occur. In this case, $\tau^{*A}_{\mathrm{score}}(0, \Delta_P) = 1$ and the optimal threshold reduces to the capacity-matching threshold $\tau_c\!\left(\frac{M}{N}\mid 0, \Delta_P\right)$: the optimal strategy is simply to fill capacity with the highest-scoring individuals. However, in practice, baseline requests are common and can be substantial. In these cases, $p_0 > 0$ and requests compete for limited capacity, and the capacity-matching threshold is generally suboptimal since it ignores the cannibalization effect.

\subsubsection{Asymptotic Optimality in Finite Systems}
In practice, the system is finite with known $M$ and $N$. To the validate the optimality result in a finite system, we provide the following corollary to guarantee that operating at the two-point optimal threshold given by \Cref{thm:two_point_optimality} in a finite system is asymptotically optimal as $M$ and $N$ scale up (proof in Appendix \ref{proof:cor:finite_two_point_optimality}).
\begin{corollary}
\label{cor:finite_two_point_optimality}
Under \Cref{assump:joint_density,assump:marginal_positive,assump:conditional_smoothness,assump:monotone_calibration}, the two-point optimal threshold $\tau^{*A}\left(\frac{M}{N}\mid  p_0, \Delta_P\right)$ in the fluid model is asymptotically optimal:
\[\lim_{M \to \infty,N \to \infty} \left| \max_{\tau \in [0,1]}\obj\left(\tau | A, M, N, p_0, \Delta_P\right) - \obj\left(\tau^{*A}\left(\tfrac{M}{N}\mid  p_0, \Delta_P\right) | A, M, N, p_0, \Delta_P\right) \right| = 0.\]
\end{corollary}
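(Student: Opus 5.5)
The plan is a standard sandwich argument built on \Cref{prop:fluid_limit} and \Cref{thm:two_point_optimality}. Write $\tau^* = \tau^{*A}(\tfrac{M}{N}\mid p_0,\Delta_P)$, and let $\hat\tau_{M,N}$ be any maximizer of the stochastic objective $\obj(\cdot\mid A,M,N,p_0,\Delta_P)$; if no maximizer exists, use a near-maximizer. By definition,
\[
0 \le \max_\tau \obj(\tau) - \obj(\tau^*).
\]
For the upper side we decompose
\[
\obj(\hat\tau_{M,N}) - \obj(\tau^*) \le \objfluid(\hat\tau_{M,N}) - \obj(\tau^*) \le \objfluid(\tau^*) - \obj(\tau^*).
\]
The first inequality is the upper-bound part of \Cref{prop:fluid_limit}. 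The second holds because \Cref{thm:two_point_optimality} says $\tau^*$ maximizes the fluid objective. It therefore suffices to show that the fluid gap $\objfluid(\tau^*) - \obj(\tau^*)$ vanishes as $M,N\to\infty$.

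The difficulty is that $\tau^*$ depends on $M/N$, so it moves along the sequence, whereas \Cref{prop:fluid_limit} is stated pointwise in $\tau$. I would first normalize by $N$: since $\objfluid$ scales linearly in $N$, the natural meaning of the limit is with the ratio $M/N\to\kappa$ held fixed, or at least convergent. I would also strengthen \Cref{prop:fluid_limit} to hold uniformly in $\tau\in[0,1]$. To do this I would revisit its proof and bound two sources of error separately.

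\textbf{(i) Random number of requests.} The total $D=\sum_i d_i$ is a sum of independent Bernoullis, so $\mathrm{Var}(D)\le N/4$ uniformly in $\tau$. I would bound the error from replacing $\min\{M/D,1\}\sum_i d_i r_i$ by $\min\{M,\mathbb{E}D\}\tilde R$ using Cauchy--Schwarz and the finite first moment in \Cref{assump:joint_density}. This gives an error of order $\sqrt{N}$, i.e.\ $o(N)$, uniformly in $\tau$.

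\textbf{(ii) Empirical versus population quantile.} The error from using $\hat q_A(\tau)$ in place of $q_A(\tau)$ is controlled uniformly by Glivenko--Cantelli applied to the predicted scores, combined with the continuity of $\mathbb{E}[r\mid \hat r^A\ge s]$ from \Cref{assump:conditional_smoothness}.

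With uniform convergence in hand, plugging in the sequence $\tau^*_{M,N}$ costs nothing, and the sandwich closes.

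The main obstacle is this uniformity, and in particular whether the absolute gap genuinely vanishes or only vanishes relative to $N$. If the statement is read literally with unnormalized $\obj$, the CLT-scale fluctuation of $\min\{D,M\}$ near the capacity-matching point gives a gap of order $\sqrt N$. In that case I would interpret the corollary, as the paper presumably does in proving \Cref{prop:fluid_limit}, at the same scale as that proposition, and reuse its proof verbatim with a supremum over $\tau$ inserted. The only new ingredient is checking that each error bound there does not depend on $\tau$.
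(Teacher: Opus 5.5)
Your skeleton is the paper's. The paper writes $\obj(\tilde\tau^{*A})-\obj(\tau^{*A})$ as three brackets, drops the middle one because $\tau^{*A}$ maximizes $\objfluid$ (\Cref{thm:two_point_optimality}), and applies \Cref{prop:fluid_limit} pointwise to the outer two. Your one-sided version is an improvement. Using $\obj\le\objfluid$ at $\hat\tau_{M,N}$ means you only need the fluid gap at $\tau^{*A}$, which is a fixed threshold once $M/N$ is fixed. The paper instead needs the approximation at the $N$-dependent finite-system maximizer, which a pointwise-in-$\tau$ proposition does not deliver. One caveat: the paper proves $\obj\le\objfluid$ only "asymptotically", since it treats $\obj$ as the product $N(\tau)R(\tau)$. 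So you still want a uniform one-sided bound $\obj\le\objfluid+o(N)$.

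The genuine gap is your final fallback. Your $\sqrt N$ suspicion is correct, and it is not a proof artifact. For non-degenerate $p_0,\Delta_P$, at $\tau=\tau_c$ one has $\mathbb{E}[S_N]=M$ (your $D$ is the paper's $S_N$), so $\objfluid(\tau_c)-\obj(\tau_c)\approx\tilde R(\tau_c)\,\mathbb{E}[(M-S_N)^+]\asymp\sqrt N$.

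\begin{itemize}
\item \emph{Why the proposition's proof cannot be reused.} Its Chernoff step breaks exactly there: its $\delta=(Np_1(\tau)-M)/(Np_1(\tau))$ equals $0$, not something positive, at $\tau=\tau_c$. So the error bounds there do depend on $\tau$, and they degenerate precisely where you need them, namely throughout the regime $p_0+\Delta_P(1-\tau^{*A}_{\mathrm{score}})\le\tfrac{M}{N}\le p_0+\Delta_P$, where $\tau^{*A}=\tau_c$. The paper's own proof has the same hole.
\item \emph{Why no argument can rescue the unnormalized limit.} Near $\tau_c$ the stochastic slope is approximately $N\Delta_P\bigl[\mathbb{P}(S_N\ge M)\,\tilde R(\tau)-\mathbb{E}[r\mid\hat r^A=q_A(\tau)]\bigr]$. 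This interpolates between the two fluid slopes over a window of width of order $N^{-1/2}$. At $\tau_c$, where $\mathbb{P}(S_N\ge M)\approx\tfrac12$, it is of order $N$ unless $\mathbb{E}[r\mid\hat r^A=q_A(\tau_c)]=\tfrac12\tilde R(\tau_c)$. Shifting the threshold by $O(N^{-1/2})$ therefore gains order $\sqrt N$, so generically $\max_\tau\obj(\tau)-\obj(\tau^{*A})\asymp\sqrt N$.
\end{itemize}

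Drop the fallback and commit to the normalized statement $\frac1N\bigl|\max_\tau\obj(\tau)-\obj(\tau^{*A})\bigr|\to0$, proved by your items (i) and (ii).

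\begin{itemize}
\item For (i), use $\bigl|\mathbb{E}\min\{S_N,M\}-\min\{\mathbb{E}S_N,M\}\bigr|\le\mathbb{E}|S_N-\mathbb{E}S_N|\le\sqrt N/2$. This holds uniformly in $\tau$ and in $M$.
\item For (ii), make it uniform by working with $(1-\tau)\mathbb{E}[r\mid\hat r^A\ge q_A(\tau)]=\mathbb{E}[r\,\mathbf 1\{\hat r^A\ge q_A(\tau)\}]$, which is uniformly continuous on $[0,1]$. The conditional mean itself need not be uniformly continuous up to $\tau=1$.
\end{itemize}

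With $\sup_\tau|\obj-\objfluid|=o(N)$ in hand, you do not even need $M/N$ to converge. For every $(M,N)$, $\tau^{*A}(M/N)$ maximizes $\objfluid(\cdot\mid A,M,N,p_0,\Delta_P)$ exactly.
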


\subsection{Suboptimality of Prediction-based Thresholds}
\label{sec:subopt_prediction}
A natural question arises: can prediction-based thresholds, which are commonly used in practice, achieve optimality? These thresholds are appealing because they are built on predictive metrics and guarantee certain levels of predictive performance. They are also operationally simple, since they don't depend on operational parameters that may be difficult to estimate or may change over time. However, this apparent simplicity comes at a cost.

As noted in \Cref{sec:prediction_based_threshold}, these thresholds are chosen solely based on prediction accuracy. They ignore both operational capacity $M$ and behavioral responses $(p_0, \Delta_P)$. 
Consequently, they neither expand outreach to prevent underutilization when capacity is abundant, nor adjust outreach to prevent cannibalization when cannibalization is strong. We show that prediction-based thresholds are generally suboptimal (proof in Appendix~\ref{sec:proof_suboptimality_prediction_based_thresholds}).

\medskip

\begin{theorem}[Suboptimality of prediction-based thresholds]
\label{thm:prediction_based_threshold_suboptimal}
Consider a prediction-based threshold $\tau_{A}$ as defined in \Cref{def:prediction-basedthresholds} for algorithm $A$. 
\begin{enumerate}[(i)]
  \item \emph{When capacity ratio $M/N$ varies}: if the prediction-based threshold does not concide with the score-optimal threshold, i.e., $\tau_A \neq \tau^{*A}_{\mathrm{score}}(p_0,\Delta_P)$, the gap in the objective function equals zero for at most one values of $M/N$, corresponding to cases where the optimal threshold coincides with the prediction-based threshold. If the prediction-based threshold coincides with the score-optimal threshold, i.e., $\tau_A = \tau^{*A}_{\mathrm{score}}(p_0,\Delta_P)$, the gap in the objective function equals zero if and only if $\tfrac{M}{N} \leq p_0 + \Delta_P (1-\tau^{*A}_{\mathrm{score}})$.
  \item \emph{When baseline request probability $p_0$ varies}: the gap in the objective function equals zero for at most two isolated points of $p_0$ in $[0, 1-\Delta_P]$, corresponding to cases where the optimal threshold coincides with the prediction-based threshold.
\end{enumerate}
\end{theorem}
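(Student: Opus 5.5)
The argument uses \Cref{thm:two_point_optimality} together with uniqueness of the maximizer of the fluid objective. The key observation is that $\tau_A$ depends only on predictive metrics and exogenous $c$, so it is fixed as $(p_0,\Delta_P,M/N)$ vary. The optimal threshold $\tau^{*A}=\min\{\tau^{*A}_{\mathrm{score}},\tau_c\}$ moves with these parameters.

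\textbf{Step 0: gap zero iff thresholds coincide.} We first show that the gap $\objfluid(\tau^{*A})-\objfluid(\tau_A)$ is zero exactly when $\tau_A$ is a maximizer. We then argue that the maximizer is unique, so this happens iff $\tau_A=\tau^{*A}$.

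Uniqueness comes from the shape of $\objfluid$.
\begin{itemize}
\item For $\tau>\tau_c$, capacity is not filled, so $\objfluid=N[p_0\mathbb{E}[r]+\Delta_P(1-\tau)\mathbb{E}[r\mid \hat r^A\ge q_A(\tau)]]$. Its derivative in $\tau$ is $-N\Delta_P\,\mathbb{E}[r\mid \hat r^A=q_A(\tau)]$. This is strictly negative whenever the marginal conditional mean is positive, which we would assume or derive from monotone calibration with nonnegative scores. So $\objfluid$ is strictly decreasing there.
\item For $\tau\le\tau_c$, $\objfluid=M\tilde R(\tau)$. By the lemma giving uniqueness of $\tau^{*A}_{\mathrm{score}}$ via the first-order condition, $\tilde R$ is strictly unimodal.
\end{itemize}
Hence the maximizer is unique.

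\textbf{Part (i).} Fix $p_0,\Delta_P$ and let $x=M/N$ vary. By the theorem, $\tau^{*A}(x)=\tau^{*A}_{\mathrm{score}}$ for $x\le x^*:=p_0+\Delta_P(1-\tau^{*A}_{\mathrm{score}})$. For $x>x^*$ it equals $\tau_c(x)$, which is strictly decreasing in $x$ on $(x^*,p_0+\Delta_P)$ and equals $0$ beyond.
\begin{itemize}
\item If $\tau_A\neq\tau^{*A}_{\mathrm{score}}$, coincidence is impossible on $x\le x^*$. On the strictly monotone piece it happens for at most one $x$, by injectivity of the linear map.
\item The edge case is $\tau_A=0$, where $\tau_c\equiv 0$ on $x\ge p_0+\Delta_P$. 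This makes the zero-gap set an interval, so we read the statement as concerning interior $\tau_A\in(0,1)$, or else note this caveat.
\item If $\tau_A=\tau^{*A}_{\mathrm{score}}$, the gap is zero exactly on $x\le x^*$. For $x>x^*$ we have $\tau_c(x)<\tau^{*A}_{\mathrm{score}}=\tau_A$, so uniqueness gives a positive gap.
\end{itemize}

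\textbf{Part (ii).} Now fix $M/N$ and $\Delta_P$ and vary $p_0\in[0,1-\Delta_P]$. Again $\tau_A$ is constant. We need the zero set of $g(p_0):=\min\{\tau^{*A}_{\mathrm{score}}(p_0),\tau_c(p_0)\}-\tau_A$ to be at most two isolated points.
\begin{itemize}
\item $\tau_c$ is nondecreasing in $p_0$: linear, strictly increasing where interior.
\item Differentiating the first-order condition for $\tilde R$ with the implicit function theorem (using \Cref{assump:conditional_smoothness}) should show $\tau^{*A}_{\mathrm{score}}(p_0)$ is strictly decreasing. It starts at $1$ when $p_0=0$, and larger $p_0$ means more cannibalization and hence wider outreach.
\item So $\tau^{*A}$ is increasing on $p_0<\bar p_0$ (the $\tau_c$ branch) and strictly decreasing on $p_0\ge\bar p_0$ (the score branch). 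A constant $\tau_A$ meets each strictly monotone branch at most once, giving at most two points.
\end{itemize}

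\textbf{Main obstacle.} The hardest step is proving strict monotonicity of $\tau^{*A}_{\mathrm{score}}$ in $p_0$. Write $\tilde R=(p_0\mu+\Delta_P G(\tau))/(p_0+\Delta_P(1-\tau))$ with $G(\tau)=(1-\tau)\mathbb{E}[r\mid \hat r\ge q(\tau)]$. The first-order condition becomes $m(\tau)=\tilde R(\tau)$, where $m(\tau)=\mathbb{E}[r\mid \hat r=q(\tau)]$. At the optimum, $\partial\tilde R/\partial p_0$ has the sign of $\mu-\tilde R<0$. Combined with strict concavity at the unique optimum, the implicit function theorem then yields $d\tau^*/dp_0<0$.

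A second obstacle is flat pieces of $\tau_c$ at $0$ or $1$: near $p_0\ge M/N$ we have $\tau_c=1$. These segments must be handled to keep the zero points isolated. We would handle them by showing $\tau^{*A}_{\mathrm{score}}<1$ for $p_0>0$, so the minimum is on the strictly decreasing branch there.
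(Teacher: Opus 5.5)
Your proposal is correct and follows essentially the paper's route: you reduce a zero gap to $\tau_A=\tau^{*A}$, then count intersections of the constant $\tau_A$ with the monotone branches of $\tau^{*A}=\min\{\tau^{*A}_{\mathrm{score}},\tau_c\}$, while also justifying two steps the paper only asserts — uniqueness of the fluid maximizer, and strict decrease of $\tau^{*A}_{\mathrm{score}}$ in $p_0$, which the paper attributes to a uniqueness lemma that does not establish it (your argument via the first-order condition $m(\tau)=\tilde R(\tau)$ with $\partial\tilde R/\partial p_0\propto\mathbb{E}[r]-\tilde R<0$ fills this correctly). Your $\tau_A=0$ caveat is legitimate and also affects part (ii): since $\tau_c\equiv 0$ whenever $M/N\ge p_0+\Delta_P$, the zero-gap set is an interval of $M/N$ in part (i), and a nondegenerate interval of $p_0$ in part (ii) when $M/N>\Delta_P$, which the paper's statement overlooks.
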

Without any knowledge of the system parameters, it's unlikely that a provider may choose a prediction-based threshold that accidentally coincides with the score-optimal threshold. However, even if they do, this threshold cannot adapt to changes in capacity or baseline request probability, and thus remains suboptimal for almost all other operating points.

For example, \Cref{fig:policies_comparison} illustrates this suboptimality across different operating conditions. In the top row, where capacity ratio $M/N$ varies with $p_0 = 0.1$ fixed, the prediction-based thresholds (such as $\tau_{A1}$) never coincide with the optimal threshold $\tau^*$. The relative gap ranges from approximately 5\% at $M/N \le 0.2$ and can reach up to 45\% when $M/N$ increase to 0.6. In the bottom row, where baseline request probability $p_0$ varies with capacity fixed at $M = 200$, prediction-based thresholds $\tau_{A1}$ again never coincides with the optimal threshold, and the relative gap ranges from around 5\% at $p_0 \ge 0.1$ to over 50\% when $p_0$ approaches 0.

\begin{figure}
    \centering
    \includegraphics[width=\linewidth]{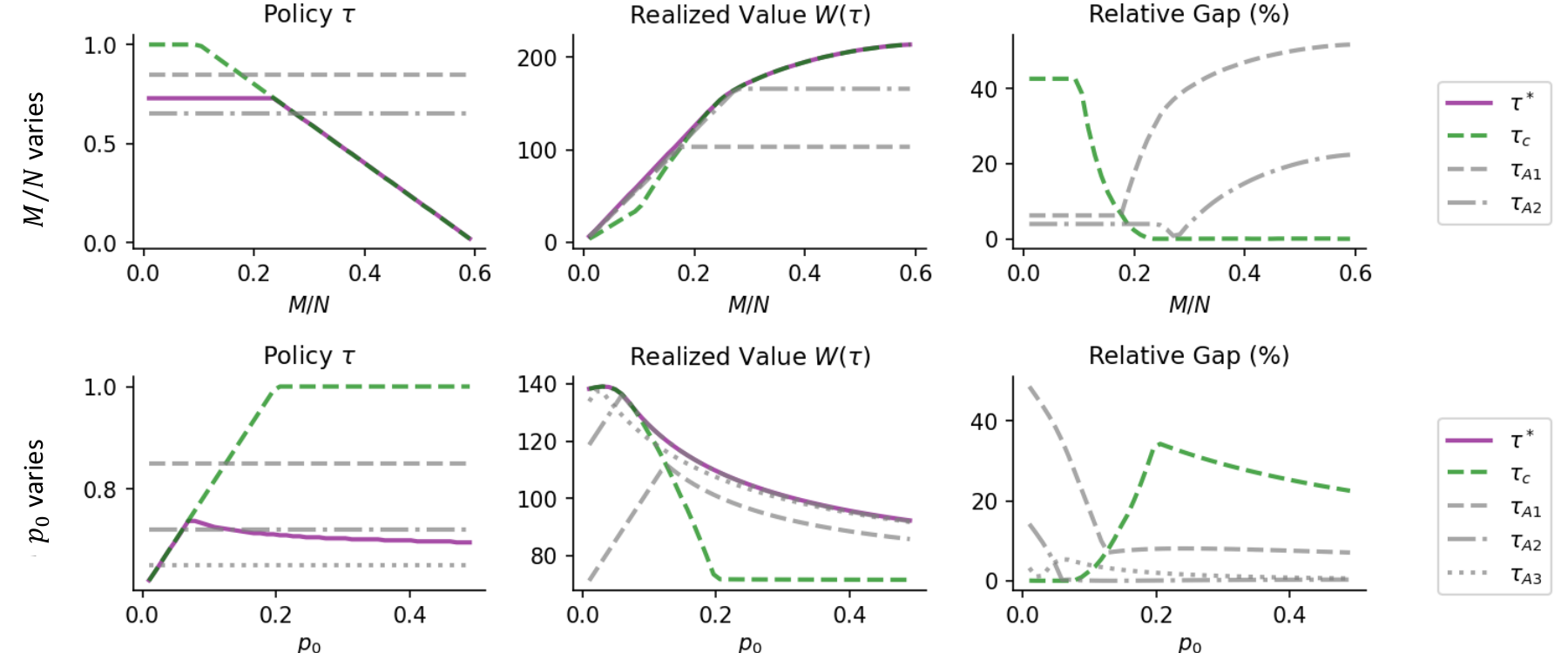}
    \caption{Thresholding behavior and performance under varying capacity and baseline request probability. The top row varies the capacity ratio $M/N$ with the baseline request probability fixed at $p_0 = 0.1$, while the bottom row varies the baseline request probability $p_0$ with capacity fixed at $M = 200$. Each column reports (from left to right) the threshold policy $\tau$, the resulting objective value $\objfluid(\tau)$, and the relative gap $(\objfluid(\tau^*) - \objfluid(\tau))/\objfluid(\tau^*)$. Curves compare the optimal threshold $\tau^*$, the capacity-matching threshold $\tau_c$, and several prediction-based thresholds. Parameters: $N = 1000$, $\Delta_P = 0.5$, and $\hat r = r \sim 0.7\,\mathrm{Beta}(2,10) + 0.3\,\mathrm{Beta}(8,2)$.}
    \label{fig:policies_comparison}
\end{figure}

\subsection{Suboptimality of Capacity-Matching Thresholds}
\label{sec:subopt_capacity}
A more operational-aware threshold is the capacity-matching threshold $\tau_c\!\left(\tfrac{M}{N}\mid p_0,\Delta_P\right)$, which ensures that all capacity is used. 
However, this threshold ignores the cannibalization effect and thus can be suboptimal when cannibalization is strong (proof in Appendix \ref{sec:proof_suboptimality_capacity_matching}).

\begin{theorem}[Suboptimality of the capacity-matching threshold]
\label{thm:capacity_threshold_suboptimal}
Consider the capacity-matching threshold $\tau_c\!\left(\tfrac{M}{N}\mid p_0,\Delta_P\right)$. For $p_0 \in [0, 1-\Delta_P]$ and $\tfrac{M}{N} \in [0, p_0+\Delta_P]$, the gap in the objective function equals zero if and only if the optimal threshold coincides with $\tau_c$. This occurs:
\begin{enumerate}[(i)]
  \item \emph{When capacity ratio $M/N$ varies}, $\tfrac{M}{N} \geq p_0 + \Delta_P (1-\tau^{*A}_{\mathrm{score}})$.
  \item \emph{When baseline request probability $p_0$ varies}, $p_0 \leq \bar{p}_0$ where $\bar{p}_0$ is defined in \Cref{def:bar_p0} satisfying
$\bar{p}_0 = \inf\left\{ p_0 \in [0,1-\Delta_P] : \tau^{*A}_{\mathrm{score}}(p_0,\Delta_P) \leq \tau_c\!\left(\tfrac{M}{N}\mid p_0,\Delta_P\right) \right\}$.
\end{enumerate}
The maximum relative gap across operating conditions occurs when $\frac{M}{N} \leq p_0$ (if $M/N$ varies) or when $p_0 = \frac{M}{N}$ (if $p_0$ varies). In both cases, the maximum relative gap equals
\[
\frac{\tilde{R}(\tau^{*A}_{\mathrm{score}}(p_0,\Delta_P)| A, p_0, \Delta_P)-\mathbb{E}[r]}{\tilde{R}(\tau^{*A}_{\mathrm{score}}(p_0,\Delta_P)| A, p_0, \Delta_P)}.
\]
\end{theorem}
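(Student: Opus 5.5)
Throughout I write $\tau^*_s=\tau^{*A}_{\mathrm{score}}(p_0,\Delta_P)$, $\tau_c=\tau_c(\tfrac{M}{N}\mid p_0,\Delta_P)$, and $\tilde R(\tau)$ for $\tilde R(\tau\mid A,p_0,\Delta_P)$. The plan rests on \Cref{thm:two_point_optimality}, which gives $\tau^*=\min\{\tau^*_s,\tau_c\}$, together with the uniqueness of $\tau^*_s$ (the lemma cited before that theorem). By that theorem the optimum is always $\tau^*$. By uniqueness, $\tau_c$ is optimal exactly when it maximizes $\objfluid$, which forces $\tau_c=\tau^*$. To get this uniqueness I would argue as follows. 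The capacity term $\nservedfluid(\tau)$ is strictly decreasing for $\tau>\tau_c$ and equals $M$ for $\tau\le\tau_c$. The score term $\tilde R$ is unimodal with unique maximizer $\tau^*_s$. Together these make $\objfluid$ have a unique maximizer. So the ``if and only if'' reduces to characterizing when $\tau_c\le\tau^*_s$.

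\textbf{Part (i).} With $p_0$ fixed, $\tau_c$ is nonincreasing in $M/N$. On $(p_0,p_0+\Delta_P)$ it is strictly decreasing and linear. The inequality $\tau_c\le\tau^*_s$ is then equivalent to $1-\tfrac{1}{\Delta_P}(\tfrac{M}{N}-p_0)\le\tau^*_s$, that is, $\tfrac{M}{N}\ge p_0+\Delta_P(1-\tau^*_s)$. The endpoints need a separate check. For $M/N\le p_0$ we have $\tau_c=1$. This equals $\tau^*$ only if $\tau^*_s=1$, which is consistent with the formula since then $p_0+\Delta_P(1-\tau^*_s)=p_0$. I would argue $\tau^*_s<1$ whenever $p_0>0$: the derivative of $\tilde R$ at $\tau=1$ has the sign of $\mathbb{E}[r]-\mathbb{E}[r\mid \hat r^A=\text{top}]<0$ by \Cref{assump:monotone_calibration}.

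\textbf{Part (ii).} Now fix $M/N$ and vary $p_0$. The condition $\tau^*_s(p_0)\le\tau_c(p_0)$ defines $\bar p_0$. I need the set $\{p_0:\tau^*_s(p_0)\le\tau_c(p_0)\}$ to be an up-set, so that equality with $\tau_c$ holds exactly for $p_0\le\bar p_0$, with continuity at $\bar p_0$ giving equality there. For this I would show two things. First, $\tau_c$ is nonincreasing in $p_0$ (immediate from \eqref{eq:capacity_matching_threshold}). Second, $\tau^*_s$ is nonincreasing in $p_0$, via implicit differentiation of the first-order condition: $\partial_\tau\tilde R=0$ rearranges to $q$-type conditions in which a larger $p_0\mathbb{E}[r]$ weight shifts the maximizer down, as the paper asserts in the $p_0$ discussion. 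Having both monotone in the same direction is not quite enough, however. I would instead show directly that $g(p_0)=p_0+\Delta_P(1-\tau^*_s(p_0))-\tfrac{M}{N}$ is increasing, since $g\ge 0$ is exactly $\tau^*_s\le\tau_c$ in the interior regime. $g$ is increasing because both of its terms increase. I expect this monotonicity of $\tau^*_s$ in $p_0$ to be the main obstacle. My first attempt is to write the FOC as $\mathbb{E}[r\mid\hat r^A=q(\tau)]=\tilde R(\tau)$, i.e. at the optimum the marginal flagged value equals the average served value. I would then use the envelope theorem: $\partial_{p_0}\tilde R$ at fixed $\tau$ has sign $\mathbb{E}[r]-\tilde R<0$, so the optimal value decreases in $p_0$, and the marginal score must drop, which forces $\tau^*_s$ down.

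\textbf{Maximum gap.} In the non-optimal region $\tau^*=\tau^*_s<\tau_c$ and $\tau^*_s$ fills capacity, so $\objfluid(\tau^*)=M\tilde R(\tau^*_s)$ while $\objfluid(\tau_c)=M\tilde R(\tau_c)$. The relative gap is therefore $1-\tilde R(\tau_c)/\tilde R(\tau^*_s)$. This quantity is maximized by making $\tilde R(\tau_c)$ smallest. On $[\tau^*_s,1]$, $\tilde R$ decreases toward $\tilde R(1)=\mathbb{E}[r]$, so the worst case is $\tau_c=1$. For (i) that occurs precisely when $M/N\le p_0$. For (ii), $\tau_c=1$ requires $p_0\ge M/N$, and I would argue that the supremum over $p_0$ is attained at $p_0=M/N$. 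This step requires checking that $\tilde R(\tau^*_s(p_0))$, and hence the ratio with $\mathbb{E}[r]$, is decreasing in $p_0$, which follows from the same envelope computation. Substituting $\tilde R(1)=\mathbb{E}[r]$ gives the stated formula.
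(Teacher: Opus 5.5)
\textbf{The gap: where the maximum relative gap sits in (ii).} Your step ``the ratio $1-\tilde R(\tau_c)/\tilde R(\tau^{*A}_{\mathrm{score}})$ is maximized at $\tau_c=1$'' holds only for fixed $p_0$. When $p_0$ varies, two things change: the curve $\tilde R(\cdot\mid A,p_0,\Delta_P)$ itself, and the denominator $V(p_0):=\tilde R(\tau^{*A}_{\mathrm{score}}(p_0)\mid A,p_0,\Delta_P)$. Moreover, $V$ is \emph{larger} for smaller $p_0$. So showing that $1-\mathbb{E}[r]/V(p_0)$ decreases on $p_0\ge M/N$ tells you nothing about $p_0\in(\bar{p}_0,M/N)$. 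There $\tau_c<1$, but the denominator is bigger.

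You must show that the relative gap is increasing on that interval; the paper uses \Cref{lem:gap_capacity_matching_p0} for this, stated for the absolute gap. It can be done as follows. Write $m(\tau)=\mathbb{E}[r\mid\hat r^A=q_A(\tau)]$.
\begin{itemize}
\item Along $\tau_c(p_0)$ the demand $p_0+\Delta_P(1-\tau_c)=M/N$ is constant. Keeping the explicit $p_0$-dependence of $\tilde R$, the total derivative is $\frac{d}{dp_0}\tilde R(\tau_c(p_0)\mid A,p_0,\Delta_P)=(\mathbb{E}[r]-m(\tau_c))/(M/N)$.
\item The envelope theorem gives $V'(p_0)=(\mathbb{E}[r]-V)/(p_0+\Delta_P(1-\tau^{*A}_{\mathrm{score}}))$.
\item In this region $m(\tau_c)>m(\tau^{*A}_{\mathrm{score}})=V>\mathbb{E}[r]$ and $M/N<p_0+\Delta_P(1-\tau^{*A}_{\mathrm{score}})$. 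Hence $V-\tilde R(\tau_c)$ is strictly increasing.
\item Since $V>0$ is decreasing, $(V-\tilde R(\tau_c))/V$ is increasing.
\end{itemize}
Combined with your argument on $p_0\ge M/N$, this puts the maximum at $p_0=M/N$.

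\textbf{Smaller issues.} First, $\tau_c$ is non\emph{decreasing} in $p_0$, with slope $1/\Delta_P$ in the interior, not nonincreasing. In that interior your $g$ equals $\Delta_P(\tau_c-\tau^{*A}_{\mathrm{score}})$. It is strictly increasing precisely because the two thresholds move in opposite directions. Your conclusion stands; only the ``same direction'' worry is misplaced.

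Second, uniqueness of the maximizer of $\objfluid$ does not follow from ``$\nservedfluid$ decreasing, $\tilde R$ unimodal'' alone. On $(\tau_c,\tau^{*A}_{\mathrm{score}})$ one factor falls while the other rises. What you need is $\objfluid'(\tau)=-N\Delta_P\,m(\tau)<0$ for $\tau>\tau_c$, which is in the proof of \Cref{thm:two_point_optimality}.

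Otherwise the structure is sound. Part (i) works, and arguing directly on the relative gap is cleaner than the paper's absolute-gap lemma. Your proof that $\tau^{*A}_{\mathrm{score}}$ decreases in $p_0$ is also sound: from the first-order condition $m(\tau^{*A}_{\mathrm{score}})=V(p_0)$ together with $V$ decreasing. The paper only asserts this monotonicity.
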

Importantly, the capacity-matching threshold is optimal only when underutilization is the sole concern: either capacity is abundant (large $\frac{M}{N}$) or baseline request probability is low (small $p_0$). Outside these regimes, cannibalization becomes the dominant force, and the capacity-matching threshold is suboptimal because it ignores this effect. 
The maximum relative gap quantifies the percentage value lost by using $\tau_c$ instead of $\tau^*$, and equals the fractional improvement in average requester efficacy achieved by the score optimal threshold compared to independent arrivals.

This suboptimality is illustrated in \Cref{fig:policies_comparison}. In the top row, where capacity ratio varies with $p_0 = 0.1$ fixed, the capacity-matching threshold $\tau_c$ coincides with the optimal threshold $\tau^*$ when $M/N \geq 0.2$, resulting in zero gap. However, when $M/N < 0.2$, cannibalization dominates and the relative gap increases, reaching up to 40\% at $M/N \leq 0.1$. In the bottom row, where baseline request probability varies with $M = 200$ fixed, the capacity-matching threshold is optimal when $p_0 \leq 0.1$ but becomes suboptimal as $p_0$ increases and the cannibalization effect becomes stronger. The maximum relative gap is 35\%, achieved when $p_0$ approaches 0.2, consistent with the theorem.

\subsection{Effectiveness of the Two-Point Threshold Under Prioritization}\label{sec:robustness_prioritization}
In the baseline model, service slots are allocated uniformly at random when demand exceeds capacity---the mechanism that induces cannibalization. In practice, providers may instead prioritize high-scoring requesters, for example by reserving some slots for top-ranked individuals and distributing the rest randomly. We ask whether the two-point threshold from \Cref{thm:two_point_optimality}, derived under random allocation, remains effective when providers depart from this assumption.

We model provider behavior through a mixture mechanism parameterized by $\beta_1 \in [0,1]$, where $\beta_1$ is the share of slots allocated by algorithmic prioritization and $\beta_0 = 1 - \beta_1$ the share allocated at random. This formulation nests three cases of interest: random allocation ($\beta_1 = 0$), full prioritization ($\beta_1 = 1$), and partial trust or visibility ($0 < \beta_1 < 1$), which serves as a tractable first step toward analyzing more complex prioritization rules.

The mechanism operates in two steps. Let $z_i^{(1)}$ and $z_i^{(2)}$ denote binary indicators for whether requester~$i$ is served under the prioritization and random components, respectively; the final service indicator is $z_i = z_i^{(1)} \lor z_i^{(2)}$. In the first step, $\lfloor\beta_1 M\rfloor$ slots are reserved and assigned under \emph{score-based prioritization} to the $\lfloor\beta_1 M\rfloor$ requesters with the highest predicted scores:
\begin{equation}
    \mathbb{P}(z_i^{(1)} = 1 \mid d_i = 1)
    = \mathbf{1}\!\left\{
        \hat r_i^A \text{ is among the top-}\lfloor\beta_1 M\rfloor
        \text{ of } \{\hat r_j^A : d_j = 1\}
      \right\}.
\end{equation}
In the second step, the remaining $\lfloor\beta_0 M\rfloor$ slots are distributed uniformly at random among all requesters not served in the first step:
\begin{equation}
    \mathbb{P}(z_i^{(2)} = 1 \mid d_i = 1,\, z_i^{(1)} = 0)
    = \min\!\left\{
        \frac{\lfloor\beta_0 M\rfloor}{\sum_j d_j(1 - z_j^{(1)})},\; 1
      \right\}.
\end{equation}

Because closed-form analysis under the mixture mechanism is intractable, we study its effect through simulation. We evaluate the performance of thresholding policies across a range of prioritization levels $\beta_1 \in [0,1]$ under algorithmic noise $\hat{r}^A = r + N(0, 0.1^2)$;\footnote{Predicted scores are clipped to $[0,1]$.} results are shown in \Cref{fig:prioritization_threshold_shift}.

\begin{figure}[h]
  \centering
  \includegraphics[width=\linewidth]{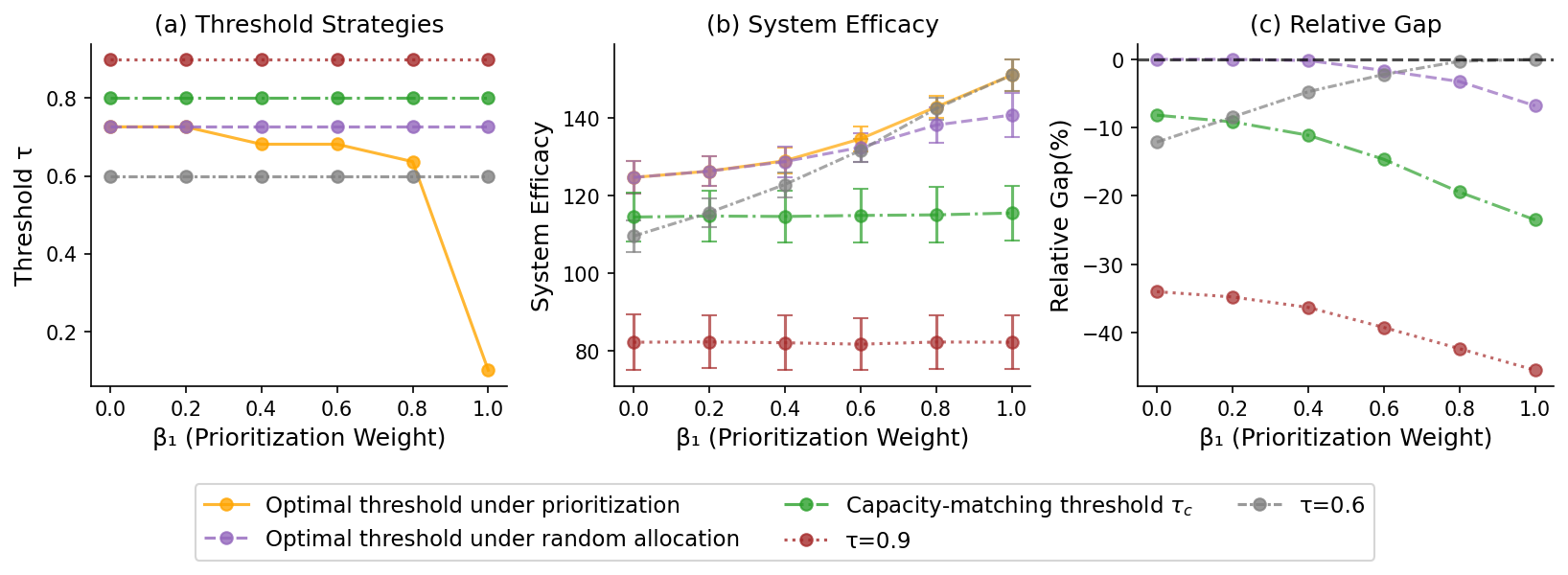}
  \caption{Performance of thresholding policies under score-based prioritization with weight $\beta_1$. The panel shows (left) the threshold policies, (center) system efficacy, and (right) relative gap with respect to the optimal threshold under prioritization, computed via exhaustive grid search over $\tau \in [0,1]$. The two-point threshold derived under random allocation (\Cref{thm:two_point_optimality}) closely tracks the prioritization optimum across a wide range of $\beta_1$, while fixed thresholds fail at one or both extremes. Parameters: $N = 1000$, $M = 200$, $p_0 = 0.1$, $\Delta_P = 0.5$, $r \sim 0.7\,\mathrm{Beta}(2,10) + 0.3\,\mathrm{Beta}(8,2)$.}
  \label{fig:prioritization_threshold_shift}
\end{figure}

The results reveal a fundamental tension that depends on how much the provider trusts and uses the algorithm. When $\beta_1$ is high, the prioritization stage filters out low-value requesters at the allocation step, so flagging more broadly becomes optimal because prioritization eliminates cannibalization. When $\beta_1$ is low, however, $\tau = 0.6$ performs poorly since cannibalization dominates and broad flagging wastes capacity on low-value requesters. No single fixed threshold performs well across the full range of $\beta_1$.

In practice, system designers rarely know $\beta_1$ with certainty. Providers may partially trust the algorithm, lack visibility into the scores, or be constrained from filtering by policy or fairness considerations. Our two-point threshold, derived under the random-allocation assumption ($\beta_1 = 0$), provides a principled hedge against this uncertainty: it closely tracks the optimal threshold under prioritization across a wide range of $\beta_1$ and avoids the failures of fixed thresholds at either extreme. The capacity-matching threshold $\tau_c$ remains suboptimal throughout, consistent with \Cref{thm:capacity_threshold_suboptimal}.

\smallskip

Taken together, the results in this section characterize optimal thresholding policies and establish their practical relevance. \Cref{thm:two_point_optimality} shows that the optimal threshold balances capacity underutilization against cannibalization and takes a simple two-point form as the minimum of the score-optimal and capacity-matching thresholds. \Cref{thm:prediction_based_threshold_suboptimal,thm:capacity_threshold_suboptimal} further establish that neither prediction-based nor capacity-matching thresholds alone achieve optimality across operating conditions: the former ignores operational constraints, while the latter ignores cannibalization when $p_0 > 0$. The two-point characterization unifies both effects by automatically selecting the binding constraint. The robustness analysis in this subsection strengthens the prescriptive value of this characterization: the two-point threshold remains effective even when providers partially prioritize high-scoring requesters.

%% file: section/algo_selection.tex
\section{Algorithm Selection}\label{sec:algorithm_selection}
We now turn to our second goal of characterizing optimal algorithm selection. In many settings, providers have access to multiple predictive models.
that can be used to flag individuals for outreach. 
This raises a fundamental question: 
given multiple predictive models, which should a provider deploy to maximize system efficacy? Typically, this selection is driven by metrics of prediction accuracy such as the Mean Squared Error (MSE) for regression problems predicting continuous variables and the Area Under the ROC Curve (AUC) for binary classification problems. Analogous to our findings in Section \ref{sec:thresholding} that prediction-based thresholds are suboptimal, we show in \Cref{sec:misalignment_auc} that algorithm selection based solely on these prediction accuracy metrics is also suboptimal.

As an illustration, we focus on settings with the outcome of interest is binary and the true score $r_i$ represents the probability of the positive outcome. Formally, 
\begin{assumption}[Binary Outcome]\label{assump:binary_outcome}
For each subject $i$, the outcome $Y_i \in \{0,1\}$ is a Bernoulli random variable with $\mathbb{P}(Y_i=1 \mid r_i) = r_i$. The true score $r_i$ represents the probability of the positive outcome.
\end{assumption}

\subsection{Misalignment of AUC with System Efficacy}\label{sec:misalignment_auc}
AUC is a widely used metrics for evaluating and comparing predictive algorithms. It measures the probability that a randomly chosen positive instance (with $Y=1$) receives a higher score than a randomly chosen negative instance (with $Y=0$). Under \Cref{assump:binary_outcome}, the AUC of algorithm $A$ can be written as
\begin{equation}\label{eq:auc}
\mathrm{AUC}_{A}
= \frac{1}{1 - \mathbb{E}(r)}
\left[
\int_0^1 \mathrm{TPR}_{A}(q_A(\tau))\, d\tau
- \frac{1}{2}\mathbb{E}(r)
\right].
\end{equation}
The appeal of AUC is intuitive: in many applications, providers face varying capacity across different operating contexts yet must select a single algorithm that performs well under all conditions. For example, in an early warning system within a hospital network, intervention capacity may differ across clinical units; in AI-assisted hiring, the number of available interviewers fluctuates across seasons. Because the operating threshold may not be known at the time of algorithm selection, AUC's aggregation across all thresholds seems like a natural way to hedge against this uncertainty. We show, however, that this intuition is incorrect: AUC-based selection can diverge systematically from the algorithm that maximizes system efficacy.

Under \Cref{assump:binary_outcome}, the classification accuracy is connected to the TPR of algorithm $A$ via the expected true scores among the flagged group:
\begin{equation}\label{eq:expected_true_score_tpr}
  \mathbb{E}[r\mid \hat r^A \ge {q}_A(\tau)] = \frac{\mathrm{TPR}_{A}\left(q_A\left(\tau\right)\right) \mathbb{E}(r)}{1-\tau}, \quad \forall \tau \in [0,1].
\end{equation}
AUC is correct in using TPR as a measure of classification quality. 
However, AUC diverges in which thresholds it considers: AUC weights every operating point equally, ignoring how capacity constraints and behavioral responses determine which thresholds the system will actually deploy.

\begin{figure}[t]
  \centering
  \includegraphics[width=0.4\textwidth]{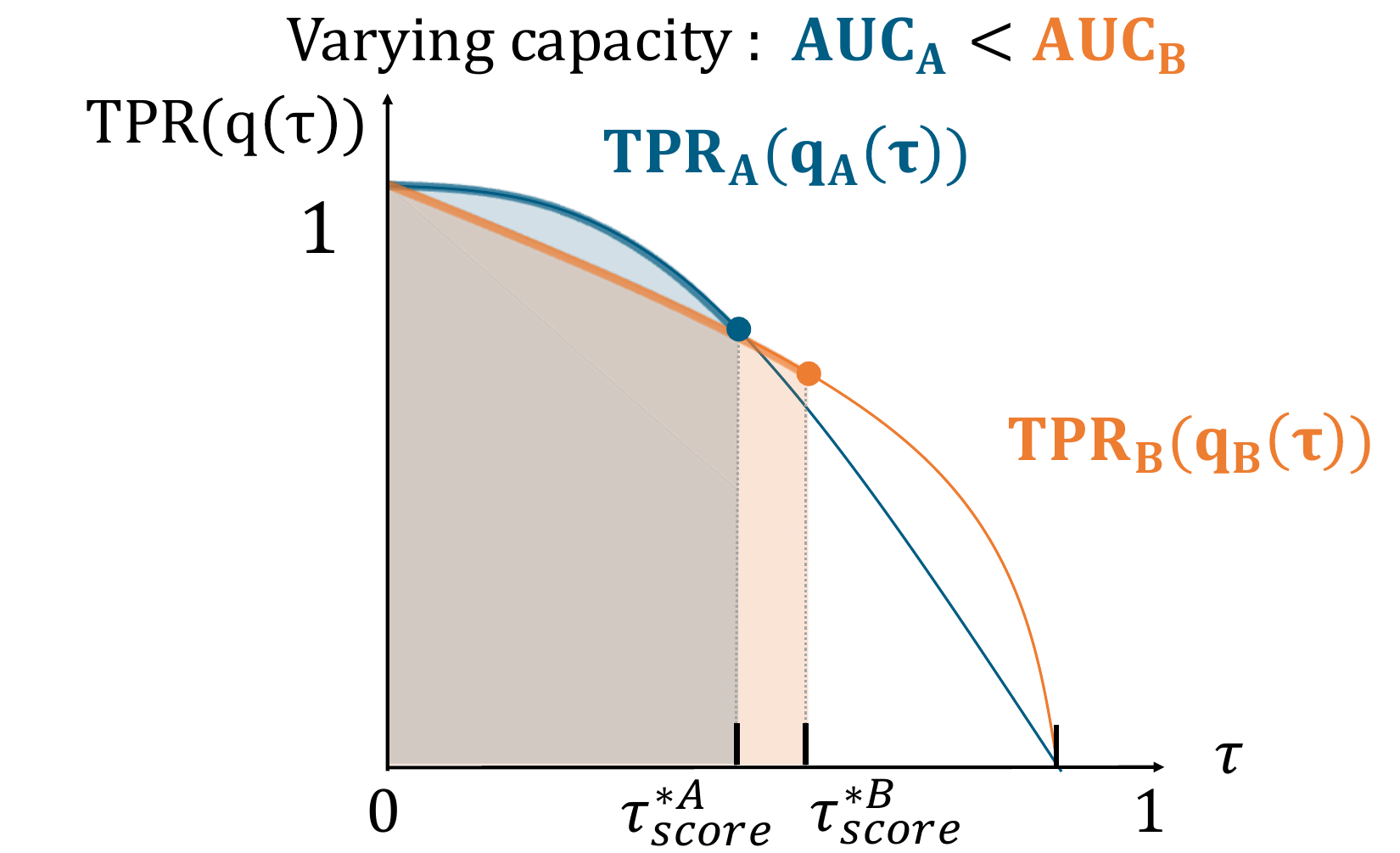}
  \caption{Misalignment between AUC and system efficacy. Algorithm $B$ achieves higher overall AUC, but algorithm $A$ dominates over the operationally relevant threshold range $[0, \tau^{*A}_{\mathrm{score}}]$ (shaded).}
  \label{fig:misalignment_auc}
\end{figure}
To build intuition for this suboptimality, recall that the optimal threshold never exceeds its score-optimal threshold $\tau^{*A}_{\mathrm{score}}(p_0, \Delta_P)$, regardless of the capacity ratio $M/N$. System efficacy therefore depends on TPR only for $\tau \leq \tau^{*A}_{\mathrm{score}}$; performance at higher thresholds is irrelevant because the system never operates there. For example, in \Cref{fig:misalignment_auc}, although algorithm $B$ achieves higher overall AUC, algorithm $A$ achieves higher TPR over the operationally relevant range (shaded region). AUC fails to reflect this, instead crediting algorithm $B$ for performance at thresholds that will never be deployed. Consequently, selecting the highest-AUC algorithm may not maximize system efficacy.

To formalize this observation, we define the \emph{efficacy-optimal algorithm}, which maximizes system efficacy across the distribution of operating capacity ratios the provider expects to operate at.

\begin{definition}[Efficacy-optimal algorithm]
\label{def:outcome_optimal_varying_capacity}
Given a set of candidate algorithms $\mathcal{A}$ and behavioral parameters $p_0$ and $\Delta_P$, suppose the provider operates under capacity ratios $\rho = M/N$ distributed according to $\mu$.
Under \Cref{assump:joint_density,assump:marginal_positive,assump:conditional_smoothness,assump:monotone_calibration,assump:binary_outcome}, the efficacy-optimal algorithm $A^{*}(\mu \mid p_0,\Delta_P)$ is given by  
\[
A^{*}(\mu \mid p_0,\Delta_P)
=
\arg\max_{A \in \mathcal{A}}
\int
\objfluid\!\left(
\tau^{*A}(\rho \mid p_0,\Delta_P)\mid A,\rho,p_0,\Delta_P
\right)\, d\mu(\rho),
\]
where $\tau^{*A}(\rho \mid p_0,\Delta_P)$ denotes the optimal threshold given $M/N=\rho$ for algorithm $A$. 
Under \Cref{assump:binary_outcome}, this simplifies to 
\begin{equation*}
  A^{*}\left(\mu \mid  p_0, \Delta_P\right) = \arg\max_{A \in \mathcal{A}} \int \rho \cdot \frac{p_0 + \Delta_P \mathrm{TPR}_{A}\left(q_A\left(\tau^{*A}\left(\rho\mid  p_0, \Delta_P\right)\right)\right)}{p_0 + \Delta_P\,\left(1-\tau^{*A}\left(\rho\mid  p_0, \Delta_P\right)\right)} d\mu(\rho).
\end{equation*}
\end{definition}

The efficacy-optimal algorithm depends on how each algorithm performs \emph{at its optimal threshold} across the capacity conditions the system will be deployed. We illustrate the misalignment of AUC with the efficacy-optimal algorithm with a concrete example.

\begin{example}\label{ex:misalignment_vary_capacity}
Consider two algorithms $A$ and $B$ with crossing ROC curves, as illustrated in \Cref{fig:misalignment_auc}. Based on AUC alone, a system designer would select algorithm $B$. However, this choice can be suboptimal when operational constraints restrict the system to a narrow range of decision thresholds. 
Suppose capacity varies uniformly over $[\bar m^L,\bar m^U]$, and that capacity is sufficiently abundant that the system always operates at the capacity-matching threshold. 
Under these conditions and through simple algebra, \Cref{def:outcome_optimal_varying_capacity} implies that the efficacy-optimal algorithm maximizes the integral of TPR over the operating thresholds:
\[
A^{*}\left(\mu \mid  p_0, \Delta_P\right) = \arg\max_{A \in \mathcal{A}} \int_{\tau_c(\bar m^U \mid p_0, \Delta_P)}^{\tau_c(\bar m^L \mid p_0, \Delta_P)}
\mathrm{TPR}_{A}\!\left(q_A(\tau)\right) \, d\tau .
\]
Crucially, this objective depends only on how well the algorithm performs over the threshold range that capacity constraints actually induce. As illustrated in \Cref{fig:misalignment_auc}, an algorithm with lower AUC can achieve higher TPR in a certain range of thresholds.
Suppose this operating range coincides with the region where algorithm $A$ dominates, i.e., 
\(\mathrm{TPR}_{A}\!\left(q_A(\tau)\right) \geq \mathrm{TPR}_{B}\!\left(q_B(\tau)\right), \, \forall \, \tau \in [\tau_c(\bar m^U \mid p_0, \Delta_P), \tau_c(\bar m^L \mid p_0, \Delta_P)],
\)
then $A$ is efficacy-optimal despite having lower AUC. 
\end{example}
This example motivates the development of an \emph{operational} performance metric that aligns algorithm selection with the capacity constraints and resulting thresholds deployed in practice.

\subsection{Alternative Metric for Algorithm Selection (\operationalmetricfull)}\label{sec:opi}
The preceding section suggests a natural alternative to AUC: rather than aggregating performance uniformly across all thresholds, we can weight by the capacity distribution and evaluate each algorithm at its corresponding optimal threshold. We call this the \operationalmetricfull{} (\operationalmetric).

\begin{definition}[Operational Metric: \operationalmetric]
Given an algorithm $A$, behavioral parameters $p_0$ and $\Delta_P$, and a distribution $\mu$ over capacity ratios $\rho=M/N$, we define \emph{\operationalmetricfull} to be
\begin{align}\label{eq:opi}
\operationalmetric\left(A| \mu, p_0, \Delta_P\right)
&=\int \rho \cdot \frac{p_0+\Delta_P\,\mathrm{TPR}_{A}\left(q_A\left(\tau^{*A}\left(\rho\mid  p_0, \Delta_P\right)\right)\right)}
{p_0+\Delta_P\,(1-\tau^{*A}\left(\rho\mid  p_0, \Delta_P\right))}\;d\mu(\rho), \nonumber
\end{align}
where $q_A(\tau)$ is the population $\tau$-quantile of scores from $A$ and $\tau^{*A}\left(\cdot| p_0, \Delta_P\right)$ is the optimal threshold characterized in \Cref{thm:two_point_optimality}.
\end{definition}

Returning to \Cref{ex:misalignment_vary_capacity}, where AUC-based selection diverges from system efficacy, under a uniform capacity distribution over $[\bar{m}^L, \bar{m}^U]$, \operationalmetricfull{} simplifies to
\[
\operationalmetric\left(A| \mu, p_0, \Delta_P\right) = \frac{\Delta_P}{\bar m^U - \bar m^L} \int_{\tau_c(\bar m^U \mid p_0, \Delta_P)}^{\tau_c(\bar m^L \mid p_0, \Delta_P)}
\left( p_0 + \Delta_P \mathrm{TPR}_{A}\!\left(q_A(\tau)\right) \right) d\tau.
\]
Since algorithm $A$ has higher TPR than $B$ over this operating range, we have $\operationalmetric(A \mid \mu, p_0, \Delta_P) > \operationalmetric(B \mid \mu, p_0, \Delta_P)$, correctly identifying the efficacy-optimal algorithm.

More broadly, this result establishes \operationalmetric{} as a principled replacement for AUC in capacity-constrained settings: rather than relying on a metric that ignores operational and behavioral constraints, system designers can use \operationalmetric{} to select algorithms that maximize realized system efficacy. 

We note, however, that \operationalmetric{} requires additional inputs: the behavioral parameters $p_0$ and $\Delta_P$, the capacity distribution $\mu$, and the resulting optimal threshold $\tau^{*A}(\cdot)$. Estimating these quantities may require historical operational data. This creates a practical tradeoff: AUC is simpler to compute but ignores operational context, while \operationalmetric{} aligns with system efficacy but requires additional knowledge of the deployment environment. When reliable estimates of behavioral and capacity parameters are available, \operationalmetric{} provides a more accurate criterion for algorithm selection. 

%% file: section/case_study.tex
\section{Case Study: Healthcare Sepsis Early Warning System}\label{sec:case_study}

We apply our framework to a sepsis early warning program using data from NewYork-Presbyterian (NYP). 
Recall that \Cref{sec:thresholding} establishes that prediction-based thresholds are suboptimal in the presence of cannibalization and capacity constraints. However, early warning systems in practice often use prediction-based thresholds that ignore capacity constraints in the clinical workflow. Moreover, AUC remains the primary metric used to evaluate predictive algorithms \citep{romero2015c}. In this case study, we first quantify the performance gap between prediction-based thresholds and the optimal thresholds characterized in \Cref{thm:two_point_optimality} using historical data. We then demonstrate that an algorithm with higher AUC can yield lower system efficacy when deployed at optimal thresholds across varying capacity conditions. 

\subsection{Dataset}

Our dataset consists of de-identified electronic health record (EHR) data from 85{,}160 patient encounters across 8 adult acute care hospitals in the New York Presbyterian (NYP) hospital system, comprising over 1.8 million time-stamped clinical observations on 70{,}067 unique patients. All encounters were retrospectively assigned periodic risk scores according to the Epic Sepsis Model \citep{henry2015targeted}. The hospital system was considering whether to utilize this proprietary score to initiate alerts for an early-warning system deployed at NYP\footnote{Currently, a SIRS-based alert is being used \citep{chang2025rapid}. 
}. For each observation, the dataset includes a rich set of features drawn from both dynamic and static sources: vital signs, laboratory values, medication administration records, patient demographics, admission characteristics, and a comprehensive set of 29 comorbidity indicators. We define sepsis onset according to the Centers for Disease Control and Prevention (CDC) criteria, which combines the timing of organ dysfunction with clinical suspicion of infection (blood culture or antibiotic order).  Full details on data preparation, target definition, and the complete covariate list are provided in \Cref{app:case_study}.

\subsection{Experimental Setup}
We consider a sepsis early warning system (EWS) that screens patients for sepsis risk using electronic health record (EHR) data. When a patient's risk score exceeds a threshold, an alert notifies a coordination nurse, who can only check in a limited number of patients per shift.

This clinical workflow maps directly to our model. The population consists of $N$ patients monitored during a shift, each receiving a risk score $\hat{r}$ from the algorithm. The flagging threshold $\tau$ determines which patients trigger an alert. The capacity-constrained resource is the coordination nurse, who can check in at most $M$ patients per shift; this capacity is shared between reviewing new alerts and monitoring patients with ongoing treatment, allocated on a first-come, first-served basis.\footnote{In practice, the nurse may prioritize patients with ongoing treatment over new alerts. Under such prioritization, cannibalization effects would be stronger, making our model's estimates an upper bound on system efficacy.} Given the importance of timely treatment, we consider a successful outcome as whether a patient who needs sepsis treatment begins within the current shift.\footnote{Patients who need treatment will always eventually receive it, but there may be treatment delays when capacity is constrained and the nurse is unable to check on all patients (flagged or independently identified) within the given shift.}
The funnel proceeds in three stages: (i) the algorithm flags high-risk patients based on the threshold; (ii) alerts nudge the coordination nurse to check on flagged patients, subject to capacity; and (iii) patients who are checked  receive timely treatment. Importantly, the nudge is directed at the provider: an alert increases the probability that the nurse checks on a patient. The parameter $p_0$ captures this baseline probability that the nurse is prompted to check on a patient requiring treatment independently of any algorithm-generated alert. The parameter $\Delta_P$ represents the incremental lift from being flagged, so flagged patients are checked in with probability $p_0 + \Delta_P$.
\paragraph{Predictive Algorithms}
We use the dataset to predict sepsis onset within the next 8 hours to allow sufficient time for clinical response while avoiding excessive noise from longer-term predictions. We compare two sepsis prediction algorithms: the Epic model \citep{henry2015targeted}, a widely used sepsis early warning score, and an XGBoost model that we train on EHR data (details in \Cref{app:model}). \footnote{The XGBoost model is not intended as a clinical recommendation. We use it as an example to show that AUC-based selection can diverge from efficacy-optimal selection.} 
The Epic model achieves an AUC of 0.826, while the XGBoost model achieves an AUC of 0.806 (\Cref{fig:sepsis_auc}). 
\begin{figure}
    \centering
    \includegraphics[width=0.7\linewidth]{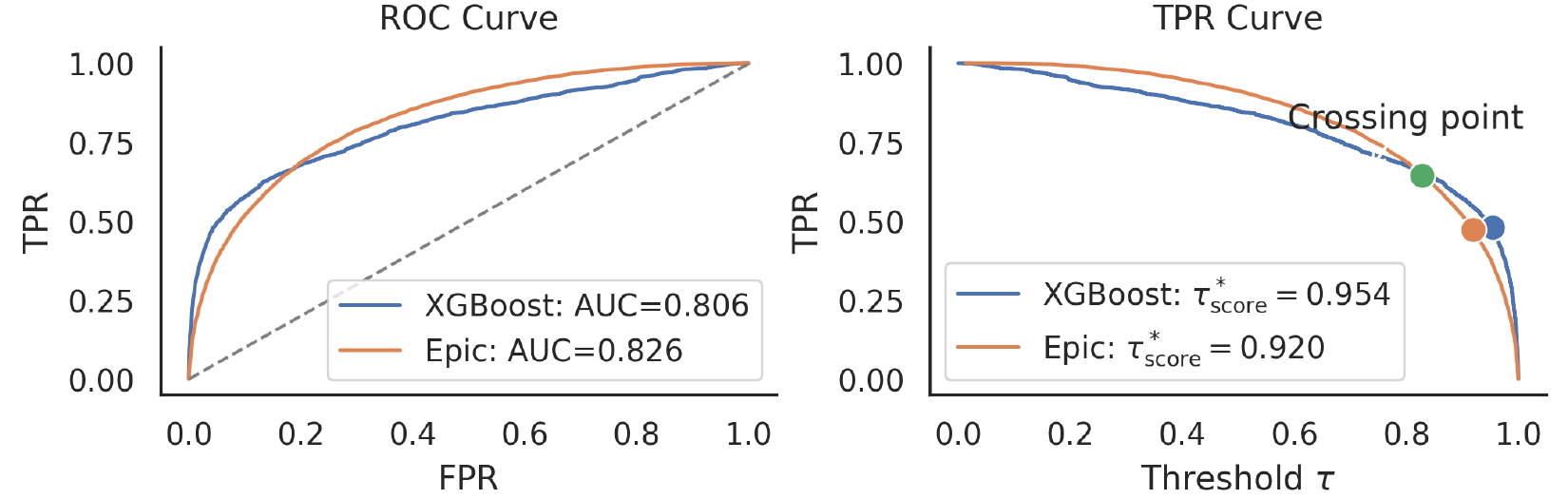}
    \caption{Comparison of Epic and XGBoost sepsis prediction models. Left: ROC curves showing that Epic achieves higher AUC (0.826 vs.\ 0.806). Right: TPR as a function of threshold $\tau$, illustrating that the curves cross. The score-optimal threshold is $\tau^*_{\mathrm{score}} = 0.954$ for XGBoost and $\tau^*_{\mathrm{score}} = 0.920$ for Epic.}
    \label{fig:sepsis_auc}
\end{figure}

\paragraph{Estimation of Operational and Behavioral Quantities}
We estimate operational parameters through consultation with our clinical collaborator at NYP. A coordination nurse typically covers $N = 200$ patients during a shift. The baseline probability that a patient needs to be checked  for missing treatment by the coordination nurse is $p_0 = 0.1$. Flagging a patient increases their check-in probability to $p_0 + \Delta_P = 0.6$. Since the predictive algorithms predict sepsis onset within 8 hours, service is considered received if and only if the coordination nurse checks in the patient within 8 hours of receiving an alert. Given the behavioral parameters, we calculate the score optimal thresholds for each algorithm: $\tau^*_{\mathrm{score}} = 0.954$ for XGBoost and $\tau^*_{\mathrm{score}} = 0.920$ for Epic.

\subsection{Simulation}
We simulate the performance of different thresholding policies and predictive algorithms as follows. For each trial, we sample $N = 200$ patients from the dataset to form a cohort. For each patient, we record their predicted risk score from each algorithm and whether they develop sepsis within the next 8 hours. We then apply different thresholding policies to determine which patients are flagged by the EWS. The nurse can check  at most $M$ patients per shift; we vary $M \in [10, 100]$ to examine performance across different capacity levels. We average over 8000 iterations to estimate the expected number of true positive sepsis cases identified under each policy and algorithm. The results are shown in \Cref{fig:sepsis}.

We first consider threshold selection for a fixed algorithm $A$. \Cref{fig:sepsis_threshold_comparison} compares the efficacy-optimal threshold characterized in \Cref{thm:two_point_optimality} with the capacity-matching threshold and two prediction-based thresholds ($\tau = 0.6/0.8$), at varying values of $M$.
Similar to the theory in \Cref{sec:thresholding}, we see that the capacity of the system greatly influences the (sub)optimality of various thresholds. 
For both Epic and XGBoost, prediction-based thresholds can incur large losses in system efficacy relative to the optimal threshold, with relative gaps reaching up to 40\%. 
The gap is non-monotonic in $M$: it closes when prediction-based thresholds coincide with the capacity-matching threshold.
As capacity increases further, the gaps increase due to underutilization: prediction-based thresholds become too conservative, leaving capacity unfilled. The capacity-matching threshold performs well when capacity is high, but becomes suboptimal in low-capacity regimes where cannibalization effects dominate.

\begin{figure}[t]
    \centering
    \begin{subfigure}[t]{0.65\linewidth}
        \centering
        \includegraphics[width=\linewidth]{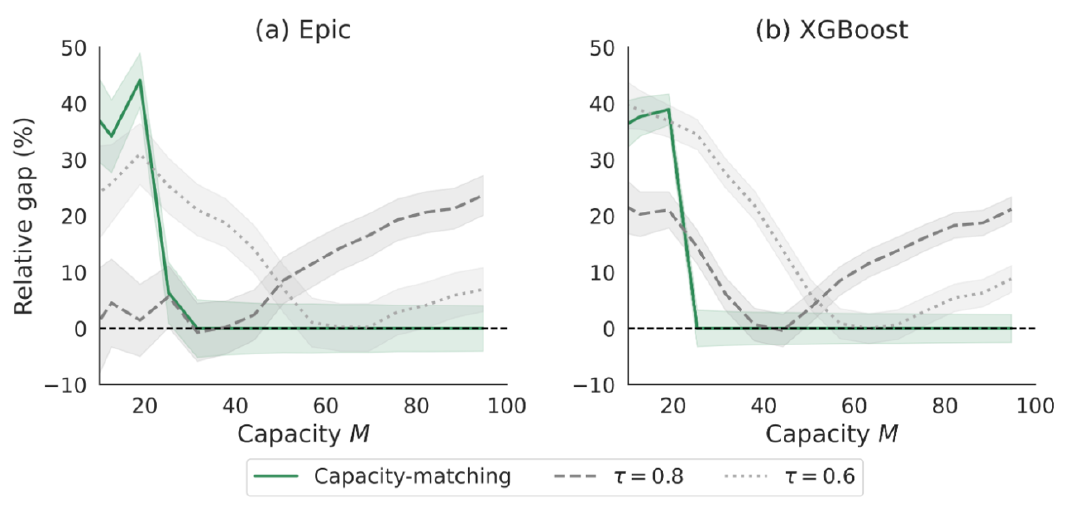}
        \caption{Suboptimality of thresholding policies.}
        \label{fig:sepsis_threshold_comparison}
    \end{subfigure}\hfill
    \begin{subfigure}[t]{0.34\linewidth}
        \centering
        \includegraphics[width=\linewidth]{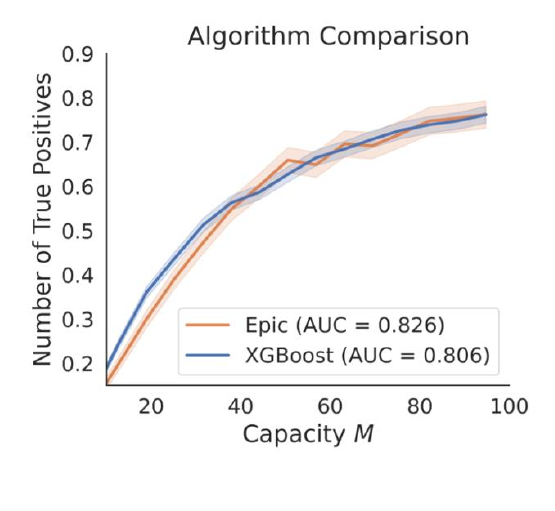}
        \caption{Algorithm comparison.}
        \label{fig:sepsis_algorithm_comparison}
    \end{subfigure}
    \caption{Threshold and algorithm comparison under varying capacity. (a) Relative gaps in system efficacy, measured by the number of true positive sepsis cases identified under different thresholding policies, compared to the optimal threshold. Suboptimal thresholds can lead to efficacy losses of up to $40\%$ in the worst case. 
    (b) Algorithm comparison under optimal thresholding. Despite having lower overall AUC, XGBoost achieves higher system efficacy than Epic when capacity $M \in [10,30]$ with differences exceeding one standard error. Parameters: $N=200$, $p_0=0.1$, $\Delta_P=0.5$. Shaded regions indicate $\pm$ one standard error across trials.}
    \label{fig:sepsis}
\end{figure}

We further examine whether the two-point optimal threshold remain effective when the coordination nurse partially or fully prioritizes high-risk patients, following the mixture allocation mechanism introduced in \Cref{sec:robustness_prioritization}. \Cref{fig:sepsis_prioritization} reports the relative gap in system efficacy compared to optimal thresholds under prioritization under $\beta_1 \in \{0.5, 1.0\}$. Consistent with \Cref{sec:robustness_prioritization}, the two-point threshold derived under random allocation tracks the prioritization optimum across both algorithms and a wide range of capacity levels. The exception is at very low capacity, where the prioritization stage filters out low-value requesters on its own and a lower threshold becomes optimal. 
\begin{figure}[h]
\centering
\includegraphics[width=\linewidth]{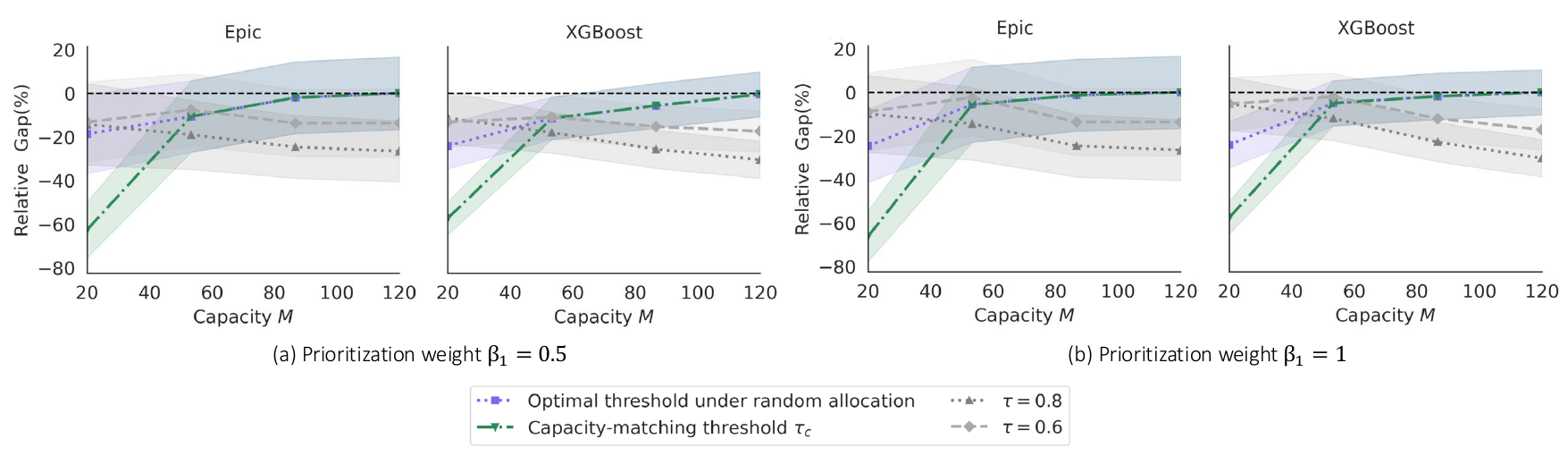}
\caption{Robustness of thresholding policies to score-based prioritization in the NYP sepsis case study. Left panel correspond to $\beta_1 = 0.5$ and right panel correspond to$\beta_1 = 1.0$. The optimal threshold under random allocation remains near-optimal across all capacity levels and both algorithms, where the optimal threshold under prioritization is computed via exhaustive grid search over $\tau \in [0,1]$. Fixed prediction-based thresholds ($\tau = 0.8$, $\tau = 0.6$) incur losses up to 50\% at low capacity ratios, irrespective of $\beta_1$. Parameters: $N = 200$, $p_0 = 0.1$, $\Delta_P = 0.5$. Shaded regions indicate $\pm$ one standard error across trials. 
}
\label{fig:sepsis_prioritization}
\end{figure}

We now consider algorithm selection, assuming each algorithm is deployed at its respective optimal threshold. \Cref{fig:sepsis_algorithm_comparison} reports system efficacy as a function of capacity $M$. Despite its lower AUC (0.806 vs.\ 0.826), the XGBoost model performs better than Epic when capacity is limited ($M \in [10,30]$, corresponding to $\tfrac{M}{N} \in [0.05,0.15]$). Over this range, XGBoost achieves an average improvement of 13.1\% in system efficacy, with a maximum improvement of 22.8\%. 
Consistent with this pattern, \operationalmetric{} is higher for XGBoost than for Epic (0.39 vs.\ 0.34) when the operating capacity is distributed as $\mu \sim \mathrm{Uniform}(0.05, 0.15)$.
When capacity is scarce, realized system efficacy is determined by how well an algorithm ranks patients within the high-risk region that competes for limited service slots. As shown in \Cref{fig:sepsis_auc}, XGBoost attains higher TPR than Epic over the range of thresholds relevant to low and mid-capacity regimes, from the crossing point to the respective score-optimal thresholds, which leads to higher realized efficacy when these thresholds are deployed.

These results have two practical implications. First, threshold selection should account for capacity constraints and behavioral responses. Prediction-based thresholds that ignore operational realities can incur substantial efficacy losses, up to 40\% in our simulations. Second, algorithm selection should be guided by performance over the capacity-induced operating range. In capacity-constrained settings, an algorithm with lower AUC may lead to higher system efficacy. Deployment-aware metrics such as \operationalmetric{} provide a principled basis for navigating this tradeoff.

%% file: section/discussion.tex
\section{Discussion and Future Work}

In many AI-assisted intervention systems, the algorithm makes a prediction, humans are alerted to act, and capacity constraints determine who ultimately gets served. Standard practice focuses on building the most accurate predictor, assuming this will automatically lead to optimal outcomes. Our results show otherwise: both threshold selection and algorithm choice must account for the operational funnel from prediction to service delivery.
A practical strength of our framework is that it is immediately actionable. Practitioners can improve system efficacy using existing AI tools simply by adjusting thresholds based on capacity and behavioral parameters, without retraining. When choosing among candidate algorithms, evaluating with \operationalmetric{} rather than AUC requires only additional knowledge of the operating capacity distribution and behavioral parameters, quantities typically estimable from operational data.

Our modeling assumptions suggest several directions for future work. 
We model baseline request probability and behavioral responses to nudges as homogeneous across the population. In practice, these parameters may vary across individuals and depending on the predicted scores. Incorporating such heterogeneity is a natural extension.

%% file: section/appendix_theory.tex
\section{Additional Theoretical Results}
\subsection{Accuracy of Fluid Approximation}
\label{proof:prop:fluid_limit_approximation}

\subsubsection{\texorpdfstring{Proof of \Cref{prop:fluid_limit}}{Proof of Proposition (Fluid Limit Approximation)}}
\label{proof:prop:fluid_limit_approximation}
\begin{proof}{Proof.}
The fluid model involves two approximations: (i) replacing the stochastic number of served requests $N(\tau) = \mathbb{E}[\min\{S_N, M\}]$ with its deterministic counterpart $\tilde{N}(\tau) = \min\{Np_1(\tau), M\}$, and (ii) replacing the efficacy $R(\tau)$ (which uses the empirical quantile $\hat q_A(\tau)$) with $\tilde{R}(\tau)$ (which uses the population quantile $q_A(\tau)$). Let $S_N = \sum_{i=1}^N d_i$ denote the total number of requests, where the individual request probability is $p_1(\tau) = p_0 + \Delta_P(1-\tau)$, so $\mathbb{E}[S_N] = N p_1(\tau)$.

We first prove the convergence $\obj(\tau) \to \objfluid(\tau)$ as $M, N \to \infty$. Note that 
\begin{equation}
  \obj\left(\tau \mid A, M, N, p_0, \Delta_P\right) = N\left(\tau \mid M, N, p_0, \Delta_P\right) \cdot R\left(\tau \mid A, p_0, \Delta_P\right).
\end{equation}
where
\begin{equation}\label{eq:expected_served_requests}
N\left(\tau \mid M, N, p_0, \Delta_P\right) = \mathbb{E}\left[\min\left\{\sum_{i=1}^N d_i, M\right\}\right],
\end{equation}
and
\begin{equation}\label{eq:average_efficacy}
R\left(\tau \mid A, p_0, \Delta_P\right) = \frac{p_0\,\mathbb{E}[r] + \Delta_P\,(1-\tau)\,\mathbb{E}[r \mid \hat{r}^A \ge q_A(\tau)]}{p_0 + \Delta_P\,(1-\tau)}.
\end{equation}

By the triangle inequality,
\[
|\objfluid(\tau) - \obj(\tau)| = |\tilde{N}(\tau)\,\tilde{R}(\tau) - N(\tau)\, R(\tau)| \leq \underbrace{\tilde{N}(\tau)\,|\tilde{R}(\tau) - R(\tau)|}_{\text{efficacy error}} + \underbrace{R(\tau)\,|\tilde{N}(\tau) - N(\tau)|}_{\text{demand error}}.
\]
\textit{Efficacy error.} By the Glivenko--Cantelli theorem, $\hat q_A(\tau) \to q_A(\tau)$ almost surely as $N \to \infty$. By \Cref{assump:conditional_smoothness}, $\mathbb{E}[r \mid \hat r^A \ge c]$ is continuous in $c$, so $R(\tau) \to \tilde{R}(\tau)$. Since $\tilde{N}(\tau) \leq M$, the efficacy error vanishes as $N \to \infty$.

\textit{Demand error.} Since $R(\tau)$ is bounded, it suffices to show $|\tilde{N}(\tau) - N(\tau)| \to 0$. We consider two cases.

\textit{Case 1: $\tau \leq \tau_c$.} Here $N p_1(\tau) \geq M$, so $\tilde{N}(\tau) = M$. The demand error satisfies
\[
|\tilde{N}(\tau) - N(\tau)| = M - \mathbb{E}[\min\{S_N, M\}] = \mathbb{E}\left[(M - S_N)^+\right] \leq M \cdot \mathbb{P}(S_N < M).
\]
Let $\delta = \frac{N p_1(\tau) - M}{N p_1(\tau)} > 0$. By the Chernoff bound,
\[
\mathbb{P}(S_N < M) = \mathbb{P}\left(S_N < (1-\delta)\mathbb{E}[S_N]\right) \leq \exp\left(-\frac{\delta^2}{2} N p_1(\tau)\right),
\]
which decays exponentially in $N$.

\textit{Case 2: $\tau > \tau_c$.} 
Here $N p_1(\tau) < M$, so $\tilde{N}(\tau) = N p_1(\tau)$. The demand error satisfies
\[
|\tilde{N}(\tau) - N(\tau)| = \mathbb{E}[S_N] - \mathbb{E}[\min\{S_N, M\}] = \mathbb{E}\left[(S_N - M)^+\right] \leq N \cdot \mathbb{P}(S_N > M).
\]
Let $\delta = \frac{M - N p_1(\tau)}{N p_1(\tau)} > 0$. By the Chernoff bound,
\[
\mathbb{P}(S_N > M) = \mathbb{P}\left(S_N > (1+\delta)\mathbb{E}[S_N]\right) \leq \exp\left(-\frac{\delta^2}{2+\delta} N p_1(\tau)\right),
\]
which also decays exponentially in $N$.
Combining the efficacy and demand error bounds, both vanish as $M, N \to \infty$.

We then prove that $\obj(\tau) \leq \objfluid(\tau)$ holds asymptotically.
The function $f(x) = \min\{x, M\}$ is concave. By Jensen's inequality,
\[
N(\tau) = \mathbb{E}[\min\{S_N, M\}] \leq \min\{\mathbb{E}[S_N], M\} = \tilde{N}(\tau).
\]
Since $R(\tau)$ is non-negative, we have $\obj(\tau) = N(\tau) \cdot R(\tau) \leq \tilde{N}(\tau) \cdot R(\tau)$. Because $R(\tau) \to \tilde{R}(\tau)$ as $N \to \infty$, the upper bound $\obj(\tau) \leq \objfluid(\tau)$ holds asymptotically.

\end{proof}

\subsection{Suboptimality of Prediction-based Thresholds}\label{sec:proof_suboptimality_prediction_based_thresholds}

The next two lemmas characterize how the gap behaves under varying capacity and behavioral parameters. 

\subsubsection{\texorpdfstring{\Cref{lem:gap_prediction_based_threshold_capacity}}{Lemma (Gap Prediction-based Threshold Capacity)}}\label{proof:lem:gap_prediction_based_threshold_capacity}

\begin{lemma}[Prediction-based gap behavior under varying capacity]
\label{lem:gap_prediction_based_threshold_capacity}
Given fixed $p_0$ and $\Delta_P$, $\tau^{*A}_{\mathrm{score}}(p_0, \Delta_P)$ and $\tau_{A}$ remain constant. There are two regimes for the gap behavior with respect to the capacity ratio $\frac{M}{N}$:
\begin{enumerate}[(a)]
  \item If $\tau_{A} \geq \tau^{*A}_{\mathrm{score}}(p_0, \Delta_P)$ (as illustrated by policy $\tau_{A1}$ in \Cref{fig:policies_comparison}), the gap increases monotonically with the capacity ratio $\frac{M}{N}$.
  \item If $\tau_{A} < \tau^{*A}_{\mathrm{score}}(p_0,\Delta_P)$, the gap initially increases with $\frac{M}{N}$ (as illustrated by policy $\tau_{A2}$ in \Cref{fig:policies_comparison}), then decreases once
  \[
    \mathbb{E}[r \mid \hat r^A = q_A(\tau_c\!\left(\tfrac{M}{N}\mid  p_0, \Delta_P\right))] 
    = \tilde{R}(\tau_{A}|A, p_0, \Delta_P)
    \quad \text{and} \quad 
    \tau_c\!\left(\tfrac{M}{N}\mid  p_0, \Delta_P\right) > \tau_{A}.
  \]
  The gap is equal to zero at the value of $M/N$ such that $\tau_A = \tau_c\!\left(\tfrac{M}{N}\mid  p_0, \Delta_P\right)$
  and increases again for larger $\frac{M}{N}$ such that $\tau_c < \tau_{A}$.
\end{enumerate}
\end{lemma}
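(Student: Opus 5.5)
We will work with the normalized fluid objective. Write $\rho = M/N$, $p_1(\tau) = p_0 + \Delta_P(1-\tau)$ and $m(t) = \mathbb{E}[r \mid \hat r^A = q_A(t)]$. Define
\[
H(\tau) = p_1(\tau)\tilde R(\tau) = p_0\mathbb{E}[r] + \Delta_P\int_\tau^1 m(t)\,dt .
\]
Then $\objfluid/N = \min\{p_1(\tau),\rho\}\tilde R(\tau) = \min\{H(\tau), \rho\tilde R(\tau)\}$. The gap is $G(\rho) = \objfluid(\tau^{*A}) - \objfluid(\tau_A)$ with $\tau_A$ and $\tau^{*A}_{\mathrm{score}} =: \tau_s$ fixed, and we compute it piecewise in $\rho$.

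\textbf{The two value curves.} By \Cref{thm:two_point_optimality}, set $\rho^* := p_1(\tau_s)$.
\begin{enumerate}[(i)]
\item For $\rho \le \rho^*$, the optimal value is $\rho\tilde R(\tau_s)$, linear with slope $\tilde R(\tau_s)$.
\item For $\rho \ge \rho^*$, the optimum is at $\tau_c(\rho)$ where capacity is exactly filled, so the value is $H(\tau_c(\rho))$. Since $d\tau_c/d\rho = -1/\Delta_P$, its slope is $m(\tau_c(\rho))$. The value is constant once $\rho \ge p_0 + \Delta_P$.
\item The prediction-based value is $\rho\tilde R(\tau_A)$ for $\rho \le p_1(\tau_A)$ and the constant $H(\tau_A)$ afterwards.
\end{enumerate}
Two facts drive all sign arguments. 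First, $m$ is strictly increasing in $t$ by \Cref{assump:monotone_calibration}, so $m(\tau_c(\rho))$ is decreasing in $\rho$. Second, the score-optimal threshold satisfies $m(\tau_s) = \tilde R(\tau_s)$. This follows because $\tilde R'(\tau) = \Delta_P\big(\tilde R(\tau) - m(\tau)\big)/p_1(\tau)$ and $\tau_s$ is the unique maximizer from the lemma cited before \Cref{thm:two_point_optimality}. I would also use $m > 0$, which holds for nonnegative scores; otherwise "increasing" should be read as "nondecreasing".

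\textbf{Case (a), $\tau_A \ge \tau_s$.} Here $p_1(\tau_A) \le \rho^*$, giving three pieces.
\begin{enumerate}[(i)]
\item On $[0, p_1(\tau_A)]$, $G = \rho\big(\tilde R(\tau_s) - \tilde R(\tau_A)\big)$, which is increasing since $\tau_s$ maximizes $\tilde R$.
\item On $[p_1(\tau_A), \rho^*]$, the slope is $\tilde R(\tau_s) > 0$.
\item Beyond $\rho^*$, the slope is $m(\tau_c(\rho)) > 0$.
\end{enumerate}
Hence $G$ is monotonically increasing, and flat for $\rho \ge p_0 + \Delta_P$.

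\textbf{Case (b), $\tau_A < \tau_s$.} Now $p_1(\tau_A) > \rho^*$, giving three pieces.
\begin{enumerate}[(i)]
\item On $[0, \rho^*]$, the slope is $\tilde R(\tau_s) - \tilde R(\tau_A) > 0$, so $G$ increases.
\item On $[\rho^*, p_1(\tau_A)]$, the slope is $m(\tau_c(\rho)) - \tilde R(\tau_A)$. At $\rho = \rho^*$ this equals $m(\tau_s) - \tilde R(\tau_A) = \tilde R(\tau_s) - \tilde R(\tau_A) > 0$, and it is strictly decreasing in $\rho$. The endpoint $\rho = p_1(\tau_A)$ is exactly where $\tau_c = \tau_A$. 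There $G = H(\tau_A) - H(\tau_A) = 0$ while $G > 0$ at $\rho^*$, so the slope must turn negative. The unique turning point is where $m(\tau_c(\rho)) = \tilde R(\tau_A)$ with $\tau_c > \tau_A$, which is the stated condition. Thereafter $G$ decreases to $0$.
\item For $\rho > p_1(\tau_A)$, equivalently $\tau_c < \tau_A$, the prediction value is constant and the optimal value still has slope $m(\tau_c) > 0$, so $G$ increases again.
\end{enumerate}

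\textbf{Expected obstacle.} The delicate step is justifying the first-order identity $m(\tau_s) = \tilde R(\tau_s)$. This needs $\tau_s$ to be interior, which should hold when $p_0 > 0$; the case $\tau_s = 1$ with $p_0 = 0$ I would handle separately. The same step also requires using strict monotonicity of $m$ to get a single sign change, with the smoothness in \Cref{assump:conditional_smoothness} ensuring that the derivative computations of $H$ and $\tau_c$ are valid.
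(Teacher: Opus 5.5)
Your proposal is correct and follows essentially the same route as the paper. It uses the same piecewise split in $\rho = M/N$ according to which of $\tau^{*A}(\rho)$ and $\tau_A$ saturates capacity, the same slopes $\tilde{R}(\tau^{*A}_{\mathrm{score}})-\tilde{R}(\tau_A)$, $\tilde{R}(\tau^{*A}_{\mathrm{score}})$, $m(\tau_c(\rho))-\tilde{R}(\tau_A)$ and $m(\tau_c(\rho))$ on the respective pieces, and the same identity $m(\tau^{*A}_{\mathrm{score}})=\tilde{R}(\tau^{*A}_{\mathrm{score}})$ at the junction. Your use of strict monotonicity of $m$ to get a unique turning point, and your ordering of the regimes by increasing $\rho$ (the paper's text lists them in reverse, though its case computations match yours), tighten the argument rather than change it.
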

\begin{proof}{Proof.}
Fix an algorithm $A$ and parameters $(p_0,\Delta_P)$.
For notational convenience write $\rho = M/N$ and define the gap as a function of
capacity ratio,
\[
  G_A\!\left(\tau_A, \rho, p_0\right)
  \;:=\;
  W\!\left(\tau^{*A}(\rho \mid p_0,\Delta_P)|A,M,N,p_0,\Delta_P\right)
  -
  \objfluid\!\left(\tau_A \mid A,M,N,p_0,\Delta_P\right).
\]

We first decompose $\objfluid(\tau^{*A}(\rho))$ into three cases based on the ordering of
$\tau_A$, $\tau^{*A}_{\mathrm{score}}$, and $\tau_c(\rho)$:
\begin{enumerate}[(i)]
\item $\tau_c(\rho) \le \tau^{*A}_{\mathrm{score}}$:
\[
\objfluid\!\left(\tau^{*A}(\rho)\right)
=
N\!\left[p_0\mathbb{E}[r]+\Delta_P\!\left(1-\tau_c(\rho)\right)\mathbb{E}\!\big[r\,\big|\,\hat r^A\ge q_A(\tau_c(\rho))\big]\right].
\]
\item $\tau^{*A}_{\mathrm{score}} \le \tau_c(\rho)$:
\[
\objfluid\!\left(\tau^{*A}(\rho)\right)
=
M\,
\frac{
  p_0\mathbb{E}[r]\;+\;\Delta_P\!\left(1-\tau^{*A}_{\mathrm{score}}\right)\mathbb{E}\!\big[r\,\big|\,\hat r^A\ge q_A(\tau^{*A}_{\mathrm{score}})\big]
}{
  p_0+\Delta_P\!\left(1-\tau^{*A}_{\mathrm{score}}\right)
}.
\]
\end{enumerate}

When $\tau_A \le \tau^{*A}_{\mathrm{score}}$, subtracting $\objfluid(\tau_A)$ from each case yields the gap:
\[
\begin{aligned}
& G_A\!\left(\tau_A, \rho, p_0\right) = \objfluid\!\left(\tau^{*A}(\rho)\right) - \objfluid(\tau_A)
\\[0.5em]
&=
\begin{cases}
\displaystyle
N\!\bigl[p_0\mathbb{E}[r] + \Delta_P(1-\tau_c(\rho))\mathbb{E}\!\bigl[r \mid \hat r^A \ge q_A(\tau_c(\rho))\bigr]\bigr]
\\[0.5em]
\quad
-\;
N\!\bigl[p_0\mathbb{E}[r] + \Delta_P(1-\tau_A)\mathbb{E}\!\bigl[r \mid \hat r^A \ge q_A(\tau_A)\bigr]\bigr],
& \text{if } \tau_c(\rho) \le \tau_A,
\\[1.5em]
\displaystyle
N\!\bigl[p_0\mathbb{E}[r] + \Delta_P(1-\tau_c(\rho))\mathbb{E}\!\bigl[r \mid \hat r^A \ge q_A(\tau_c(\rho))\bigr]\bigr]
\\[0.5em]
\quad
-\;
M\,
\frac{p_0\mathbb{E}[r] + \Delta_P(1-\tau_A)\mathbb{E}\!\bigl[r \mid \hat r^A \ge q_A(\tau_A)\bigr]}{p_0+\Delta_P(1-\tau_A)},
& \text{if } \tau_A < \tau_c(\rho)\le \tau^{*A}_{\mathrm{score}},
\\[1.5em]
\displaystyle
M\,\frac{
p_0\mathbb{E}[r]+\Delta_P(1-\tau^{*A}_{\mathrm{score}})\mathbb{E}\!\bigl[r \mid \hat r^A \ge q_A(\tau^{*A}_{\mathrm{score}})\bigr]
}{
p_0+\Delta_P(1-\tau^{*A}_{\mathrm{score}})
}
\\[0.5em]
\quad
-\;
M\,\frac{
p_0\mathbb{E}[r]+\Delta_P(1-\tau_A)\mathbb{E}\!\bigl[r \mid \hat r^A \ge q_A(\tau_A)\bigr]
}{
p_0+\Delta_P(1-\tau_A)
},
& \text{if } \tau^{*A}_{\mathrm{score}} \le \tau_c(\rho).
\end{cases}
\end{aligned}
\]

When $\tau_A > \tau^{*A}_{\mathrm{score}}$, subtracting $\objfluid(\tau_A)$ from each case again yields
\[
\begin{aligned}
& G_A\!\left(\tau_A, \rho, p_0\right) = \objfluid\!\left(\tau^{*A}(\rho)\right) - \objfluid(\tau_A)
\\[0.5em]
&=
\begin{cases}
\displaystyle
N\!\bigl[p_0\mathbb{E}[r] + \Delta_P(1-\tau_c(\rho))\bigr]\,
\mathbb{E}\!\bigl[r \mid \hat r^A \ge q_A(\tau_c(\rho))\bigr]
\\[0.5em]
\quad
-\;
N\!\bigl[p_0\mathbb{E}[r] + \Delta_P(1-\tau_A)\bigr]\,
\mathbb{E}\!\bigl[r \mid \hat r^A \ge q_A(\tau_A)\bigr],
& \text{if } \tau_c(\rho)\le\tau^{*A}_{\mathrm{score}},
\\[1.5em]
\displaystyle
M\,
\frac{
p_0\mathbb{E}[r] + \Delta_P(1-\tau^{*A}_{\mathrm{score}})\,
\mathbb{E}\!\bigl[r \mid \hat r^A \ge q_A(\tau^{*A}_{\mathrm{score}})\bigr]
}{
p_0 + \Delta_P(1-\tau^{*A}_{\mathrm{score}})
}
\\[0.5em]
\quad
-\;
N\!\bigl[p_0\mathbb{E}[r] + \Delta_P(1-\tau_A)\bigr]\,
\mathbb{E}\!\bigl[r \mid \hat r^A \ge q_A(\tau_A)\bigr],
& \text{if } \tau^{*A}_{\mathrm{score}} < \tau_c(\rho)\le\tau_A,
\\[1.5em]
\displaystyle
M\,
\frac{
p_0\mathbb{E}[r] + \Delta_P(1-\tau^{*A}_{\mathrm{score}})\,
\mathbb{E}\!\bigl[r \mid \hat r^A \ge q_A(\tau^{*A}_{\mathrm{score}})\bigr]
}{
p_0 + \Delta_P(1-\tau^{*A}_{\mathrm{score}})
}
\\[0.5em]
\quad
-\;
M\,
\frac{
p_0\mathbb{E}[r] + \Delta_P(1-\tau_A)\,
\mathbb{E}\!\bigl[r \mid \hat r^A \ge q_A(\tau_A)\bigr]
}{
p_0 + \Delta_P(1-\tau_A)
},
& \text{if } \tau_A \le \tau_c(\rho).
\end{cases}
\end{aligned}
\]

The six sub-cases above reduce to four cases based on which threshold(s) saturate capacity:

\emph{Case 1: Only the optimal threshold saturates capacity} ($\tau_c(\rho) \le \min\{\tau_A, \tau^{*A}_{\mathrm{score}}\}$).
The gap simplifies to
\[
G_A\!\left(\tau_A, \rho, p_0\right) = N\Delta_P\Bigl[(1-\tau_c(\rho))\mathbb{E}[r \mid \hat r^A \ge q_A(\tau_c(\rho))] - (1-\tau_A)\mathbb{E}[r \mid \hat r^A \ge q_A(\tau_A)]\Bigr].
\]
Since the second term is constant, we compute
\[
\frac{\partial G_A}{\partial \rho} = N\Delta_P \cdot \frac{\partial}{\partial \rho}\Bigl[(1-\tau_c(\rho))\mathbb{E}[r \mid \hat r^A \ge q_A(\tau_c(\rho))]\Bigr].
\]
Using the chain rule with $\frac{\partial \tau_c}{\partial \rho} = -\frac{1}{\Delta_P}$ and
$\frac{d}{d\tau}\bigl[(1-\tau)\mathbb{E}[r \mid \hat r^A \ge q_A(\tau)]\bigr] = -\mathbb{E}[r \mid \hat r^A = q_A(\tau)]$,
we obtain
\[
\frac{\partial G_A}{\partial \rho} = N\Delta_P \cdot \left(-\frac{1}{\Delta_P}\right)\left(-\mathbb{E}[r \mid \hat r^A = q_A(\tau_c(\rho))]\right) = N\,\mathbb{E}[r \mid \hat r^A = q_A(\tau_c(\rho))] > 0.
\]
Thus the gap is strictly increasing in $\rho$ in this regime.

\emph{Case 2: Both thresholds saturate capacity} ($\tau_A < \tau_c(\rho) \le \tau^{*A}_{\mathrm{score}}$; requires $\tau_A \le \tau^{*A}_{\mathrm{score}}$).
Here $\tau^{*A}(\rho) = \tau_c(\rho)$ and since $\tau_A < \tau_c(\rho)$, both thresholds are aggressive enough to saturate capacity. The gap is
\[
G_A\!\left(\tau_A, \rho, p_0\right) = N\bigl[p_0\mathbb{E}[r] + \Delta_P(1-\tau_c(\rho))\mathbb{E}[r \mid \hat r^A \ge q_A(\tau_c(\rho))]\bigr] - M \cdot \tilde{R}(\tau_A \mid A, p_0, \Delta_P).
\]
Differentiating with $M = \rho N$:
\[
\frac{\partial G_A}{\partial \rho} = N\left[\mathbb{E}[r \mid \hat r^A = q_A(\tau_c(\rho))] -  \tilde{R}(\tau_A \mid A, p_0, \Delta_P)\right].
\]
The sign depends on whether the marginal efficacy at the optimal threshold exceeds the average efficacy at $\tau_A$.
When $\tau_c(\rho) = \tau^{*A}_{\mathrm{score}}$,
\[
\frac{\partial G_A}{\partial \rho} = N\left[\tilde{R}(\tau^{*A}_{\mathrm{score}} \mid A, p_0, \Delta_P) -  \tilde{R}(\tau_A \mid A, p_0, \Delta_P)\right] \geq 0.
\]
When $\tau_c(\rho) \rightarrow \tau_A$,
\[
\frac{\partial G_A}{\partial \rho} \rightarrow N\left[\mathbb{E}[r \mid \hat r^A = q_A(\tau_A)] -  \tilde{R}(\tau_A \mid A, p_0, \Delta_P)\right] < 0.
\]
Therefore, there exists a $\rho$ such that $\frac{\partial G_A}{\partial \rho} = 0$ in this regime. That is, the gap increases with $\rho$ and then decreases. The gap is 0 if $\tau_A = \tau_c(\rho)$.

\emph{Case 3: Only the optimal threshold saturates capacity} ($\tau^{*A}_{\mathrm{score}} < \tau_c(\rho) \le \tau_A$; requires $\tau_A > \tau^{*A}_{\mathrm{score}}$).
Here $\tau^{*A}(\rho) = \tau^{*A}_{\mathrm{score}}$ (which is more aggressive than $\tau_c$) saturates capacity, while $\tau_A \ge \tau_c$ does not. The gap is
\[
G_A\!\left(\tau_A, \rho, p_0\right) = M \cdot \tilde{R}(\tau^{*A}_{\mathrm{score}} \mid A, p_0, \Delta_P) - N\bigl[p_0\mathbb{E}[r] + \Delta_P(1-\tau_A)\mathbb{E}[r \mid \hat r^A \ge q_A(\tau_A)]\bigr].
\]
Since the second term is constant and $M = \rho N$:
\[
\frac{\partial G_A}{\partial \rho} = N \cdot \tilde{R}(\tau^{*A}_{\mathrm{score}} \mid A, p_0, \Delta_P) > 0.
\]
Thus the gap is strictly increasing in $\rho$ in this regime.

\emph{Case 4: Both thresholds saturate capacity} ($\tau_c(\rho) \ge \max\{\tau_A, \tau^{*A}_{\mathrm{score}}\}$).
Since both $\tau_A$ and $\tau^{*A}_{\mathrm{score}}$ are at most $\tau_c$, both are aggressive enough to saturate capacity. The gap simplifies to
\[
G_A\!\left(\tau_A, \rho, p_0\right) = M\Bigl[\tilde{R}(\tau^{*A}_{\mathrm{score}} \mid A, p_0, \Delta_P) - \tilde{R}(\tau_A \mid A, p_0, \Delta_P)\Bigr].
\]
Since $M = \rho N$:
\[
\frac{\partial G_A}{\partial \rho} = N\Bigl[\tilde{R}(\tau^{*A}_{\mathrm{score}} \mid A, p_0, \Delta_P) - \tilde{R}(\tau_A \mid A, p_0, \Delta_P)\Bigr].
\]
This derivative is non-negative and it equals 0 when $\tau^{*A}_{\mathrm{score}} = \tau_A$. In other words, the gap grows with $\rho$.

As $\rho$ increases from 0, the capacity threshold $\tau_c(\rho)$ decreases from 1 toward 0, traversing different cases depending on the relationship between $\tau_A$ and $\tau^{*A}_{\mathrm{score}}$.

When $\tau_A \ge \tau^{*A}_{\mathrm{score}}$, as $\rho$ increases, we move through Cases 1 $\to$ 3 $\to$ 4. In Case~1 ($\tau_c \le \tau_A$), the gap increases. In Case~3 ($\tau^{*A}_{\mathrm{score}} < \tau_c \le \tau_A$), the gap increases with $\partial G/\partial\rho = N \cdot \tilde{R}(\tau^{*A}_{\mathrm{score}}) > 0$. In Case~4 ($\tau_c \ge \tau_A$), the gap continues to increase since $\tilde{R}(\tau^{*A}_{\mathrm{score}}) \ge \tilde{R}(\tau_A)$ when $\tau_A \ge \tau^{*A}_{\mathrm{score}}$. Thus the gap is monotonically increasing in $\rho$, establishing part~(a).

When $\tau_A < \tau^{*A}_{\mathrm{score}}$, as $\rho$ increases, we traverse Cases 1 $\to$ 2 $\to$ 4. In Case~1 ($\tau_c \le \tau_A$), the gap increases. In Case~2 ($\tau_A < \tau_c \le \tau^{*A}_{\mathrm{score}}$), the gap initially increases but then decreases once $\mathbb{E}[r \mid \hat r^A = q_A(\tau_c)] < \tilde{R}(\tau_A)$, reaching zero when $\tau_c = \tau_A$. In Case~4 ($\tau_c \ge \tau^{*A}_{\mathrm{score}}$), the gap increases again since $\tilde{R}(\tau^{*A}_{\mathrm{score}}) > \tilde{R}(\tau_A)$ when $\tau_A < \tau^{*A}_{\mathrm{score}}$. This yields the non-monotone pattern in part~(b): the gap first increases, then decreases to zero at $\tau_c = \tau_A$, and then increases again.
\end{proof}

In case (a), when $\tau_{A} \geq \tau^{*A}_{\mathrm{score}}$, the prediction-based threshold is already too conservative for cannibalization. In other words, for small capacity $\frac{M}{N}$, the algorithm flags too few individuals and the requests are dominated by (lower-risk) independent requests. As $\frac{M}{N}$ increases, the gap increases monotonically because capacity waste scales with $\frac{M}{N}$. 

In case (b), when $\tau_{A} < \tau^{*A}_{\mathrm{score}}$, the prediction-based threshold is too aggressive for optimal cannibalization management but may be appropriate for capacity utilization. For small $\frac{M}{N}$, the cannibalization effect dominates: the optimal threshold is $\tau^{*A}_{\mathrm{score}}$, and $\tau_{A}$ sacrifices efficacy of requests for volume. The gap grows as $\frac{M}{N}$ increases initially because capacity utilization remains weak (slots fill easily), so the cost is purely from ignoring cannibalization. However, as $\frac{M}{N}$ continues to grow, the capacity utilization effect strengthens and eventually dominates. The optimal threshold shifts from $\tau^{*A}_{\mathrm{score}}$ toward $\tau_c$. When $\tau_c$ crosses $\tau_{A}$ from above, the gap reaches zero: at this single point, the strengthening capacity utilization effect has made $\tau_{A}$ optimal for capacity-matching, even though it ignores cannibalization. Beyond this crossing, the capacity utilization effect dominates completely, but $\tau_{A}$ is now too conservative relative to $\tau_c$, and capacity waste increases the gap again.

\subsubsection{\texorpdfstring{\Cref{lem:gap_prediction_based_threshold_behavioral}}{Lemma (Gap Prediction-based Threshold Behavioral)}}\label{proof:lem:gap_prediction_based_threshold_behavioral}

\begin{lemma}[Prediction-based gap behavior under varying baseline demand]
\label{lem:gap_prediction_based_threshold_behavioral}
Given fixed $\frac{M}{N}$ and $\Delta_P$, 
both $\tau^{*A}_{\mathrm{score}}(p_0, \Delta_P)$ and 
$\tau_c\!\left(\frac{M}{N}\mid  p_0,\, \Delta_P\right)$ vary with $p_0$, 
while $\tau_{A}$ remains constant. 
Let $\tilde{p_0}$ be the value of $p_0$ satisfying
\begin{equation}\label{eq:tilde_p0}
    \tau^{*A}_{\mathrm{score}}(\tilde{p_0}, \Delta_P)
  = \tau_c\!\left(\tfrac{M}{N}\mid  \tilde{p_0},\, \Delta_P\right).
\end{equation}
The gap behavior with respect to $p_0$ follows three regimes:
\begin{enumerate}[(a)]
\item If $\tau_{A} \ge \tau^{*A}\!\left(\tfrac{M}{N}\mid  \tilde{p_0},\, \Delta_P\right)$ (as illustrated by policy $\tau_{A1}$ in \Cref{fig:policies_comparison}),  
the gap decreases as $p_0$ increases until $p_0 = \tfrac{M}{N} - \Delta_P(1 - \tau_{A})$. For $p_0 > \tfrac{M}{N} - \Delta_P(1 - \tau_{A})$, define 
\begin{equation}\label{eq:derivative_R_p0_h}
    h(\tau) = \frac{\Delta_P(1-\tau)\bigl(\mathbb{E}[r] - \mathbb{E}[r \mid \hat r^A \ge q_A(\tau)]\bigr)}{(p_0 + \Delta_P(1-\tau))^2}.
\end{equation}
The gap then increases when $h(\tau^{*A}_{\mathrm{score}}(\tilde{p_0}, \Delta_P)) - h(\tau_A)>0$ and decreases when $h(\tau^{*A}_{\mathrm{score}}(\tilde{p_0}, \Delta_P)) - h(\tau_A)<0$. 
\item If $\tau^{*A}\!\left(\tfrac{M}{N}\mid  1-\Delta_P,\, \Delta_P\right) < \tau_{A} < \tau^{*A}\!\left(\tfrac{M}{N}\mid  \tilde{p_0},\, \Delta_P\right)$ (as illustrated by policy $\tau_{A2}$ in \Cref{fig:policies_comparison}), 
the gap decreases with $p_0$ and reaches zero when
\[
  p_0 = \tfrac{M}{N} - \Delta_P(1 - \tau_{A}).
\]
It then increases for larger $p_0$ until the following equality holds:
\[
\frac{\tilde{R}(\tau_{A}|A,p_0,\Delta_P) - \mathbb{E}[r]}
{p_0 + \Delta_P(1-\tau_{A})}
=
\frac{\mathbb{E}[r \mid \hat r^A \ge q_A(\tau_c\!\left(\tfrac{M}{N}\mid p_0,\Delta_P\right))] - \mathbb{E}[r]}
{p_0 + \Delta_P(1-\tau_c\!\left(\tfrac{M}{N}\mid p_0,\Delta_P\right))}.
\]
Beyond this point, the gap decreases again and reaches 0 when
$\tau^{*A}_{\mathrm{score}}(p_0, \Delta_P) = \tau_{A}$. For larger $p_0$, using \Cref{eq:derivative_R_p0_h}, the gap increases when $h(\tau^{*A}_{\mathrm{score}}(\tilde{p_0}, \Delta_P)) - h(\tau_A)>0$ and decreases when $h(\tau^{*A}_{\mathrm{score}}(\tilde{p_0}, \Delta_P)) - h(\tau_A)<0$.

\item If $\tau_{A} \leq \tau^{*A}\!\left(\tfrac{M}{N}\mid  1-\Delta_P,\, \Delta_P\right)$ (as illustrated by policy $\tau_{A3}$ in \Cref{fig:policies_comparison}),  
the gap decreases with $p_0$ and reaches zero when
\[
  p_0 = \tfrac{M}{N} - \Delta_P(1 - \tau_{A}).
\]
It then increases for larger $p_0$ until the following equality holds:
\[
\frac{\tilde{R}(\tau_{A}|A,p_0,\Delta_P) - \mathbb{E}[r]}
{p_0 + \Delta_P(1-\tau_{A})}
=
\frac{\mathbb{E}[r \mid \hat r^A \ge q_A(\tau_c\!\left(\tfrac{M}{N}\mid p_0,\Delta_P\right))] - \mathbb{E}[r]}
{p_0 + \Delta_P(1-\tau_c\!\left(\tfrac{M}{N}\mid p_0,\Delta_P\right))}.
\] 
Beyond this point, the gap decreases again.

\end{enumerate}
\end{lemma}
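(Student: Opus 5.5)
Fix $\rho=M/N$ and $\Delta_P$, write $\tau_c(p_0)=\tau_c(\rho\mid p_0,\Delta_P)$ and $\tau_s(p_0)=\tau^{*A}_{\mathrm{score}}(p_0,\Delta_P)$, and study the gap $G(p_0)=\objfluid(\tau^{*A}(\rho\mid p_0,\Delta_P))-\objfluid(\tau_A)$, mirroring the proof of \Cref{lem:gap_prediction_based_threshold_capacity}. Three facts organise the argument. First, $\tau_c(p_0)=1-(\rho-p_0)/\Delta_P$ is increasing in $p_0$ with slope $1/\Delta_P$. Second, $\tau_s(p_0)$ is nonincreasing in $p_0$. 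I would derive this from the first-order condition for $\tilde R$ (the one behind uniqueness of the score-optimal threshold) via the implicit function theorem, using \Cref{assump:monotone_calibration} and \Cref{assump:conditional_smoothness}. Third, the crossing point $\tilde p_0$ of \eqref{eq:tilde_p0} is then unique. By \Cref{thm:two_point_optimality}, $\tau^{*A}=\tau_c(p_0)$ for $p_0\le\tilde p_0$ and $\tau^{*A}=\tau_s(p_0)$ for $p_0\ge\tilde p_0$, so $\tau^{*A}(\rho\mid p_0)$ first rises to the peak value $\tau^{*A}(\rho\mid\tilde p_0)$ and then falls to $\tau^{*A}(\rho\mid 1-\Delta_P)$. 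Separately, $\tau_A$ saturates capacity exactly when $p_0\ge\rho-\Delta_P(1-\tau_A)$.

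Next, differentiate each piece in $p_0$. Write $T(\tau)=(1-\tau)\mathbb{E}[r\mid \hat r^A\ge q_A(\tau)]$.
\begin{itemize}
\item When neither $\tau^{*A}$ nor $\tau_A$ saturates, there is nothing to do: this regime cannot occur, since $\tau^{*A}\le\tau_c$ always fills capacity.
\item When $\tau^{*A}=\tau_c$ and $\tau_A$ does not saturate, the gap is $N\Delta_P[T(\tau_c)-T(\tau_A)]$. Its derivative is $-N\,\mathbb{E}[r\mid\hat r^A=q_A(\tau_c)]<0$, using $dT/d\tau=-\mathbb{E}[r\mid \hat r^A=q_A(\tau)]$ as in the earlier lemma. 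The $p_0\mathbb{E}[r]$ terms cancel. So the gap decreases.
\item When $\tau^{*A}=\tau_c$ and $\tau_A$ saturates, $\objfluid(\tau^{*A})=N[p_0\mathbb{E}r+\Delta_P T(\tau_c)]$ and $\objfluid(\tau_A)=M\tilde R(\tau_A)$. Using $\partial_{p_0}\tilde R(\tau)=(\mathbb{E}r-\tilde R(\tau))/(p_0+\Delta_P(1-\tau))$, the sign of $G'$ reduces to a comparison of $(\tilde R(\tau_A)-\mathbb{E}r)/(p_0+\Delta_P(1-\tau_A))$ with the analogous quantity at $\tau_c$. Here I use $p_0+\Delta_P(1-\tau_c)=\rho$ and rewrite $\mathbb{E}[r\mid\hat r^A=q_A(\tau_c)]$ minus $\mathbb{E}r$ via $\tilde R(\tau_c)$. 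This should produce exactly the displayed equality in parts (b) and (c). At $p_0=\rho-\Delta_P(1-\tau_A)$ we have $\tau_c=\tau_A$, so the gap is zero and increases just after.
\item When both thresholds saturate, $G=M[\tilde R(\tau_s(p_0))-\tilde R(\tau_A)]$. By the envelope theorem, $\partial_{p_0}\tilde R(\tau_s)$ is the partial derivative at fixed $\tau$. Computing $\partial_{p_0}\tilde R(\tau)=-h(\tau)$ up to the sign convention in \eqref{eq:derivative_R_p0_h} gives $G'\propto h(\tau_A)-h(\tau_s)$, which is the $h$-comparison in the statement. Zeros occur where $\tau_s(p_0)=\tau_A$.
\end{itemize}

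Finally, order the regimes by where $\tau_A$ sits relative to the range of $\tau^{*A}$.
\begin{itemize}
\item In case (a), $\tau_A\ge$ the peak, so $\tau_A\ge\tau^{*A}$ throughout. The path is: decrease, then $\tau_A$ starts saturating, then the $h$-regime. The middle piece with $\tau^{*A}=\tau_c$ and $\tau_A$ saturating needs $\tau_c\ge\tau_A$, which is impossible here, so after saturation only the $h$-regime applies.
\item In case (b), $\tau_A$ lies strictly between the final value and the peak. Then $\tau_c$ crosses $\tau_A$ before $\tilde p_0$, giving a zero, then rise, then fall. Past $\tilde p_0$, $\tau_s$ descends through $\tau_A$, giving a second zero, then the $h$-regime.
\item In case (c), $\tau_s$ never reaches $\tau_A$, so only the first two phases occur.
\end{itemize}

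The main obstacle is the sign analysis in the mixed regime: showing that the displayed equality is crossed only once, so the pattern is rise-then-fall and not oscillating. I would first try to show that the difference of the two ratios is monotone in $p_0$. If that fails, I would settle for the weaker statement the lemma actually asserts, namely that sign changes occur exactly at solutions of the equality. A secondary technical point is justifying the monotonicity of $\tau_s$ in $p_0$ through implicit differentiation of the first-order condition.
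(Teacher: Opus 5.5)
Your overall route is the paper's: split the gap according to which threshold is optimal and whether $\tau_A$ fills capacity, apply the envelope theorem on the score-optimal pieces, and order the pieces along $p_0$. Your unimodality framing of $\tau^{*A}$ in $p_0$ and your implicit-function argument for $d\tau^{*A}_{\mathrm{score}}/dp_0<0$ are sound. However, the case list has a hole.

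Write $\tau_s=\tau^{*A}_{\mathrm{score}}(p_0,\Delta_P)$ and $\rho=M/N$. Since $\tau^{*A}$ always fills capacity, the relevant split is whether $\tau^{*A}$ equals $\tau_c$ or $\tau_s$, crossed with whether $\tau_A$ fills capacity. That gives four pieces, and you drop $\tau_s<\tau_c\le\tau_A$, where $G=M\tilde R(\tau_s)-N[p_0\mathbb{E}[r]+\Delta_P(1-\tau_A)\mathbb{E}[r\mid \hat r^A\ge q_A(\tau_A)]]$. This is the paper's Case~2, mislabelled there as ``neither saturates'', which may be why you discarded it. In part (a) this piece is exactly $\tilde p_0<p_0\le \rho-\Delta_P(1-\tau_A)$, nonempty whenever $\tau_A>\tau^{*A}(\rho\mid\tilde p_0,\Delta_P)$. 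Your path ``decrease, then $\tau_A$ starts saturating'' covers this interval with the $\tau^{*A}=\tau_c$ computation, which is invalid past $\tilde p_0$. The claimed decrease there needs its own argument: by the envelope theorem, $G'=M\,h(\tau_s)-N\mathbb{E}[r]<0$, since $h(\tau_s)<0$ and $\mathbb{E}[r]\ge 0$.

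The second missing ingredient is the sign of $h(\tau_s(p_0))-h(\tau_A)$ when $\tau_A<\tau_s(p_0)$. Parts (b) and (c) assert two things. First, the gap \emph{decreases} on the $\tau^{*A}=\tau_s$ piece: until $\tau_s$ falls to $\tau_A$ in (b), and throughout in (c). Second, the rise on the mixed piece $\tau_A<\tau_c\le\tau_s$ actually turns over before $\tilde p_0$. Saying the sign is ``governed by the $h$-comparison'' proves neither.

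The paper supplies both through the identity $h'(\tau)=D^{-1}\bigl[-\tilde R'(\tau)+\Delta_P^2(1-\tau)D^{-2}\bigl(\mathbb{E}[r]-\mathbb{E}[r\mid\hat r^A\ge q_A(\tau)]\bigr)\bigr]$ with $D=p_0+\Delta_P(1-\tau)$. This is negative on $(0,\tau_s]$, because $\tilde R'\ge 0$ there and the second term is negative.
\begin{itemize}
\item Hence $h(\tau_A)>h(\tau_s)$, which gives the decrease.
\item As $\tau_c\to\tau_s$, where $\rho=p_0+\Delta_P(1-\tau_s)$ and $\mathbb{E}[r\mid\hat r^A=q_A(\tau_s)]=\tilde R(\tau_s)$, the mixed-piece derivative tends to $M[h(\tau_s)-h(\tau_A)]<0$.
\item At $\tau_c=\tau_A$ the derivative is positive, because $\tilde R(\tau_A)>\mathbb{E}[r\mid \hat r^A=q_A(\tau_A)]$.
\end{itemize}
These two endpoint signs are all the paper uses for rise-then-fall. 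It does not prove single crossing either, so single crossing is not your real obstacle.

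There are two further slips.
\begin{itemize}
\item With the statement's $h$, $\partial_{p_0}\tilde R(\tau)=h(\tau)$ exactly; your own formula $(\mathbb{E}[r]-\tilde R(\tau))/(p_0+\Delta_P(1-\tau))$ \emph{is} $h$. So $G'=M[h(\tau_s)-h(\tau_A)]$, and your $G'\propto h(\tau_A)-h(\tau_s)$ reverses the stated conclusion.
\item On the mixed piece, differentiation yields the marginal value $\mathbb{E}[r\mid\hat r^A=q_A(\tau_c)]$. This cannot be rewritten through $\tilde R(\tau_c)$, since the two agree only at $\tau_s$. The correct comparison, as in the paper's proof, is $(\tilde R(\tau_A)-\mathbb{E}[r])/(p_0+\Delta_P(1-\tau_A))$ versus $(\mathbb{E}[r\mid\hat r^A=q_A(\tau_c)]-\mathbb{E}[r])/\rho$.
\end{itemize}
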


\begin{proof}{Proof.}
Fix $\tfrac{M}{N}$ and $\Delta_P$, and let $p_0$ vary.
The algorithm's threshold $\tau_A$ remains constant, while both
$\tau^{*A}_{\mathrm{score}}(p_0, \Delta_P)$ and $\tau_c\!\left(\tfrac{M}{N}\mid  p_0, \Delta_P\right)$
vary with $p_0$. Define the gap as a function of $p_0$:
\[
G_A\!\left(\tau_A, \tfrac{M}{N}, p_0\right) := W\!\left(\tau^{*A}\!\left(\tfrac{M}{N}\mid  p_0, \Delta_P\right)\Big| A, M, N, p_0, \Delta_P\right)
- \objfluid\!\left(\tau_A \Big| A, M, N, p_0, \Delta_P\right).
\]

Let $\bar{p}_0$ denote the value where $\tau^{*A}_{\mathrm{score}}(\bar{p}_0, \Delta_P) = \tau_c\!\left(\tfrac{M}{N}\mid  \bar{p}_0, \Delta_P\right)$.
For $p_0 \le \bar{p}_0$, we have $\tau_c \le \tau^{*A}_{\mathrm{score}}$, so $\tau^{*A} = \tau_c$.
For $p_0 > \bar{p}_0$, we have $\tau_c > \tau^{*A}_{\mathrm{score}}$, so $\tau^{*A} = \tau^{*A}_{\mathrm{score}}$.

When $\tau_A \ge \tau^{*A}\!\left(\tfrac{M}{N}\mid  \bar{p}_0, \Delta_P\right)$, the gap admits the following decomposition:
\[
\begin{aligned}
& G_A\!\left(\tau_A, \tfrac{M}{N}, p_0\right) = \objfluid\!\left(\tau^{*A}(p_0)\right) - \objfluid(\tau_A)
\\[0.5em]
&=
\begin{cases}
\displaystyle
N\Delta_P\Bigl[(1-\tau_c)\mathbb{E}[r \mid \hat r^A \ge q_A(\tau_c)] - (1-\tau_A)\mathbb{E}[r \mid \hat r^A \ge q_A(\tau_A)]\Bigr],
\\[0.5em]
\hfill \text{if } p_0 \le \bar{p}_0 \text{ (i.e., } \tau_c \le \tau^{*A}_{\mathrm{score}} \le \tau_A\text{)},
\\[1.5em]
\displaystyle
M \cdot \tilde{R}(\tau^{*A}_{\mathrm{score}}(p_0)) - N\bigl[p_0\mathbb{E}[r] + \Delta_P(1-\tau_A)\mathbb{E}[r \mid \hat r^A \ge q_A(\tau_A)]\bigr],
\\[0.5em]
\hfill \text{if } \bar{p}_0 < p_0 \le \tfrac{M}{N} - \Delta_P(1-\tau_A) \text{ (i.e., } \tau^{*A}_{\mathrm{score}} < \tau_c \le \tau_A\text{)},
\\[1.5em]
\displaystyle
M\Bigl[\tilde{R}(\tau^{*A}_{\mathrm{score}}(p_0)) - \tilde{R}(\tau_A)\Bigr],
\\[0.5em]
\hfill \text{if } p_0 > \tfrac{M}{N} - \Delta_P(1-\tau_A) \text{ (i.e., } \tau_c > \tau_A\text{)}.
\end{cases}
\end{aligned}
\]

When $\tau_A < \tau^{*A}\!\left(\tfrac{M}{N}\mid  \bar{p}_0, \Delta_P\right)$, the gap admits the following decomposition:
\[
\begin{aligned}
& G_A\!\left(\tau_A, \tfrac{M}{N}, p_0\right) = \objfluid\!\left(\tau^{*A}(p_0)\right) - \objfluid(\tau_A)
\\[0.5em]
&=
\begin{cases}
\displaystyle
N\Delta_P\Bigl[(1-\tau_c)\mathbb{E}[r \mid \hat r^A \ge q_A(\tau_c)] - (1-\tau_A)\mathbb{E}[r \mid \hat r^A \ge q_A(\tau_A)]\Bigr],
\\[0.5em]
\hfill \text{if } p_0 \le \tfrac{M}{N} - \Delta_P(1-\tau_A) \text{ (i.e., } \tau_c \le \tau_A\text{)},
\\[1.5em]
\displaystyle
M\Bigl[\tilde{R}(\tau_c(p_0)) - \tilde{R}(\tau_A)\Bigr],
\\[0.5em]
\hfill \text{if } \tfrac{M}{N} - \Delta_P(1-\tau_A) < p_0 \le \bar{p}_0 \text{ (i.e., } \tau_A < \tau_c \le \tau^{*A}_{\mathrm{score}}\text{)},
\\[1.5em]
\displaystyle
M\Bigl[\tilde{R}(\tau^{*A}_{\mathrm{score}}(p_0)) - \tilde{R}(\tau_A)\Bigr],
\\[0.5em]
\hfill \text{if } p_0 > \bar{p}_0 \text{ (i.e., } \tau^{*A}_{\mathrm{score}} < \tau_c\text{)}.
\end{cases}
\end{aligned}
\]

Across both regimes, six sub-cases arise, but they reduce to four cases based on which threshold(s) saturate capacity. 

\emph{Case 1: Only optimal saturates} ($\tau_c \le \min\{\tau_A, \tau^{*A}_{\mathrm{score}}\}$). Here $\tau^{*A} = \tau_c$ (optimal exactly saturates) while $\tau_A \ge \tau_c$ (algorithm does not saturate). The gap is:
\[
G_A\!\left(\tau_A, \tfrac{M}{N}, p_0\right) = N\Delta_P\Bigl[(1-\tau_c)\mathbb{E}[r \mid \hat r^A \ge q_A(\tau_c)] - (1-\tau_A)\mathbb{E}[r \mid \hat r^A \ge q_A(\tau_A)]\Bigr].
\]
Differentiating w.r.t.\ $p_0$:
\[
\frac{\partial G_A}{\partial p_0} = N\Delta_P \cdot \frac{\partial \tau_c}{\partial p_0} \cdot \frac{d}{d\tau}\Bigl[(1-\tau)\mathbb{E}[r \mid \hat r^A \ge q_A(\tau)]\Bigr]\Big|_{\tau=\tau_c}
= N\Delta_P \cdot \frac{1}{\Delta_P} \cdot \bigl(-\mathbb{E}[r \mid \hat r^A = q_A(\tau_c)]\bigr) < 0.
\]
Thus the gap \emph{decreases} as $p_0$ increases in this region.

\emph{Case 2: Neither saturates} ($\tau^{*A}_{\mathrm{score}} < \tau_c \le \tau_A$).
Here $\tau^{*A} = \tau^{*A}_{\mathrm{score}}$ (optimal does not saturate since $\tau^{*A}_{\mathrm{score}} < \tau_c$) and $\tau_A \ge \tau_c$ (algorithm also does not saturate). The gap is:
\[
G_A\!\left(\tau_A, \tfrac{M}{N}, p_0\right) = M \cdot \tilde{R}(\tau^{*A}_{\mathrm{score}}(p_0)) - N\bigl[p_0\mathbb{E}[r] + \Delta_P(1-\tau_A)\mathbb{E}[r \mid \hat r^A \ge q_A(\tau_A)]\bigr].
\]
Differentiating w.r.t.\ $p_0$:
\[
\frac{\partial G_A}{\partial p_0} = M \cdot \frac{\partial \tilde{R}(\tau^{*A}_{\mathrm{score}})}{\partial p_0} - N\mathbb{E}[r].
\]
By the envelope theorem, $\frac{\partial \tilde{R}}{\partial \tau^{*A}_{\mathrm{score}}} = 0$ at the optimum, so:
\[
\frac{\partial \tilde{R}(\tau^{*A}_{\mathrm{score}})}{\partial p_0} = \frac{\Delta_P(1-\tau^{*A}_{\mathrm{score}})\bigl(\mathbb{E}[r] - \mathbb{E}[r \mid \hat r^A \ge q_A(\tau^{*A}_{\mathrm{score}})]\bigr)}{(p_0 + \Delta_P(1-\tau^{*A}_{\mathrm{score}}))^2}.
\]
Substituting and simplifying using the first-order condition for $\tau^{*A}_{\mathrm{score}}$:
\[
\frac{\partial G_A}{\partial p_0} = \frac{N}{p_0 + \Delta_P(1-\tau^{*A}_{\mathrm{score}})} \left[ \tfrac{M}{N}\mathbb{E}[r] - \tfrac{M}{N}\mathbb{E}[r \mid \hat r^A = q_A(\tau^{*A}_{\mathrm{score}})] - \bigl(p_0 + \Delta_P(1-\tau^{*A}_{\mathrm{score}})\bigr)\mathbb{E}[r] \right].
\]
Since $\tau^{*A}_{\mathrm{score}} < \tau_c$ in this region, we have $p_0 + \Delta_P(1-\tau^{*A}_{\mathrm{score}}) > \tfrac{M}{N}$, which implies the bracketed term is negative. Thus $\frac{\partial G_A}{\partial p_0} < 0$, and the gap \emph{decreases} as $p_0$ increases.

\emph{Case 3: Only algorithm saturates} ($\tau_c > \max\{\tau_A, \tau^{*A}_{\mathrm{score}}\}$).
Here $\tau^{*A} = \tau^{*A}_{\mathrm{score}}$ (optimal does not saturate since $\tau^{*A}_{\mathrm{score}} < \tau_c$) and $\tau_A < \tau_c$ (algorithm saturates). The gap is:
\[
G_A\!\left(\tau_A, \tfrac{M}{N}, p_0\right) = M\Bigl[\tilde{R}(\tau^{*A}_{\mathrm{score}}(p_0)) - \tilde{R}(\tau_A)\Bigr].
\]
Differentiating w.r.t.\ $p_0$:
\[
\frac{\partial G_A}{\partial p_0} = M \left( \frac{\partial \tilde{R}(\tau^{*A}_{\mathrm{score}})}{\partial p_0} - \frac{\partial \tilde{R}(\tau_A)}{\partial p_0} \right).
\]

For $\tilde{R}(\tau^{*A}_{\mathrm{score}})$, using the chain rule and $\tau^{*A}_{\mathrm{score}}$ maximizes $\tilde{R}$:
\[
\frac{\partial \tilde{R}(\tau^{*A}_{\mathrm{score}})}{\partial p_0} = \frac{\partial \tilde{R}}{\partial p_0}\Big|_{\tau^{*A}_{\mathrm{score}}} + \frac{\partial \tau^{*A}_{\mathrm{score}}}{\partial p_0} \cdot \frac{\partial \tilde{R}}{\partial \tau}\Big|_{\tau^{*A}_{\mathrm{score}}}
= \frac{\partial \tilde{R}}{\partial p_0}\Big|_{\tau^{*A}_{\mathrm{score}}}.
\]

Therefore, it suffices to compare the direct effect on $R$:
\[
\frac{\partial G_A}{\partial p_0} = M \left( \frac{\partial \tilde{R}}{\partial p_0}\Big|_{\tau^{*A}_{\mathrm{score}}} - \frac{\partial \tilde{R}}{\partial p_0}\Big|_{\tau_A} \right).
\]

Calculating $\frac{\partial \tilde{R}(\tau)}{\partial p_0}$ for fixed $\tau$:
\[
\frac{\partial \tilde{R}(\tau)}{\partial p_0} = \Delta_P(1-\tau) \frac{\mathbb{E}[r] - \mathbb{E}[r \mid \hat r^A \ge q_A(\tau)]}{(p_0 + \Delta_P(1-\tau))^2}.
\]
Define $h(\tau) := \Delta_P(1-\tau) \frac{\mathbb{E}[r] - \mathbb{E}[r \mid \hat r^A \ge q_A(\tau)]}{(p_0 + \Delta_P(1-\tau))^2}$. To determine the sign, we analyze how $h(\tau)$ varies with $\tau$. 
Differentiating $h(\tau)$ with respect to $\tau$:
\[
\frac{\partial h}{\partial \tau}
= \frac{\partial}{\partial \tau}\left[\Delta_P(1-\tau) \frac{\mathbb{E}[r] - \mathbb{E}[r \mid \hat r^A \ge q_A(\tau)]}{(p_0 + \Delta_P(1-\tau))^2}\right].
\]
Using the product and quotient rules:
\begin{align*}
  \begin{split}
   \frac{\partial h}{\partial \tau}  & = \frac{\Delta_P}{(p_0 + \Delta_P(1-\tau))^3}\biggl[ p_0 (\mathbb{E}[r \mid \hat r^A = q_A(\tau)] -\mathbb{E}[r] ) \\
  &\quad + \Delta_P(1-\tau)(\mathbb{E}[r]+\mathbb{E}[r \mid \hat r^A = q_A(\tau)] - 2\mathbb{E}[r \mid \hat r^A \ge q_A(\tau)])\biggr] 
  \end{split}
\end{align*}

\emph{Sub-case 3a} For $\tau \leq \tau^{*A}_{\mathrm{score}}$, by 
\begin{equation*}
  \begin{split}
    \tilde{R}'(\tau) &= \frac{\Delta_P}{(p_0 + \Delta_P(1-\tau))^2} [ p_0 (\mathbb{E}[r] - \mathbb{E}[r \mid \hat r^A = q_A(\tau)]) \\
    &\quad + \Delta_P(1-\tau)(\mathbb{E}[r \mid \hat r^A > q_A(\tau)] - \mathbb{E}[r \mid \hat r^A = q_A(\tau)])],
  \end{split}
\end{equation*}
\begin{align*}
  \frac{\partial h}{\partial \tau} 
  \begin{split}
    & = \frac{1}{p_0 + \Delta_P(1-\tau)}\biggl[ -\tilde{R}'(\tau)  + \frac{\Delta_P^2 (1-\tau)}{(p_0 + \Delta_P(1-\tau))^2} (\mathbb{E}[r]-\mathbb{E}[r \mid \hat r^A \ge q_A(\tau)])\biggr] 
  \end{split}
\end{align*}
For $\tau \leq \tau^{*A}_{\mathrm{score}}$, $\tilde{R}'(\tau) \geq 0$. The second term is non-positive since $\mathbb{E}[r] \leq \mathbb{E}[r \mid \hat r^A \ge q_A(\tau)]$. Therefore, $\frac{\partial h}{\partial \tau} \leq 0$ for $\tau \leq \tau^{*A}_{\mathrm{score}}$.

\emph{Sub-case 3b} For $\tau > \tau^{*A}_{\mathrm{score}}$:
By $\frac{\partial h}{\partial \tau} \big|_{\tau^{*A}_{\mathrm{score}}} <0$ and the continuity of $\frac{\partial h}{\partial \tau}$, there exists an interval $(\tau^{*A}_{\mathrm{score}}, \tau^{\dagger})$ such that $h(\tau^{*A}_{\mathrm{score}}) > h(\tau)$ for all $\tau \in (\tau^{*A}_{\mathrm{score}}, \tau^{\dagger})$, leading to $\frac{\partial G_A}{\partial p_0} > 0$ in this interval.
For $\tau \geq \tau^{\dagger}$, $h(\tau^{*A}_{\mathrm{score}}) \leq h(\tau)$. $\frac{\partial h}{\partial \tau}$ is positive, leading to $\frac{\partial h}{\partial \tau} \leq 0$ for $\tau \geq \tau^{\dagger}$.

\emph{Case 4: Both saturate} ($\tau_A < \tau_c \le \tau^{*A}_{\mathrm{score}}$).
This occurs when $\tfrac{M}{N} - \Delta_P(1-\tau_A) \le p_0 < \tfrac{M}{N}$.
Here $\tau^{*A} = \tau_c$ (optimal exactly saturates) and $\tau_A < \tau_c$ (algorithm also saturates). The gap is:
\[
G_A\!\left(\tau_A, \tfrac{M}{N}, p_0\right) = M\Bigl[\tilde{R}(\tau_c(p_0)) - \tilde{R}(\tau_A)\Bigr].
\]
Differentiating w.r.t.\ $p_0$:
\[
\frac{\partial G_A}{\partial p_0} = M \left( \frac{\partial \tilde{R}(\tau_c(p_0))}{\partial p_0} - \frac{\partial \tilde{R}(\tau_A)}{\partial p_0} \right).
\]
For $\tilde{R}(\tau_A)$, since $\tau_A$ is fixed:
\[
\frac{\partial \tilde{R}(\tau_A)}{\partial p_0} = \Delta_P(1-\tau_A) \frac{\mathbb{E}[r] - \mathbb{E}[r \mid \hat r^A \ge q_A(\tau_A)]}{(p_0 + \Delta_P(1-\tau_A))^2} < 0.
\]
For $\tilde{R}(\tau_c(p_0))$, using the chain rule and $p_0 + \Delta_P(1-\tau_c) = \tfrac{M}{N}$:
\[
\frac{\partial \tilde{R}(\tau_c(p_0))}{\partial p_0} = \frac{\partial \tilde{R}}{\partial p_0}\Big|_{\tau_c} + \frac{\partial \tau_c}{\partial p_0} \cdot \frac{\partial \tilde{R}}{\partial \tau}\Big|_{\tau_c}
= \frac{\mathbb{E}[r] - \mathbb{E}[r \mid \hat r^A = q_A(\tau_c)]}{p_0 + \Delta_P(1-\tau_c)} < 0.
\]
Taking the difference:
\[
\frac{\partial \tilde{R}(\tau_c)}{\partial p_0} - \frac{\partial \tilde{R}(\tau_A)}{\partial p_0} = \frac{\tilde{R}(\tau_A) - \mathbb{E}[r]}{p_0 + \Delta_P(1-\tau_A)} - \frac{\mathbb{E}[r \mid \hat r^A = q_A(\tau_c)] - \mathbb{E}[r]}{p_0 + \Delta_P(1-\tau_c)}.
\]
At $\tau_c = \tau_A$ (i.e., $p_0 = \tfrac{M}{N} - \Delta_P(1-\tau_A)$): $\frac{\partial G_A}{\partial p_0} = M \cdot \frac{\tilde{R}(\tau_A) - \mathbb{E}[r \mid \hat r^A = q_A(\tau_A)]}{p_0 + \Delta_P(1-\tau_A)} > 0$. At $\tau_c = \tau^{*A}_{\mathrm{score}}$ (i.e., $p_0 = \bar{p}_0$): the derivative becomes negative. Thus the gap first \emph{increases} as $p_0$ increases from $\tfrac{M}{N} - \Delta_P(1-\tau_A)$, reaches a maximum, then \emph{decreases} as $p_0$ approaches $\bar{p}_0$.

As $p_0$ increases from 0, the capacity threshold $\tau_c(p_0)$ increases from $1 - \frac{M}{N\Delta_P}$ toward 1, traversing different cases depending on the relationship between $\tau_A$ and $\tau^{*A}_{\mathrm{score}}$.

When $\tau_A \geq \tau^{*A}(\tfrac{M}{N}\mid  \bar{p}_0, \Delta_P)$, as $p_0$ increases, we move through Cases 1 $\to$ 2 $\to$ 3b. In Case~1 ($\tau_c \le \tau_A < \tau^{*A}_{\mathrm{score}}$), the gap decreases with $\partial G/\partial p_0 < 0$. In Case~2 ($\tau^{*A}_{\mathrm{score}}(p_0) < \tau_c \le \tau_A$), as $\tau^{*A}_{\mathrm{score}}$ drops below $\tau_A$ for larger $p_0$, the gap continues to decrease with $\partial G/\partial p_0 < 0$. In Case~3b ($\tau_c > \tau_A \ge \tau^{*A}_{\mathrm{score}}(p_0)$), the gap behavior depends on $h(\tau^{*A}_{\mathrm{score}})$ versus $h(\tau_A)$. 

When $\tau_A < \tau^{*A}(\tfrac{M}{N}\mid  \bar{p}_0, \Delta_P)$, as $p_0$ increases, we traverse Cases 1 $\to$ 4 $\to$ 3a ($\to$ 3b). In Case~1 ($\tau_c \le \tau_A < \tau^{*A}_{\mathrm{score}}$), the gap decreases with $\partial G/\partial p_0 < 0$, reaching zero when $\tau_c = \tau_A$ (i.e., $p_0 = \tfrac{M}{N} - \Delta_P(1-\tau_A)$). In Case~4 ($\tau_A < \tau_c \le \tau^{*A}_{\mathrm{score}}$), the gap reopens and first increases (since $\partial G/\partial p_0 > 0$ at $\tau_c = \tau_A$), reaches a maximum, then decreases (since $\partial G/\partial p_0 < 0$ as $\tau_c \to \tau^{*A}_{\mathrm{score}}$). If $\tau_A \geq \tau^{*A}_{\mathrm{score}}(1-\Delta_P, \Delta_P)$, we move directly to Case~3a ($\tau_c > \tau^{*A}_{\mathrm{score}} > \tau_A$) where the gap continues to decrease with $\partial G/\partial p_0 \le 0$ since $h(\tau^{*A}_{\mathrm{score}}) \le h(\tau_A)$. If $\tau_A < \tau^{*A}_{\mathrm{score}}(1-\Delta_P)$, we enter Case~3a first, where the gap decreases with $\partial G/\partial p_0 \le 0$ until reaching $\tau_c = \tau^{*A}_{\mathrm{score}}$. Then, we enter Case~3b ($\tau_c > \tau^{*A}_{\mathrm{score}} > \tau_A$).

This completes the proof.
\end{proof}

In case (a), when $\tau_A \geq \tau^{*A}(\frac{M}{N}\mid \tilde{p_0}, \Delta_P)$, the prediction-based threshold remains too conservative throughout. For low $p_0$ (capacity utilization dominates), increasing $p_0$ weakens the capacity utilization effect: baseline demand rises, making it easier to fill slots even at conservative thresholds. This allows $\tau_c$ to increase toward $\tau_A$ while still matching capacity, and the gap decreases. However, once $p_0$ crosses $\tilde{p_0}$, the cannibalization effect takes over and strengthens with further increases in $p_0$. Now the optimal threshold $\tau^{*A}_{\mathrm{score}}$ decreases to manage intensifying competition for slots, widening the distance to the fixed $\tau_A$, and the gap increases. The gap exhibits a U-shape: decreasing as the weakening capacity utilization effect reduces the cost of ignoring it, then increasing as the strengthening cannibalization effect raises the cost of ignoring it.

In case (b), when $\tau_A < \tau^{*A}(\frac{M}{N}\mid \tilde{p_0}, \Delta_P)$, the prediction-based threshold falls in an intermediate range that may align with different optimal thresholds at different $p_0$ values, creating complex non-monotonic behavior with multiple zero-crossings. For low $p_0$ (capacity utilization dominates), $\tau_A$ may be higher than $\tau_c$, causing capacity loss; as $p_0$ increases and weakens the capacity utilization effect, $\tau_c$ decreases toward $\tau_A$, reducing the gap until it reaches zero when $\tau_c = \tau_A$. Beyond this crossing, $\tau_c$ increases to be higher $\tau_A$, the cannibalization effect takes over and strengthens with further increases in $p_0$. Similar to case (a), the gap first reopens as $\tau_A$ ignores the strengthening cannibalization effect, then may decrease again since $\tau^{*A}_{\mathrm{score}}$ decreases to manage intensifying competition for slots, potentially aligning average efficacy of requests at $\tau_A$ and $\tau^{*A}_{\mathrm{score}}$.
Eventually, when the cannibalization effect becomes strong enough (high $p_0$), $\tau^{*A}_{\mathrm{score}}$ may fall below such that $\tau_A$ ($\tau^{*A}_{\mathrm{score}}(1-\Delta_P, \Delta_P) \leq \tau_A$), creating another gap increase as $\tau_A$ becomes too conservative for managing cannibalization.

\subsubsection{\texorpdfstring{Proof of \Cref{thm:prediction_based_threshold_suboptimal}}{Proof of Suboptimality of prediction-based thresholds}}\label{proof:thm:prediction_based_threshold_suboptimal}
\begin{proof}{Proof.}
The gap $G_A(\tau_A, \cdot) \ge 0$ by optimality of $\tau^{*A}$, and $G_A(\tau_A, \cdot) = 0$ if and only if $\tau_A = \tau^{*A}$ at that operating point. The theorem is therefore equivalent to counting the number of operating points at which the fixed $\tau_A$ coincides with the optimal threshold $\tau^{*A}$.

First, we analyze how many values of $M/N$ yield zero gap for a fixed $p_0$. By \Cref{lem:gap_prediction_based_threshold_capacity}, we have the following cases. If $\tau_A \ge \tau^{*A}_{\mathrm{score}}$ (case a): the gap is strictly increasing in $M/N$ and is strictly positive for all $M/N > 0$ whenever $\tau_A > \tau^{*A}_{\mathrm{score}}$, so $G_A = 0$ for zero values. If $\tau_A = \tau^{*A}_{\mathrm{score}}$, then $\tau_A = \tau^{*A}$ throughout the flat region $M/N \le p_0 + \Delta_P(1-\tau^{*A}_{\mathrm{score}})$ where $\tau^{*A} = \tau^{*A}_{\mathrm{score}}$, so the gap equals zero if and only if $M/N \le p_0 + \Delta_P(1-\tau^{*A}_{\mathrm{score}})$. 

If $\tau_A < \tau^{*A}_{\mathrm{score}}$ (case b): the gap equals zero at the unique value of $M/N$ where $\tau_c(M/N \mid p_0, \Delta_P) = \tau_A$, i.e., $M/N = p_0 + \Delta_P(1-\tau_A)$. This is a single isolated point, so $G_A = 0$ for exactly one value of $M/N$.
In both cases, the gap is zero for at most one value when $\tau_A \ne \tau^{*A}_{\mathrm{score}}$.

Next, we analyze how many values of $p_0$ yield zero gap for a fixed $M/N$. By \Cref{lem:gap_prediction_based_threshold_behavioral}, the gap equals zero precisely when $\tau_A = \tau^{*A}(M/N \mid p_0, \Delta_P)$, which occurs either when $\tau_A = \tau_c(p_0)$ or when $\tau_A = \tau^{*A}_{\mathrm{score}}(p_0)$. Each of these equations has at most one solution in $p_0$: $\tau_c(p_0) = \tau_A$ gives $p_0 = M/N - \Delta_P(1-\tau_A)$ (a unique solution since $\tau_c$ is strictly increasing in $p_0$), and $\tau^{*A}_{\mathrm{score}}(p_0) = \tau_A$ gives at most one solution since $\tau^{*A}_{\mathrm{score}}$ is strictly decreasing in $p_0$ by \Cref{lem:score_optimal_threshold_uniqueness}. Examining the three cases in \Cref{lem:gap_prediction_based_threshold_behavioral}: case (a) yields at most one zero, case (b) yields exactly two zeros (one from each equation), and case (c) yields exactly one zero. Hence the gap equals zero for at most two isolated values of $p_0$.
\end{proof}

\subsection{Suboptimality of Capacity-matching Thresholds}\label{sec:proof_suboptimality_capacity_matching}

The next two lemmas characterize how the gap defined for the capacity-matching threshold behaves under varying capacity and behavioral parameters.

\subsubsection{\texorpdfstring{\Cref{lem:gap_capacity_matching_MN}}{Lemma (Gap Capacity-matching Threshold under varying M/N)}}\label{proof:lem:gap_capacity_matching_MN}

\begin{lemma}[Capacity-matching gap behavior under varying capacity]
\label{lem:gap_capacity_matching_MN}
Fix $(p_0,\Delta_P)$.
Then $\tau^{*A}_{\mathrm{score}}(p_0,\Delta_P)$ is constant in $\tfrac{M}{N}$, while
$\tau_c(\tfrac{M}{N}\mid p_0,\Delta_P)$ varies monotonically with $\tfrac{M}{N}$.
The suboptimality gap $G_A\!\left(\tau_c, \tfrac{M}{N}, p_0\right)$ increases with $\tfrac{M}{N}$ for $\tfrac{M}{N} \le p_0$, then decreases for $p_0 < \tfrac{M}{N} < p_0 + \Delta_P(1-\tau^{*A}_{\mathrm{score}})$.
Beyond this point, $G_A\!\left(\tau_c, \tfrac{M}{N}, p_0\right)=0$ for all
  \(
  \tfrac{M}{N} \ge p_0 + \Delta_P(1-\tau^{*A}_{\mathrm{score}}).
  \)
\end{lemma}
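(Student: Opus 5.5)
The plan is to write the gap $G_A(\tau_c,\rho,p_0)=\objfluid(\tau^{*A}(\rho))-\objfluid(\tau_c(\rho))$ with $\rho=M/N$ explicitly in each of the three ranges of $\rho$, and then differentiate in $\rho$. Throughout we use \Cref{thm:two_point_optimality}, which gives $\tau^{*A}=\min\{\tau^{*A}_{\mathrm{score}},\tau_c\}$. We also use two identities already used in the proof of \Cref{lem:gap_prediction_based_threshold_capacity}:
\[
\frac{d}{d\tau}\Bigl[(1-\tau)\,\mathbb{E}[r\mid \hat r^A\ge q_A(\tau)]\Bigr]=-m(\tau),
\qquad
\frac{d\tau_c}{d\rho}=-\frac{1}{\Delta_P}\ \ \text{on } (p_0,p_0+\Delta_P),
\]
where $m(\tau):=\mathbb{E}[r\mid \hat r^A=q_A(\tau)]$.

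\emph{Regime $\rho\le p_0$.} Here $\tau_c=1$, so no one is flagged. Baseline demand $Np_0\ge M$ fills capacity, and $\objfluid(\tau_c)=M\,\mathbb{E}[r]$. Since $\tau_c=1\ge\tau^{*A}_{\mathrm{score}}$, we have $\tau^{*A}=\tau^{*A}_{\mathrm{score}}$, which also saturates capacity. Hence
\[
G_A=\rho N\bigl(\tilde R(\tau^{*A}_{\mathrm{score}})-\mathbb{E}[r]\bigr).
\]
This is linear in $\rho$ with nonnegative slope, because $\tilde R(\tau^{*A}_{\mathrm{score}})\ge\tilde R(1)=\mathbb{E}[r]$. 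So the gap increases. Dividing by $\objfluid(\tau^{*A})=\rho N\tilde R(\tau^{*A}_{\mathrm{score}})$ also gives the constant relative gap quoted in \Cref{thm:capacity_threshold_suboptimal}.

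\emph{Regime $p_0<\rho<p_0+\Delta_P(1-\tau^{*A}_{\mathrm{score}})$.} Here $\tau^{*A}_{\mathrm{score}}<\tau_c<1$ and $\tau^{*A}=\tau^{*A}_{\mathrm{score}}$, so $\objfluid(\tau^{*A})=\rho N\,\tilde R(\tau^{*A}_{\mathrm{score}})$. The capacity-matching threshold fills capacity exactly, so we write $\objfluid(\tau_c)=N\bigl[p_0\mathbb{E}[r]+\Delta_P(1-\tau_c)\mathbb{E}[r\mid\hat r^A\ge q_A(\tau_c)]\bigr]$. By the two identities above,
\[
\frac{\partial G_A}{\partial\rho}=N\bigl[\tilde R(\tau^{*A}_{\mathrm{score}})-m(\tau_c(\rho))\bigr].
\]
The key step is the average-equals-marginal property of the score-optimal threshold. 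Write $\tilde R(\tau)=\bigl(p_0\mathbb{E}[r]+\Delta_P T(\tau)\bigr)/\bigl(p_0+\Delta_P(1-\tau)\bigr)$ with $T'(\tau)=-m(\tau)$. A direct quotient-rule computation gives
\[
\tilde R'(\tau)=\frac{\Delta_P}{p_0+\Delta_P(1-\tau)}\bigl(\tilde R(\tau)-m(\tau)\bigr).
\]
So the first-order condition characterizing $\tau^{*A}_{\mathrm{score}}$ (via \Cref{lem:score_optimal_threshold_uniqueness}) is exactly $m(\tau^{*A}_{\mathrm{score}})=\tilde R(\tau^{*A}_{\mathrm{score}})$. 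Monotone calibration (\Cref{assump:monotone_calibration}) makes $m$ strictly increasing in $\tau$. Hence $\tau_c>\tau^{*A}_{\mathrm{score}}$ gives $m(\tau_c)>\tilde R(\tau^{*A}_{\mathrm{score}})$, so $\partial G_A/\partial\rho<0$ and the gap decreases on this interval. At the right endpoint the gap is $M[\tilde R(\tau^{*A}_{\mathrm{score}})-\tilde R(\tau_c)]=0$, which confirms continuity.

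\emph{Regime $\rho\ge p_0+\Delta_P(1-\tau^{*A}_{\mathrm{score}})$.} Here $\tau_c\le\tau^{*A}_{\mathrm{score}}$, so \Cref{thm:two_point_optimality} gives $\tau^{*A}=\tau_c$ and $G_A=0$ identically.

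The main obstacle is the middle regime. It hinges on deriving the first-order identity $m(\tau^{*A}_{\mathrm{score}})=\tilde R(\tau^{*A}_{\mathrm{score}})$ cleanly, including the boundary case $\tau^{*A}_{\mathrm{score}}=1$ (e.g.\ $p_0=0$). In that case the middle interval is empty and the statement reduces to the other regimes. A secondary issue is the kink at $\rho=p_0$, where $\tau_c$ leaves the value $1$. There we check that both one-sided expressions agree, with value $p_0N(\tilde R^*-\mathbb{E}[r])$, so the maximum of the gap occurs at $\rho=p_0$.
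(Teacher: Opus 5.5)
Your proposal is correct and follows essentially the same route as the paper. You use the same three-regime decomposition of $G_A$ via \Cref{thm:two_point_optimality}, and you reach the same middle-regime derivative $N\bigl[\tilde R(\tau^{*A}_{\mathrm{score}})-\mathbb{E}[r\mid \hat r^A=q_A(\tau_c)]\bigr]$ from the identity $\tilde R'(\tau)=\frac{\Delta_P}{p_0+\Delta_P(1-\tau)}\bigl(\tilde R(\tau)-\mathbb{E}[r\mid \hat r^A=q_A(\tau)]\bigr)$, and your explicit sign argument (the first-order condition makes average equal marginal at $\tau^{*A}_{\mathrm{score}}$, and monotone calibration makes the marginal strictly increasing in $\tau$) supplies the step the paper states without justification.
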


\begin{proof}{Proof.}
Fix $p_0$ and $\Delta_P$. We analyze how the gap changes with $\tfrac{M}{N}$.

The gap is
\[
G_A\!\left(\tau_c, \tfrac{M}{N}, p_0\right)
= \begin{cases}
M\left(\tilde{R}(\tau^{*A}_{\mathrm{score}}) - \tilde{R}(1)\right) & \text{if } \tfrac{M}{N} \le p_0 \\
M\left(\tilde{R}(\tau^{*A}_{\mathrm{score}}) - \tilde{R}(\tau_c)\right) & \text{if } p_0 < \tfrac{M}{N} < p_0 + \Delta_P(1 - \tau^{*A}_{\mathrm{score}}) \\
0 & \text{if } \tfrac{M}{N} \ge p_0 + \Delta_P(1 - \tau^{*A}_{\mathrm{score}})
\end{cases}
\]
where $\tau_c = 1$ when $\tfrac{M}{N} \le p_0$.

\textbf{Case 1:} When $\tfrac{M}{N} \le p_0$, $G_A\!\left(\tau_c, \tfrac{M}{N}, p_0\right)$ increases with $\tfrac{M}{N}$ since $\tau_c=1$ and $\tau^{*A}_{\mathrm{score}}$ both remain constant.

\textbf{Case 2:} When $p_0 < \tfrac{M}{N} < p_0 + \Delta_P(1 - \tau^{*A}_{\mathrm{score}})$, we have
\begin{align*}
\frac{\partial}{\partial \tfrac{M}{N}}G_A\!\left(\tau_c, \tfrac{M}{N}, p_0\right)
&= N\left(\tilde{R}(\tau^{*A}_{\mathrm{score}}) - \tilde{R}(\tau_c)\right) - M \frac{\partial}{\partial \tfrac{M}{N}} \tilde{R}(\tau_c) \\
&= N\left(\tilde{R}(\tau^{*A}_{\mathrm{score}}) - \tilde{R}(\tau_c)\right) - M \frac{\partial \tau_c}{\partial \tfrac{M}{N}} \bigg|_{\tau_c} \frac{\partial}{\partial \tau} \tilde{R}(\tau) \bigg|_{\tau_c}
\end{align*}
Since $\tau_c = 1 - \frac{1}{\Delta_P}\left(\tfrac{M}{N} - p_0\right)$, we have $\frac{\partial \tau_c}{\partial \tfrac{M}{N}} = -\frac{1}{\Delta_P}$. Thus,
\begin{align*}
\frac{\partial}{\partial \tfrac{M}{N}}G_A\!\left(\tau_c, \tfrac{M}{N}, p_0\right)
&= N\left[\tilde{R}(\tau^{*A}_{\mathrm{score}}) - \tilde{R}(\tau_c) + \frac{M}{N} \frac{1}{\Delta_P} \tilde{R}'(\tau_c)\right]
\end{align*}
Plugging in \eqref{eq:R_R_deriv_connection}, we arrive at
\[
\frac{\partial}{\partial \tfrac{M}{N}}G_A\!\left(\tau_c, \tfrac{M}{N}, p_0\right) = N\left[\tilde{R}(\tau^{*A}_{\mathrm{score}}) - \mathbb{E}[r \mid \hat r^A = q_A(\tau_c)]\right].
\]
In this case, since $\tau_c > \tau^{*A}_{\mathrm{score}}$, we have $\tilde{R}(\tau^{*A}_{\mathrm{score}}) - \mathbb{E}[r \mid \hat r^A = q_A(\tau_c)$.
Therefore, the gap decreases when $p_0 < \tfrac{M}{N} < p_0 + \Delta_P(1 - \tau^{*A}_{\mathrm{score}})$.
\end{proof}

\subsubsection{\texorpdfstring{\Cref{lem:gap_capacity_matching_p0}}{Lemma (Gap Capacity-matching Threshold under varying p0)}}\label{proof:lem:gap_capacity_matching_p0}

\begin{lemma}[Capacity-matching gap behavior under varying baseline demand]
\label{lem:gap_capacity_matching_p0}
Fix $\tfrac{M}{N}$ and $\Delta_P$. 
Let $\bar p_0$ be defined as in \Cref{def:bar_p0}, i.e.,
\[
\tau^{*A}_{\mathrm{score}}(\bar p_0,\Delta_P)
=
\tau_c\!\left(\tfrac{M}{N}\mid \bar p_0,\Delta_P\right).
\]
The suboptimality gap $G_A\!\left(\tau_c, \tfrac{M}{N}, p_0\right)=0$ for all $p_0 \le \bar p_0$. It is then strictly increasing for $p_0 \in (\bar p_0, \tfrac{M}{N})$. Once $\tau_c=1$ (i.e., $p_0 \ge \tfrac{M}{N}$), the gap is decreasing in $p_0$.
\end{lemma}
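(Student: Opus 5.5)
We fix $\rho=\tfrac{M}{N}$ and $\Delta_P$ and, exactly as in the proof of \Cref{lem:gap_capacity_matching_MN}, write the gap piecewise in $p_0$. We then determine its sign or monotonicity on each piece. Throughout we use \Cref{thm:two_point_optimality}, which gives $\tau^{*A}=\min\{\tau^{*A}_{\mathrm{score}},\tau_c\}$, and the facts from \Cref{lem:score_optimal_threshold_uniqueness}: $\tau^{*A}_{\mathrm{score}}(p_0,\Delta_P)$ is unique and strictly decreasing in $p_0$. By contrast, $\tau_c(\rho\mid p_0,\Delta_P)=1-(\rho-p_0)/\Delta_P$ is strictly increasing in $p_0$ until it saturates at $1$ when $p_0\ge\rho$. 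Consequently the crossing point $\bar p_0$ is well defined. For $p_0\le\bar p_0$ we have $\tau_c\le\tau^{*A}_{\mathrm{score}}$, and for $p_0>\bar p_0$ we have $\tau_c>\tau^{*A}_{\mathrm{score}}$. On both of these pieces $\tau_c<1$ requires $p_0<\rho$, and $\tau^{*A}_{\mathrm{score}}<1$ whenever $p_0>0$.

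\emph{Region $p_0\le\bar p_0$.} Here $\tau^{*A}=\tau_c$, so the gap vanishes identically. Nothing further is needed.

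\emph{Region $\bar p_0<p_0<\rho$.} Both thresholds saturate capacity, because $\tau^{*A}_{\mathrm{score}}<\tau_c$. Hence
\[
G_A(\tau_c,\rho,p_0)=M\bigl[\tilde R(\tau^{*A}_{\mathrm{score}}(p_0))-\tilde R(\tau_c(p_0))\bigr].
\]
Positivity on this interval is immediate from uniqueness of the maximizer of $\tilde R$ together with $\tau_c\neq\tau^{*A}_{\mathrm{score}}$. For monotonicity we differentiate in $p_0$.

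For the first term we apply the envelope argument already used in Case~3 of \Cref{lem:gap_prediction_based_threshold_behavioral}. This gives
\[
\frac{d}{dp_0}\tilde R(\tau^{*A}_{\mathrm{score}})=\frac{\Delta_P(1-\tau^{*A}_{\mathrm{score}})\bigl(\mathbb E[r]-\mathbb E[r\mid\hat r^A\ge q_A(\tau^{*A}_{\mathrm{score}})]\bigr)}{(p_0+\Delta_P(1-\tau^{*A}_{\mathrm{score}}))^2}.
\]
Using the first-order condition $\tilde R(\tau^{*A}_{\mathrm{score}})=\mathbb E[r\mid\hat r^A=q_A(\tau^{*A}_{\mathrm{score}})]$, this simplifies to $(\mathbb E[r]-\tilde R(\tau^{*A}_{\mathrm{score}}))/(p_0+\Delta_P(1-\tau^{*A}_{\mathrm{score}}))$.

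For the second term we use the computation from Case~4 of that same lemma, with $p_0+\Delta_P(1-\tau_c)=\rho$. This gives
\[
\frac{d}{dp_0}\tilde R(\tau_c)=\frac{\mathbb E[r]-\mathbb E[r\mid\hat r^A=q_A(\tau_c)]}{\rho}.
\]

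We must therefore show
\[
\frac{\tilde R^*-\mathbb E[r]}{D^*}<\frac{m(\tau_c)-\mathbb E[r]}{\rho},
\]
where $D^*=p_0+\Delta_P(1-\tau^{*A}_{\mathrm{score}})$, $\tilde R^*=\tilde R(\tau^{*A}_{\mathrm{score}})$, and $m(\tau)=\mathbb E[r\mid\hat r^A=q_A(\tau)]$. This comparison is the main obstacle.

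The plan is to combine two inequalities. First, $D^*>\rho$, since $\tau^{*A}_{\mathrm{score}}<\tau_c$. Second, $m(\tau_c)>\tilde R^*$, which we argue as follows. Monotone calibration (\Cref{assump:monotone_calibration}) gives $m(\tau_c)>m(\tau^{*A}_{\mathrm{score}})=\tilde R^*$, because $\tau_c>\tau^{*A}_{\mathrm{score}}$. Together these give $(\tilde R^*-\mathbb E[r])/D^*<(\tilde R^*-\mathbb E[r])/\rho<(m(\tau_c)-\mathbb E[r])/\rho$, using $\tilde R^*\ge\mathbb E[r]$, which holds because $\tilde R(1)=\mathbb E[r]$. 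The conclusion is $\partial G_A/\partial p_0=M(\cdot)>0$, i.e., the gap is strictly increasing on $(\bar p_0,\rho)$. Continuity of $G_A$ at $\bar p_0$ glues this region to the zero region.

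\emph{Region $p_0\ge\rho$.} Here $\tau_c=1$ and $\tilde R(1)=\mathbb E[r]$, so the gap is $G_A=M[\tilde R(\tau^{*A}_{\mathrm{score}}(p_0))-\mathbb E[r]]$. By the envelope formula above, its derivative is $M(\mathbb E[r]-\tilde R^*)/D^*\le0$. Hence the gap is decreasing in $p_0$. Strict decrease holds whenever $\tau^{*A}_{\mathrm{score}}<1$, which is the case for $p_0>0$.

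Finally, we check continuity at $p_0=\rho$, where the two formulas agree. This completes the three-regime description.
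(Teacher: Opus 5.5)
Your proof is correct. It shares the paper's skeleton: zero gap for $p_0\le\bar p_0$, then differentiate on $(\bar p_0,\tfrac{M}{N})$. The key derivative computation, however, is genuinely different.

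The paper applies the chain rule only through the two thresholds and obtains $-M\Delta_P^{-1}\tilde R'(\tau_c)>0$ from $\tilde R'(\tau^{*A}_{\mathrm{score}})=0$. This treats $\tilde R$ as depending on $p_0$ only through $\tau$. It therefore drops the direct term $\partial_{p_0}\tilde R|_{\tau^{*A}_{\mathrm{score}}}-\partial_{p_0}\tilde R|_{\tau_c}=\frac{\mathbb E[r]-\tilde R^*}{D^*}-\frac{\mathbb E[r]-\tilde R(\tau_c)}{\rho}$. That term is not signed in general: it tends to $-(\tilde R^*-\mathbb E[r])/D^*<0$ as $\tau_c\to 1$. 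So the paper's shortcut does not by itself establish the sign of the derivative.

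You instead compute the full total derivative $M\bigl[\frac{\mathbb E[r]-\tilde R^*}{D^*}-\frac{\mathbb E[r]-m(\tau_c)}{\rho}\bigr]$. You then sign it using $D^*>\rho$, $\tilde R^*\ge\mathbb E[r]$, the first-order identity $\tilde R^*=m(\tau^{*A}_{\mathrm{score}})$, and monotone calibration. This costs a few extra inequalities but gives a complete argument. You also prove the final claim (decrease once $\tau_c=1$) via the envelope formula $M(\mathbb E[r]-\tilde R^*)/D^*$; the paper's proof does not address that regime at all.

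Two minor remarks. First, reducing the envelope term to $(\mathbb E[r]-\tilde R(\tau))/D(\tau)$ is an algebraic identity valid at every $\tau$. The first-order condition is needed only later, for $m(\tau^{*A}_{\mathrm{score}})=\tilde R^*$. Second, strict decrease of $\tau^{*A}_{\mathrm{score}}$ in $p_0$ is not literally part of the uniqueness lemma. It does follow, though: the function $H$ from that proof is strictly decreasing in $\tau$ and satisfies $\partial_{p_0}H=-(m(\tau)-\mathbb E[r])/\Delta_P<0$ at its root, since there $m=\tilde R^*>\mathbb E[r]$ for $p_0>0$.
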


\begin{proof}{Proof.}
If $\tau_c\!\left(\tfrac{M}{N}\mid p_0,\Delta_P\right)\le \tau^{*A}_{\mathrm{score}}(p_0,\Delta_P)$, then
$\tau^{*A}=\tau_c$ and the gap is zero.
Thus we focus on the region where
\[
\tau_c\!\left(\tfrac{M}{N}\mid p_0,\Delta_P\right)>\tau^{*A}_{\mathrm{score}}(p_0,\Delta_P).
\]
Differentiate w.r.t.\ $p_0$ inside this region and use the chain rule:
\begin{align*}
& \frac{d}{dp_0}\left[\,\tilde{R}\!\left(\tau^{*A}_{\mathrm{score}}(p_0,\Delta_P)\right)
- \tilde{R}\!\left(\tau_c\!\left(\tfrac{M}{N}\,\middle|\, p_0,\Delta_P\right)\right)\!\right]\\
= & M\!\left[
\frac{d\tau^{*A}_{\mathrm{score}}}{dp_0}\,\tilde{R}'(\tau)\Big|_{\tau=\tau^{*A}_{\mathrm{score}}(p_0,\Delta_P)}
-
\frac{d\tau_c}{dp_0}\,\tilde{R}'(\tau)\Big|_{\tau=\tau_c(\frac{M}{N}\mid p_0,\Delta_P)}
\right].
\end{align*}
By optimality of $\tau^{*A}_{\mathrm{score}}$, $\tilde{R}'(\tau^{*A}_{\mathrm{score}})=0$. Moreover,
\[
\frac{d\tau_c}{dp_0}=\frac{1}{\Delta_P}>0,
\qquad
\text{and}\qquad
\tilde{R}'(\tau_c)<0,
\]
since $\tau_c>\tau^{*A}_{\mathrm{score}}$.
Therefore,
\[
\frac{d}{dp_0}\left[\,\tilde{R}\!\left(\tau^{*A}_{\mathrm{score}}(p_0,\Delta_P)\right)
- \tilde{R}\!\left(\tau_c\!\left(\tfrac{M}{N}\,\middle|\, p_0,\Delta_P\right)\right)\!\right]
= -\,M\,\frac{d\tau_c}{dp_0}\,\tilde{R}'(\tau_c)
= -\,M\,\Big(\tfrac{1}{\Delta_P}\Big)\,\tilde{R}'(\tau_c)
\;>\;0,
\]
which concludes the gap is strictly increasing for $p_0 \in (\bar p_0, \tfrac{M}{N})$.
\end{proof}

\subsubsection{\texorpdfstring{Proof of \Cref{thm:capacity_threshold_suboptimal}}{Proof of Suboptimality of capacity-matching threshold}}\label{proof:thm:capacity_threshold_suboptimal}

\begin{proof}{Proof.}
The theorem follows directly from \Cref{lem:gap_capacity_matching_p0,lem:gap_capacity_matching_MN}, which characterize the gap behavior under varying $p_0$ and $\tfrac{M}{N}$ respectively. 
\end{proof}

\subsection{Optimal Thresholds}\label{sec:proofs}
\subsubsection{\texorpdfstring{Proof of \Cref{lem:score_optimal_threshold_uniqueness}}{Proof of Lemma (Score-Optimal Threshold Uniqueness)}}
\label{proof:lem:score_optimal_threshold_uniqueness}
\begin{lemma}
\label{lem:score_optimal_threshold_uniqueness}
Under \Cref{assump:joint_density,assump:marginal_positive,assump:conditional_smoothness,assump:monotone_calibration}, the score-optimal threshold $\tau^{*A}_{\mathrm{score}} (p_0, \Delta_P)$ is uniquely defined and satisfies the first-order condition.
\end{lemma}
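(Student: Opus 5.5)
Write $m(\tau)=\mathbb{E}[r\mid \hat r^A = q_A(\tau)]$, $V(\tau)=(1-\tau)\,\mathbb{E}[r\mid \hat r^A\ge q_A(\tau)]$ and $D(\tau)=p_0+\Delta_P(1-\tau)$, so that $\tilde R(\tau)=(p_0\mathbb{E}[r]+\Delta_P V(\tau))/D(\tau)$. The plan is a first-order-condition argument. By \Cref{assump:joint_density,assump:marginal_positive,assump:conditional_smoothness}, $V$ is continuously differentiable with $V'(\tau)=-m(\tau)$, since differentiating $\int_{q_A(\tau)}^{b}\mathbb{E}[r\mid \hat r^A=s]f_{\hat r^A}(s)\,ds$ and using $q_A'(\tau)=1/f_{\hat r^A}(q_A(\tau))$ gives exactly this. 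A direct quotient-rule computation then yields
\begin{equation*}
\tilde R'(\tau)=\frac{\Delta_P}{D(\tau)}\bigl[\tilde R(\tau)-m(\tau)\bigr],
\end{equation*}
which is equivalent to the expression for $\tilde R'$ used in the proof of \Cref{lem:gap_prediction_based_threshold_behavioral}. The sign of $\tilde R'$ is therefore the sign of $\phi(\tau):=\tilde R(\tau)-m(\tau)$. Existence of a maximizer follows from continuity of $\tilde R$ on the compact set $[0,1]$. The value at $\tau=1$ is defined by continuity: $\tilde R(1)=\mathbb{E}[r]$ when $p_0>0$, and $\tilde R\to\lim_{\tau\to1} m(\tau)$ when $p_0=0$.

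The key step is a single-crossing property of $\phi$. Multiply through by $D$ and set $\psi(\tau)=D(\tau)\phi(\tau)=p_0\mathbb{E}[r]+\Delta_P V(\tau)-D(\tau)m(\tau)$. Differentiating in the piecewise-$C^1$ sense gives
\begin{equation*}
\psi'(\tau)=-\Delta_P m(\tau)+\Delta_P m(\tau)-D(\tau)m'(\tau)=-D(\tau)\,m'(\tau).
\end{equation*}
By \Cref{assump:monotone_calibration}, $m$ is strictly increasing in $\tau$, because $q_A$ is strictly increasing and $\mathbb{E}[r\mid\hat r^A=q]$ is strictly increasing in $q$. Since $D>0$, $\psi$ is strictly decreasing. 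To avoid needing $m'>0$ pointwise, I will write $\psi(\tau_2)-\psi(\tau_1)$ as an integral and use strict monotonicity of $m$ on intervals.

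Next, check the endpoints. At $\tau=0$ we have $V(0)=\mathbb{E}[r]$, so $\psi(0)=(p_0+\Delta_P)(\mathbb{E}[r]-m(0))>0$, because $m(0)$ is the infimum of the conditional means and lies strictly below the average. At $\tau=1$ we have $\psi(1)=p_0(\mathbb{E}[r]-m(1))<0$ when $p_0>0$. Hence $\psi$, and with it $\tilde R'$, has exactly one zero $\tau^*\in(0,1)$, with $\tilde R$ strictly increasing before $\tau^*$ and strictly decreasing after it. So $\tau^*$ is the unique maximizer and satisfies the first-order condition $\tilde R(\tau^*)=m(\tau^*)$.

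If $p_0=0$, then $\psi(1)=0$ and $\psi>0$ on $[0,1)$. In that case $\tilde R$ is increasing and the unique maximizer is $\tau^*=1$, consistent with the paper's remark.

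I expect the main obstacle to be the endpoint and regularity issues rather than the algebra. When the support $(a,b)$ is unbounded, $m(0)$ and $m(1)$ may be infinite, so the endpoint signs must be read as limits. In that case I would argue the signs of $\psi$ near $0$ and $1$ directly: for example, for $\tau$ near $1$, $V(\tau)\to0$ and $D\to p_0$, so $\psi\approx p_0(\mathbb{E}[r]-m(\tau))<0$. It also has to be justified that $V'=-m$ holds everywhere despite $f$ being only piecewise $C^1$, which I will do using the dominated-differentiation clause of \Cref{assump:conditional_smoothness}.
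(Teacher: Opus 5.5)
Your proposal is correct and follows essentially the same route as the paper. Your $\psi$ equals $\Delta_P H$ for the paper's function $H$, which has the same sign as $\tilde R'$ and is shown to be strictly decreasing (its derivative is proportional to $-D\,m'$) with $H(0)>0>H(1)$; your value $\psi(0)=(p_0+\Delta_P)(\mathbb{E}[r]-m(0))$ is the correct one (the paper's factor $1-p_0/\Delta_P$ should read $1+p_0/\Delta_P$), and your separate treatment of $p_0=0$, where $\psi(1)=0$ and the maximizer is $\tau=1$, covers a case the paper's proof overlooks when it claims $H(1)<0$.
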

\begin{proof}{Proof.}
By \Cref{assump:joint_density,assump:conditional_smoothness}, the conditional expectations $\mathbb{E}(r \mid \hat r^A = q_A(\tau))$ and $\mathbb{E}(r \mid \hat r^A > q_A(\tau))$ exist and are continuously differentiable in $\tau$. The derivative with respect to $\tau$ is computed via the chain rule:
\begin{equation}\label{eq:r_derivative}
\begin{aligned}
\frac{d}{d\tau} \tilde{R}\left(\tau \mid A, p_0, \Delta_P\right)
&= \frac{\Delta_P}{\bigl[p_0 + \Delta_P(1-\tau)\bigr]^2}
\Bigl[
p_0 \bigl(\mathbb{E}(r) - \mathbb{E}(r \mid \hat r^A = q_A(\tau))\bigr) \\
&\qquad\quad
+ \Delta_P (1-\tau)
\bigl(\mathbb{E}(r \mid \hat r^A > q_A(\tau)) - \mathbb{E}(r \mid \hat r^A = q_A(\tau))\bigr)
\Bigr].
\end{aligned}
\end{equation}

Denote
$$
H(\tau) = (1-\tau) \bigl(\mathbb{E}(r \mid \hat r^A > q_A(\tau))-\mathbb{E}(r \mid \hat r^A = q_A(\tau))\bigr) - \frac{p_0}{\Delta_P} \bigl(\mathbb{E}(r \mid \hat r^A = q_A(\tau)) - \mathbb{E}(r) \bigr)
$$
and therefore we have
\[
H(\tau^{*A}_{\mathrm{score}}) = 0, \quad \text{and} \quad \text{sgn} \Big(\frac{d}{d\tau}\tilde{R}\left(\tau | A, p_0, \Delta_P\right)\Big) = \text{sgn}(H(\tau)).
\]

Differentiating $H(\tau)$, we have
$$
\frac{d}{d q_A(\tau)} H(\tau) = \frac{d}{dq_A(\tau)} \mathbb{E}(r|\hat r^A=q_A(\tau)) \Big(-(1-\tau) - \frac{p_0}{\Delta_P} \Big).
$$

By \Cref{assump:monotone_calibration}, higher predicted scores correspond to higher expected true scores:
\[
\frac{d}{dq_A(\tau)} \mathbb{E}(r|\hat r^A=q_A(\tau)) > 0,
\]
and therefore
\[
\frac{d}{d q_A(\tau)} H(\tau) < 0 \quad \forall \tau \in [0,1].
\]

Also note that by \Cref{assump:monotone_calibration},
\[
H(\tau) \big|_{\tau=0} = \bigl(\mathbb{E}(r)-\mathbb{E}(r \mid \hat r^A = q_A(0))\bigr) \bigl(1- \frac{p_0}{\Delta_P}  \bigr) >0,
\]
and,
\[
H(\tau) \big|_{\tau=1} = - \frac{p_0}{\Delta_P} \bigl(\mathbb{E}(r \mid \hat r^A = q_A(1)) - \mathbb{E}(r) \bigr) <0.
\]

By \Cref{assump:conditional_smoothness}, the conditional expectations are continuous in $\tau$, which implies $H(\tau)$ is continuous. Since $H(\tau)$ is continuous on $[0,1]$, strictly decreasing, with $H(0) > 0$ and $H(1) < 0$, the intermediate value theorem guarantees the existence of a unique $\tau^{*A}_{\mathrm{score}} \in (0,1)$ such that $H(\tau^{*A}_{\mathrm{score}})=0$.

\end{proof}

\subsubsection{\texorpdfstring{Proof of \Cref{thm:two_point_optimality}}{Proof of Theorem (Two-Point Optimality)}}\label{proof:thm:two_point_optimality}
\begin{proof}{Proof.}
We analyze the optimal threshold in three regimes based on the capacity ratio $\frac{M}{N}$. For notational convenience, we drop the conditioning on $A$, $p_0$, $\Delta_P$, $M$ and $N$. 

\paragraph{Case 1: Large capacity ($\frac{M}{N} \geq p_0 + \Delta_P$).}
In this setting, capacity is underutilized even when all individuals are nudged and the capacity matching threshold is $\tau_c = 0$ by \Cref{eq:capacity_matching_threshold}. Therefore, capacity is never binding, and the expected number of served requests is
\[
\nservedfluid(\tau) = N \left( p_0 + \Delta_P (1-\tau)\right),
\]
which is linear in $\tau$ with derivative $-N \Delta_P$.

The derivative of the objective $\objfluid(\tau | A, M, N, p_0, \Delta_P)$ is:
\begin{align*}
\frac{d}{d\tau} \objfluid(\tau)
&= \frac{d}{d\tau}\left[\nservedfluid(\tau) \cdot \tilde{R}(\tau)\right]\\
&= \nservedfluid'(\tau) \tilde{R}(\tau) + \nservedfluid(\tau) \tilde{R}'(\tau).
\end{align*}

By $\tilde{R}'(\tau)$ calculated in \Cref{eq:r_derivative}, we have
\begin{align}\label{eq:R_R_deriv_connection}
    \tilde{R}'(\tau) & = \frac{\Delta_P }{p_0 + \Delta_P(1-\tau)}\left[\frac{ p_0 \mathbb{E}(r) + \Delta_P (1-\tau) \mathbb{E}(r \mid \hat r^A > q_A(\tau))}{p_0 + \Delta_P(1-\tau)}-\mathbb{E}(r \mid \hat r^A = q_A(\tau))\right] \\
    & = \frac{\Delta_P }{p_0 + \Delta_P(1-\tau)}\left[\tilde{R}(\tau)-\mathbb{E}(r \mid \hat r^A = q_A(\tau))\right]
\end{align}

Substituting $\nservedfluid'(\tau) = -N \Delta_P$ and the $\tilde{R}'(\tau)=\frac{\Delta_P }{p_0 + \Delta_P(1-\tau)}\left[\tilde{R}(\tau)-\mathbb{E}(r \mid \hat r^A = q_A(\tau))\right]$ calculated above, we have:
\begin{align*}
\frac{d}{d\tau} \objfluid(\tau) =& -N \Delta_P \tilde{R}(\tau) + N\left( p_0 + \Delta_P (1-\tau)\right) \tilde{R}'(\tau) \\
=& - N \Delta_P \mathbb{E}(r \mid \hat r^A = q_A(\tau)) \leq 0.
\end{align*}

Therefore, $\objfluid(\tau)$ is decreasing in $\tau$, implying $\tau^{*A} = 0$ when $\tfrac{M}{N} \geq p_0+\Delta_P$. This corresponds to $\tau^{*A} = \min\{\tau^{*A}_{\mathrm{score}}, \tau_c\}$ since $\tau_c = 0$ in this regime.

\paragraph{Case 2: Small capacity ($\frac{M}{N} \leq p_0$).} In this setting, capacity is fully utilized, even when no individual is nudged and $\tau_c = 1$ by \Cref{eq:capacity_matching_threshold}. 
The expected number of served requests is constant:
\[
\nservedfluid(\tau) = M \quad \forall \tau,
\]
with derivative $\nservedfluid'(\tau) = 0$.

Therefore, the derivative of the objective becomes:
\[
\frac{d}{d\tau} \objfluid(\tau) = \nservedfluid'(\tau) \tilde{R}(\tau) + \nservedfluid(\tau) \tilde{R}'(\tau) = M \tilde{R}'(\tau).
\]

Since the objective is proportional to $\tilde{R}'(\tau)$, the optimal threshold satisfies $\tilde{R}'(\tau^{*A}) = 0$, which by \Cref{lem:score_optimal_threshold_uniqueness} gives $\tau^{*A} = \tau^{*A}_{\mathrm{score}}(p_0, \Delta_P)$. This corresponds to $\tau^{*A} = \min\{\tau^{*A}_{\mathrm{score}}, \tau_c\}$ since $\tau_c = 1 > \tau^{*A}_{\mathrm{score}}$ in this regime.

\paragraph{Case 3: Medium capacity ($p_0 < \frac{M}{N} < p_0+\Delta_P$).}
In this setting, there is excess capacity if no individual is flagged, but there is insufficient capacity to service all requests if all individuals are flagged. The capacity matching threshold is $0 < \tau_c < 1$. We approximate the expected number of served requests as a piecewise function:
\[
\nservedfluid(\tau) =
\begin{cases}
M & \text{if } \tau \leq \tau_c,\\
N \left[p_0 + \Delta_P(1-\tau)\right] & \text{if } \tau > \tau_c.
\end{cases}
\]
Note that $\nservedfluid(\tau)$ is continuous at $\tau_c$ by the definition of $\tau_c$ in \Cref{eq:capacity_matching_threshold}, but is not differentiable at $\tau_c$ since the left and right derivatives differ. For $\tau \neq \tau_c$, the derivative is:
\[
\nservedfluid'(\tau) =
\begin{cases}
0 & \text{if } \tau < \tau_c,\\
-N \Delta_P & \text{if } \tau > \tau_c.
\end{cases}
\]

Therefore, for $\tau \neq \tau_c$, the derivative of the objective is:
\[
\objfluid'(\tau) =
\begin{cases}
M \tilde{R}'(\tau) & \text{if } \tau < \tau_c,\\
- N \Delta_P \mathbb{E}(r \mid \hat r^A = q_A(\tau)) & \text{if } \tau > \tau_c.
\end{cases}
\]

By \Cref{lem:score_optimal_threshold_uniqueness}, $\tilde{R}'(\tau)$ is strictly decreasing and equals zero at $\tau^{*A}_{\mathrm{score}}(p_0, \Delta_P)$. We consider two subcases:
\begin{enumerate}
  \item If $\tau_c \leq \tau^{*A}_{\mathrm{score}}$, then $\tilde{R}'(\tau) > 0$ for all $\tau < \tau_c$.
  For $\tau < \tau_c$, we have $\objfluid'(\tau) = M \tilde{R}'(\tau) > 0$, so $\objfluid$ is strictly increasing on $[0, \tau_c)$. For $\tau > \tau_c$, we have $\objfluid'(\tau) < 0$, so $\objfluid$ is strictly decreasing on $(\tau_c, 1]$. Therefore, $\objfluid$ attains its maximum at $\tau^{*A} = \tau_c = \min\{\tau^{*A}_{\mathrm{score}}, \tau_c\}$.
  \item If $\tau_c > \tau^{*A}_{\mathrm{score}}$, then $\tilde{R}'(\tau) = 0$ at $\tau = \tau^{*A}_{\mathrm{score}} < \tau_c$.
  For $\tau < \tau^{*A}_{\mathrm{score}}$, we have $\objfluid'(\tau) = M \tilde{R}'(\tau) > 0$, and for $\tau^{*A}_{\mathrm{score}} < \tau < \tau_c$, we have $\objfluid'(\tau) = M \tilde{R}'(\tau) < 0$. For $\tau > \tau_c$, we also have $\objfluid'(\tau) = - N \Delta_P \mathbb{E}(r \mid \hat r^A = q_A(\tau)) <0.$
  Therefore, $\objfluid$ attains its maximum at $\tau^{*A} = \tau^{*A}_{\mathrm{score}} = \min\{\tau^{*A}_{\mathrm{score}}, \tau_c\}$.
\end{enumerate}
Combining all three cases, we conclude that for capacity ratios $\frac{M}{N} \in [0, p_0 + \Delta_P]$:
\[
\tau^{*A}\left(\frac{M}{N}\mid  p_0,\, \Delta_P\right) = \min\left\{\,\tau^{*A}_{\mathrm{score}}(p_0,\, \Delta_P),\, \tau_c\left(\frac{M}{N}\mid  p_0,\, \Delta_P\right)\right\},
\]
which completes the proof.
\end{proof}

\subsubsection{\texorpdfstring{Proof of \Cref{cor:finite_two_point_optimality}}{Proof of Corollary (Two-point optimality in finite systems)}}\label{proof:cor:finite_two_point_optimality}
\begin{proof}{Proof.}
For simplicity, we omit the dependence on $(A,M, N, p_0, \Delta_P)$ in the notation. Recall in \Cref{prop:fluid_limit}, we have
\[
\obj\left(\tau\right) \leq \objfluid\left(\tau\right),\quad \text{and} \quad
\lim_{N \to \infty} | \objfluid\left(\tau\right) - \obj\left(\tau\right) | = 0.
\]
Suppose $\tilde{\tau}^{*A}\left(\tfrac{M}{N}\right) = \argmax _{\tau \in [0,1]} \obj\left(\tau \right)$ is the optimal threshold in the finite system. Then,
\begin{align*}
  \obj\left(\tilde{\tau}^{*A}\left(\tfrac{M}{N}\right) \right) - \obj\left(\tau^{*A}\left(\tfrac{M}{N}\right) \right)  = & \left[\obj\left(\tilde{\tau}^{*A}\left(\tfrac{M}{N}\right) \right) - \objfluid\left(\tilde{\tau}^{*A}\left(\tfrac{M}{N}\right) \right)\right] \\
  +& \left[\objfluid\left(\tilde{\tau}^{*A}\left(\tfrac{M}{N}\right) \right) - \objfluid\left(\tau^{*A}\left(\tfrac{M}{N}\right) \right)\right]+ \left[\objfluid\left(\tau^{*A}\left(\tfrac{M}{N}\right) \right) - \obj\left(\tau^{*A}\left(\tfrac{M}{N}\right) \right)\right].
\end{align*}
Note that the second term is non-positive since $\tau^{*A}\left(\tfrac{M}{N}\right)$ maximizes the fluid objective. Therefore, we have
\begin{align*}
  \lim_{N \to \infty} \left| \obj\left(\tilde{\tau}^{*A}\left(\tfrac{M}{N}\right) \right) - \obj\left(\tau^{*A}\left(\tfrac{M}{N}\right) \right)\right|  \leq & \lim_{N \to \infty}  \left|\obj\left(\tilde{\tau}^{*A}\left(\tfrac{M}{N}\right) \right) - \objfluid\left(\tilde{\tau}^{*A}\left(\tfrac{M}{N}\right) \right)\right| \\
  & + \lim_{N \to \infty}  \left|\objfluid\left(\tau^{*A}\left(\tfrac{M}{N}\right) \right) - \obj\left(\tau^{*A}\left(\tfrac{M}{N}\right) \right)\right|.
\end{align*}
Again by \Cref{prop:fluid_limit}, we have
\[\lim_{N \to \infty} \left| \objfluid\left(\tau^{*A}\left(\tfrac{M}{N}\right) \right) - \obj\left(\tau^{*A}\left(\tfrac{M}{N}\right) \right) \right| = 0 \quad \text{and} \quad \lim_{N \to \infty} \left| \objfluid\left(\tilde{\tau}^{*A}\left(\tfrac{M}{N}\right) \right) - \obj\left(\tilde{\tau}^{*A}\left(\tfrac{M}{N}\right) \right) \right| = 0.\]
Combining the above concludes the proof.
\end{proof}

%% file: section/appendix_case_study.tex
\section{Additional Case Study Details}\label{app:case_study}
\subsection{XGBoost Model Training}\label{app:model}
\paragraph{Data Preparation}
Our patient pool consists of all patients with Epic sepsis scores available. We define sepsis onset using the \texttt{Sepsis\_3\_Start\_Time} and \texttt{IS\_CDC} variables.  \texttt{Sepsis\_3\_Start\_Time} identifies the time when patients meeting the CDC definition of sepsis based on the timing of both organ dysfunction and suspicion of infection (blood culture or antibiotic order). We count a patient as a sepsis case if \texttt{IS\_CDC=1}. For patients with \texttt{IS\_CDC=0} or missing values, we count them as non-sepsis controls and set \texttt{Sepsis\_3\_Start\_Time = 1900-01-01T00:00:00}.

We construct covariates from both dynamic and static data sources. Dynamic covariates include vital signs, medication administration records, and laboratory analytes. We use the vital signs record time as the primary observation time; for each observation, we retain the most recent medication and analyte values. We augment these with time-elapsed features indicating hours since the last recorded value for each variable. If no prior medication or laboratory record exists, the corresponding features are left as missing (approximately 39\% of observations have prior medication records; all have prior analyte records). Static covariates include patient demographics (race, sex), encounter characteristics (age at encounter, admission source, admission type), and comorbidities. Categorical variables are converted to integer encodings. 
See \Cref{tab:covariates} for a summary of covariates used in the training of XGBoost model.

\begin{table}[t]
\centering
\small
\caption{Summary of covariates used in the XGBoost model.}
\label{tab:covariates}
\begin{tabular}{p{4cm} p{9cm}}
\toprule
\textbf{Variable Type} & \textbf{Variables} \\
\midrule
Demographics & Age, Gender, Race \\
\addlinespace
Encounter Information & Admission Source, Admission Type \\
\addlinespace
Comorbidities & Congestive Heart Failure, Valvular, Pulmonary circulation disorders, Peripheral vascular disease, Hypertension, Paralysis, Other neurological disorders, Pulmonary circulation disorders, Diabetes without chronic complications, Diabetes with chronic complications, Hypothyroidism, Renal failure, Liver disease, Chronic peptic ulcer disease, HIV, Lymphoma, Metastatic cancer, Tumor, Rheumatoid arthritis, Coagulation deficiency, Obesity, Weight loss, Blood loss, Anemia, Alcohol abuse, Drug abuse, Psychoses, Depression, Leukemia \\
\addlinespace
Laboratory Values & Albumin, ALT, Ammonia, AST, Bandemia, Bicarbonate, Bilirubin, BUN, CK-MB, Creatine Kinase, CRP, D-Dimer, ESR, Fibrinogen, Glucose, Hematocrit, INR, Lactate, LDH, Magnesium, PCO2, pH, Platelets, PO2, Potassium, Serum Creatinine, Sodium, Troponin, WBC, pH \\
\addlinespace
Vital Signs & BP, Pulse, Pulse Oximetry, Respiratory Rate, Temperature, Weight \\
\addlinespace
Medications & Antibiotics, Benzodiazepines, Chemotherapy, Heparins, Immunosuppressants, Insulins, IV Fluids, Opioids, Steroids, Vasopressors \\
\addlinespace
Time-Elapsed Features & Time-elapsed features indicating hours since the last recorded value for each variable. \\
\bottomrule
\end{tabular}
\end{table}

After data preparation, our data comprises 1,810,242 observations corresponding to 85,160 encounters and 70,067 patients.

\paragraph{Target Definition}
We define the prediction target as sepsis onset within a specified horizon. For each observation, we compute \texttt{hours\_to\_sepsis} as the difference between \texttt{Sepsis\_3\_Start\_Time} and the observation time. We then define \texttt{IS\_Sepsis\_8H} $= 1$ if sepsis onset occurs within the next 8 hours (i.e., $0 < \texttt{hours\_to\_sepsis} < 8$), and 0 otherwise. The 8-hour horizon yields a positive class prevalence of approximately 0.7\%.

\paragraph{Model Specification}
We split the data into training and test sets by patient ID to prevent data leakage across observations from the same patient, using an 80/20 train-test split. We train an XGBoost classifier to predict \texttt{IS\_Sepsis\_8H} using the covariates described above. The model is configured with the following hyperparameters: 300 estimators, maximum tree depth of 10, learning rate of 0.05, row subsampling rate of 0.8, and column subsampling rate of 0.8. We use the binary logistic objective and AUC as the evaluation metric.